\def\RM#1{{\textcolor{red}{\textbf{RM:} #1}}}
\def\MZ#1{{\textcolor{blue}{\textbf{MZ:} #1}}}
\def\UM#1{{\textcolor{green!80!black}{\textbf{UM:} #1}}}
\def\MO#1{{\textcolor{orange}{\textbf{MO:} #1}}}
\def\BA#1{{\textcolor{violet}{\textbf{BA: } #1}}}
\renewcommand{\BA}[1]{}
\renewcommand{\RM}[1]{}
\renewcommand{\MZ}[1]{}
\renewcommand{\UM}[1]{}
\renewcommand{\MO}[1]{}
\def\set#1{{\{ #1 \}}}
\def\tuple#1{{\langle #1\rangle}}
\def\nats{\mathbb{N}}
\newcommand{\trust}{\textsc{Trust}\xspace}
\newcommand{\testor}{\textsc{TeStor}\xspace}
\newcommand{\dpor}{\textsc{DPOR}\xspace}
\newcommand{\jmc}{\textsc{JMC}\xspace}
\newcommand{\timeout}{$\VarClock$}
\newcommand{\oom}{\textsf{OOM}}
\newcommand{\opw}{\mathsf{W}}
\newcommand{\opr}{\mathsf{R}}
\newcommand{\op}{\mathsf{op}}
\newcommand{\opaq}{\mathsf{Aq}}
\newcommand{\oprl}{\mathsf{Rl}}
\newcommand{\Val}{\mathsf{Val}}
\newcommand{\Idx}{\mathsf{Idx}}
\newcommand{\Loc}{\mathsf{Loc}}
\newcommand{\Lab}{\mathsf{Lab}}
\newcommand{\Tid}{\mathsf{Tid}}
\newcommand{\Event}{\mathsf{Event}}
\newcommand{\val}{\mathsf{val}}
\newcommand{\idx}{\mathsf{idx}}
\newcommand{\loc}{\mathsf{loc}}
\newcommand{\reg}{\mathsf{reg}}
\newcommand{\lab}{\mathsf{lab}}
\newcommand{\tid}{\mathsf{tid}}
\newcommand{\floc}{\mathsf{loc}}
\newcommand{\flab}{\mathsf{lab}}
\newcommand{\fval}{\mathsf{val}}
\newcommand{\ftid}{\mathsf{tid}}
\newcommand{\egraph}{\mathsf{EG}}
\newcommand{\scg}{\mathsf{SCG}}
\newcommand{\lts}{\mathsf{LTS}}
\newcommand{\init}{\mathsf{init}}
\newcommand{\nextop}{\mathsf{next}}
\newcommand{\pickop}{\mathsf{pick}}
\newcommand{\visit}{\mathsf{visit}}
\newcommand{\exop}{\mathsf{exop}}
\newcommand{\irref}{\mathsf{irreflexive}}
\newcommand{\ismax}{\mathsf{isMax}}
\newcommand{\setmo}{\mathsf{setMo}}
\newcommand{\rf}[1]{{\textcolor{green!60!black}{\mathsf{rf}}_{#1}}}
\newcommand{\fr}[1]{{\textcolor{Aquamarine}{\mathsf{fr}}_{#1}}}
\newcommand{\mo}[1]{{\textcolor{orange}{\mathsf{mo}}_{#1}}}
\def\orangemo{\textcolor{orange}{\mathsf{mo}}}
\newcommand{\momax}[1]{{\orangemo^{\max}_{#1}}}
\newcommand{\po}[1]{{\textcolor{darkgray}{\mathsf{po}}_{#1}}}
\newcommand{\porf}[1]{\textcolor{darkgray}{\mathsf{po}\textcolor{green!60!black}{\mathsf{rf}}}_{#1}}
\newcommand{\seqcon}[1]{{\textcolor{blue}{\mathsf{sc}}_{#1}}}
\newcommand{\bwd}{{\mathsf{Bwd}}}
\newcommand{\inord}{\textcolor{purple}\leq}
\newtheorem{problem}{Problem}
\newcommand{\sharpP}{\#\mathsf{P}}
\newcommand{\mstep}[1][]{\ifthenelse{\equal{#1}{}}{\rightsquigarrow}{\overset{#1}{\rightsquigarrow}}}
\newcommand{\fwdtikz}{%
  \tikz[baseline=(char.base)]{
    \node[shape=circle, draw, fill=lime!50, inner sep=0.15pt] (char) {$\looparrowright$};
  }
}
\newcommand{\bwdtikz}{%
  \tikz[baseline=(char.base)]{
    \node[shape=circle, draw, fill=red!30, inner sep=0.15pt] (char) {$\looparrowleft$};
  }
}
\begin{document}

\title{State Space Estimation for DPOR-Based Model Checkers}

\author{A. R. Balasubramanian}
\orcid{0000-0002-7258-5445}
\affiliation{%
  \institution{Max Planck Institute for Software Systems (MPI-SWS)}
  \city{Kaiserslautern}
  \country{Germany}
}
\email{bayikudi@mpi-sws.org}

\author{Mohammad Hossein Khoshechin Jorshari}
\orcid{0009-0005-7753-9017}
\affiliation{%
  \institution{Max Planck Institute for Software Systems (MPI-SWS)}
  \city{Kaiserslautern}
  \country{Germany}
}
\email{mkhoshechin@mpi-sws.org}

\author{Rupak Majumdar}
\orcid{0000-0003-2136-0542}
\affiliation{%
  \institution{Max Planck Institute for Software Systems (MPI-SWS)}
  \city{Kaiserslautern}
  \country{Germany}
}
\email{rupak@mpi-sws.org}

\author{Umang Mathur}
\orcid{0000-0002-7610-0660}
\affiliation{%
  \institution{National University of Singapore}
  \city{Singapore}
  \country{Singapore}
}
\email{umathur@nus.edu.sg}

\author{Minjian Zhang}
\orcid{0000-0002-5017-2228}
\affiliation{%
  \institution{University of Illinois Urbana-Champaign}
  \city{Champaign-Urbana}
  \country{USA}
}
\email{minjian2@illinois.edu}

\renewcommand{\shortauthors}{A. R. Balasubramanian, M. H. Khoshechin Jorshari, R. Majumdar, U. Mathur, and M. Zhang}


\begin{abstract}
We study the estimation problem for concurrent programs: given a bounded program $P$, estimate the number of maximal Mazurkiewicz trace–equivalence classes induced by its interleavings. 
This quantity informs two practical questions for enumeration-based model checking: how long a model checking run is likely to take, 
and what fraction of the search space has been covered so far. 
We first show the counting problem is $\#\textsf{P}$-hard even for restricted programs and, unless $\textsf{P}=\textsf{NP}$, inapproximable within any subexponential factor in polynomial time.
Thus, we cannot expect efficient exact or randomized approximation algorithms. 
We give a Monte Carlo approach to find a polynomial-time unbiased estimator: 
we convert a stateless optimal \dpor algorithm into an unbiased estimator by viewing its exploration as a bounded-depth, bounded-width, tree 
whose leaves are the maximal Mazurkiewicz traces. 
A classical estimator by Knuth, when run on this tree, gives an unbiased estimation.
In order to control the variance of the estimation, we apply stochastic enumeration by maintaining a small population of partial paths per depth whose evolution is coupled. 
We have implemented our estimator in the \jmc model checker and evaluated it on shared-memory benchmarks.
We find that with modest budgets, our estimator yields stable 
estimates---typically within a 20\% band---within a few hundred trials, 
even when the state space has $10^{5}$--$10^{6}$ classes. 
We also show how the same machinery estimates model-checking cost by weighting all explored traces, not only the maximal ones. 
Our algorithms provide the first provable poly-time unbiased estimators for counting Mazurkiewicz traces.
\end{abstract}


\keywords{Model Checking; Partial Order Reduction; State Space Estimation}

\begin{CCSXML}
<ccs2012>
   <concept>
       <concept_id>10003752.10003753.10003761</concept_id>
       <concept_desc>Theory of computation~Concurrency</concept_desc>
       <concept_significance>500</concept_significance>
       </concept>
   <concept>
       <concept_id>10003752.10003790.10011192</concept_id>
       <concept_desc>Theory of computation~Verification by model checking</concept_desc>
       <concept_significance>500</concept_significance>
       </concept>
 </ccs2012>
\end{CCSXML}

\ccsdesc[500]{Theory of computation~Verification by model checking}
\ccsdesc[500]{Theory of computation~Concurrency}

\maketitle

\section{Introduction}

Given a concurrent program, the \emph{estimation problem} asks: how many different behaviors (up to a certain size) does it have?
It is a natural, counting analogue of decision questions such as reachability.
In this work, we develop systematic solutions for the estimation problem.
As a first step, we must answer what we mean by a behavior, since 
different model checkers induce different notions of behaviors.
A naive interleaving-based model checker treats every thread schedule as a distinct execution.
In contrast, a checker based on partial-order reduction \cite{GodefroidThesis, spin, dpor, verisoft,pop-dpor,optimal-dpor,nidhugg,trust} 
collapses all schedules within the same equivalence class into a single behavior, 
thereby exploring a significantly smaller space.
Additional heuristics may reduce the space further, e.g., by defining coarser equivalences based on symmetry.

We focus on behaviors induced by Mazurkiewicz's trace equivalence, where two executions are considered equivalent
if one can be obtained from the other by repeated swapping of adjacent commutative operations.
Our choice is guided, on the one hand, by the success of this notion in model checking
\cite{spin, verisoft, dpor,optimal-dpor, trust}
and, on the other, by the recent emergence of many \emph{optimal} exploration algorithms that guarantee that each equivalence class is visited exactly 
once \cite{optimal-dpor, trust, must}.
In this setting, the estimation problem asks:
\begin{quote}
{\bf Exact Estimation problem:} Given a bounded program \(P\), how many Mazurkiewicz equivalence classes do its interleavings induce?
\end{quote}

For an optimal model checking algorithm, the estimation problem is directly related (but not identical) to the cost of model checking with partial order reduction.
Thus, solving the estimation problem can help answer two
high-level quantitative questions that arise when using a model checker:
\emph{``How long should we expect the model checker to run?''} and
\emph{``What fraction of the state space has the model checker covered so far?''}
Reliable answers to these questions allow users of model checking tools to predict verification effort (see, e.g., discussions in \cite{WangCGMK18,gator}).
%

We first ask whether the above estimation problem can be solved
tractably (i.e., in polynomial time in the size of the program).
The situation is bleak:
we show that the problem is \#P-hard even for highly restricted concurrent programs.
Informally, the \#P-hardness result implies that the task is at least as hard as
counting the satisfying assignments of a Boolean formula (\#SAT), a canonical
counting problem for which no efficient algorithm is believed to exist.
In fact, we show that unless \textsf{P} = \textsf{NP}, we cannot even \emph{approximate} the number of equivalence classes
up to a subexponential factor!
Thus, neither exact nor even moderately precise approximate
counting is feasible in general.

Although these hardness results rule out efficient exact or approximate
counting, they do not prevent us from obtaining statistical or \emph{Monte Carlo}
estimates of the size of the behavioral-space.
The statistical formulation of our problem asks for an \emph{unbiased estimator}:
\begin{quote}
{\bf Unbiased statistical estimation problem:} Given a bounded program \(P\), give a polynomial-time randomized procedure whose expected value is the number of Mazurkiewicz-equivalence classes.
\end{quote}

A natural starting point for deriving such an estimator is Knuth's classical
estimator \cite{knuth75}, which approximates the number of leaves in a
combinatorial search tree by performing a randomized exploratory
walk, recording the branching factors encountered along each walk, and
computing a weighted product; this yields an unbiased estimator of the total
number of leaves.

However, the space of Mazurkiewicz traces is a directed acyclic graph (DAG) but not a tree:
different schedules of events can merge into the same equivalence class.
On a DAG, Knuth's procedure is no longer an unbiased estimator.
Leonard Pitt generalized Knuth's procedure so it works for DAGs\cite{pitt}.
While Pitt's estimator is unbiased, its variance is too high to be of practical use: 
even when a program has exactly one maximal trace, the variance can be exponential.

The main main insight we employ in this work that the \trust stateless optimal \dpor algorithm \cite{trust}
can be converted into an unbiased estimator!
This follows because the \trust algorithm enjoys several crucial properties.
First, its optimality guarantee implies that the underlying state space explored by the algorithm defines a tree structure.
Second, since it is stateless, the next decisions depend solely on the current state and not on the history of the search.
Third, each exploration of the algorithm is bounded polynomially in the size of the program.

These observations together imply that the underlying state space traversed by \trust is a tree, whose depth and width are bounded by a quadratic function of the program size,
and whose leaves are exactly the maximal Mazurkiewicz traces.
Thus, estimating the leaves of this tree solves the estimation problem,
and we can apply the Knuth estimator to this (implicitly constructed) tree to obtain an unbiased estimator.
The algorithm is surprisingly simple: we run \trust, but instead of systematically exploring all successors, we randomly pick one successor at each stage.
At the end, we output the product of the number of choices we had along the path.
Note that we do not exhaustively explore all traces; this number can grow exponentially with program size.

While the estimator is unbiased, its variance can still be exponentially large,
reflecting the inherent hardness of approximation.
However, in many practical situations, a large variance often occurs not because of inherent computational complexity but because the underlying tree is ``skewed''.
In the \trust algorithm, some explored paths are linear in the size of the program while others can
be quadratic---a random walk procedure is much more likely to visit the ``shallow'' leaves than the ``deep'' ones.

In order to reduce variance, we use a technique from Monte Carlo estimation called \emph{stochastic enumeration} \cite{rubinstein, vaisman2017} or weighted ensemble \cite{huberkim}.
Instead of one random walk, as in Knuth's estimator, we maintain a small population of partial paths at each depth of the tree.
Whenever a path ends, the population is redistributed to reflect the structure.
The overall stochastic procedure ensures that no single highly unbalanced subtree dominates the
estimate.
This preserves the unbiasedness of estimation while substantially smoothing out the contribution
of deep but large subtrees.
As our evaluation shows, on a diverse set of bounded concurrent programs, the
resulting estimator attains stable estimates with a modest number of samples,
suggesting that this variance-reduction strategy is well suited to realistic
model-checking workloads.

We have implemented our estimator in the \dpor-based stateless Java model checker JMC and evaluated it on a suite of bounded concurrent programs, including concurrent data structures 
and synthetic benchmarks designed to exercise different concurrency patterns.

We show that the estimator based on \trust, combined with stochastic enumeration, consistently achieves stable estimates within a small number of trials---typically a few hundred---even 
for behavior spaces ranging from about \(10^{5}\) to \(10^{6}\)
executions, with relative error usually within a 20\% band.
On the other hand, each aspect of our estimation algorithm remains 
crucial: on these benchmarks, a DAG-based estimator or an estimator that does not perform stochastic enumeration fails to converge.

We note that the combination of \trust, tree-estimation, and stochastic enumeration exploits a number of subtle but important properties of the \trust algorithm.
First, we crucially leverage the fact that the underlying optimal \dpor algorithm is history independent.
In an optimal model checker like Nidhugg~\cite{nidhugg}, every execution graph is visited through a unique path, implicitly inducing a tree.
However, the task of sampling paths in such a tree (in the style of Knuth's estimator) can be non-trivial.
This is because, to decide whether a given execution graph is an extension of another, the model checker may have to consult the history of the exploration, 
 paying exponential cost in the worst case, undermining the poly-time sampling property.
Second, one could try to reduce variance using a number of heuristics that perform importance sampling in various ways.
However, in practice, the performance of importance sampling is very sensitive to good estimates of the relative sizes of different options.
These estimates are difficult to come by for practical benchmarks.

We also explore the theoretical question of whether there is a good approximate counter that uses \emph{subexponential} time.
We find that, surprisingly, there is an unbiased estimator with low variance that runs in time \(\tilde{O}(\sqrt{N})\), where \(N\) is the number of possible maximal traces (here $\tilde{O}$ hides polynomial factors)!
While theoretically optimal (there is an \(\Omega(\sqrt{N})\) lower bound \cite{Stockmeyer85}), the polynomial factors appearing in the algorithm are too large for empirical evaluation.

\myparagraph{Contributions}{
In summary, this paper makes the following contributions: 
\begin{itemize}[topsep=0pt, partopsep=0pt, itemsep=0.25em, parsep=0pt]
  \item  
  We formulate the estimation problem for Mazurkiewicz traces and show that it is \#P-hard 
  and
  inapproximable within any subexponential factor unless \textsf{P} = \textsf{NP}.

  \item 
  We show how the stateless optimal \dpor algorithm \trust can be modified to produce a poly-time unbiased estimation, and show how a population-based
  estimation strategy reduces the variance of the estimation (\cref{trust_sec}, \cref{sec:se}).
  
  \item We evaluate our estimation procedure on a wide variety of shared-memory concurrency benchmarks in the context of the \jmc model checker (\cref{sec:experiment}).
  The summary of our experiments is that our estimator provides accurate and robust estimates of the number of equivalence classes, as well as the cost of model checking,
  while using a small population budget and a modest number of samples.
\end{itemize}
}



\section{The State Estimation Problem}
\label{sec:problem}
\subsection{Programming Model and Partial Order Semantics}

We consider a multithreaded programming model, in which a set of \emph{threads} concurrently read and write \emph{shared locations}.
Each thread performs sequential computation, including reads and writes to shared locations and deterministic sequential control flow.
We refrain from providing a concrete syntax.
Instead, following \cite{trust}, we define a partial-order semantics using \emph{execution graphs}.
Intuitively, an execution graph consists of a collection of reads and writes by the threads to the shared locations,
ordered by program order for each thread, and
detailing for each read to a location, the unique write from which the read gets its value. 
Such a graph precisely captures one equivalence class under Mazurkiewicz trace equivalence of a program. 

Let us define execution graphs formally.
We assume a set of locations $\Loc$ and a set of values $\Val$.
An event, \( e \in  \Event\), is either the initialization event $\init$, or a thread event
\(\tuple{t, i, \ell}\) where $t \in \Tid$ is a thread identifier unique for each thread, 
$i \in \Idx \defeq \nats$ is an index inside each thread, and
$\ell \in \Lab$ is a label that is either a \emph{write label}
$\opw(l, v)$ or a \emph{read label} $\opr(l, v)$ for a location $l\in\Loc$ and value $v\in \Val$.
When applicable, the functions $\tid$, $\idx$, $\loc$, $\val$, and $\op$, return the thread identifier, index,
location, value of an event, and type of an event's label respectively. 
An event is called a read event if its label is a read label and an event is called
a write event if it is either $\init$ or it has a write label.

An  \emph{execution graph} $G = (E, \rf{}, \mo{})$ consists of: 
\begin{enumerate}
    \item  a set $E$ of events, containing $\init$, such that no two events have both the same thread identifier and the same index; We use $G.\opr$ to denote the set
    of all read events in $E$ and $G.\opw$ to denote the set of all write events in $E$.
    Also,  we use subscripts to further restrict those sets like $G.\opw_l$ is the set of all write events in $E$ whose location is $l$.
    \item a partial order $\rf{} \subseteq G.\opw \times G.\opr$, called the \emph{reads-from} relation, that relates each read event to the 
    (unique) write event to the same location from which the read gets its value. Formally,
    \begin{enumerate}
        \item $\forall e \in G.\opr : \exists e' \in G.\opw_{\loc(e)} : (e', e) \in \rf{} \wedge \val(e) = \val(e').$ 
        \item $\forall (e_1, e_2) \in \rf{} : \op(e_2.\ell) = \opr \wedge (e_1 = \init \vee (\op(e_1.\ell) = \opw \wedge \loc(e_1) = \loc(e_2))$. 
        \item $(e_1,e_2)\in \rf{} \wedge (e_3,e_2)\in \rf{} \implies e_1=e_3$. 
    \end{enumerate}
    \item a strict partial order $\mo{} \subseteq \bigcup_{l \in \Loc}  G.\opw_{l} \times G.\opw_{l}$, called the \emph{modification order}, which is a total order on 
    the write events on the same location. 
    Formally, 
    \begin{enumerate}
        \item $\forall (e_1, e_2) \in \mo{} : e_1 \neq e_2 \wedge e_1,e_2 \in G.\opw \wedge (\loc(e_1) = \loc(e_2) \vee e_1 = \init).$ 
        \item $\forall l \in \Loc : \forall e_1,e_2 \in G.\opw_{l} : e_1 \neq e_2 \implies (e_1,e_2)  \in \mo{} \text{ or } (e_2,e_1) \in \mo{} \text{, but not both.}$
    \end{enumerate}
\end{enumerate}
We define a \emph{program order} $\po{}$ relation as follows:
$$
\po{} \triangleq \{(\init, e) \mid e \in E \setminus \init\} \cup \{(e_1, e_2) \mid e_1,e_2 \in E \wedge \tid(e_1) = \tid(e_2) \wedge \idx(e_1) < \idx(e_2)\}
$$
An execution graph is \emph{sequentially consistent} if $\seqcon{} \triangleq (\po{} \cup \rf{} \cup \mo{} \cup \fr{})^+$ is irreflexive, where $\fr{} \triangleq (\rf{}^{-1} ; \mo{}).$
The semantics of $P$ is the set of its sequentially consistent execution graphs.
We write $G|_{F}$ for the restriction of the graph $G$ to some set $F$ of its nodes and $G\setminus F$ for $G|_{E\setminus F}$.

\subsection{A Transition System on Execution Graphs}


The operational semantics of the programming language defines a transition system on execution graphs.
Instead of a detailed syntax, we assume that there is a function $\nextop_P(G)$ which takes
an execution graph and returns a 
set of possible next events from the program $P$: these are the next operations
of the unblocked threads, after the events in $G$ have transpired.
If $\nextop_P(G)$ is empty, then there are no remaining operations and the program has terminated.

We define a labeled transition system ($\lts$) $\mathcal{T}(P)$ on the set of sequentially consistent execution graphs.  
First, for any execution graph $G$ and location $l$, we define $G.\momax{l}$ as the unique maximal event $e \in G.\opw_l$ in the $G.\mo{l}$ ordering.
Intuitively, $G.\momax{l}$ is the last write to the location $l$ in $G$. 
The labeled transition system $\mathcal{T}(P) = (S, S_0, \xrightarrow{})$ is defined as:
\begin{enumerate}
    \item $S$ is the set of sequentially consistent execution graphs.
    \item $S_0$ is the execution graph $G_{\init} = (\set{\init}, \emptyset, \emptyset)$ consisting of the singleton node $\init$.
    \item $\xrightarrow{} = \bigcup_{t \in \Tid}\{\}\xrightarrow{t}\}$, where $\xrightarrow{t}$ denotes the local transition relation of thread $t \in \Tid$.  
\end{enumerate}
There is an edge $G_1 \xrightarrow{t} G_2$ if
there is an $e\in \nextop_P(G_1)$ with $t = \tid(e)$, and
\begin{enumerate} 
\item $G_2.E = G_1.E \cup \set{e}$.
\item $G_2.\rf{} = \begin{cases} 
G_1.\rf{} & \op(e.\ell) = \opw \\ 
G_1.\rf{} \cup \{(G_1.\momax{\loc(e)}, e)\} & \text{otherwise} 
\end{cases}$
\item $G_2.\mo{}{} = \begin{cases} 
G_1.\mo{}{} & \op(e.\ell) = \opr \\ 
G_1.\mo{} \cup \{(e', e) \mid e' \in G_1.W_{\loc(e)}\} & \text{otherwise} 
\end{cases}$
\end{enumerate}
An execution graph is \emph{reachable} if $G_{\init} \xrightarrow{\cdot}^* G$.
An execution graph is \emph{maximal} if it has no successors,
i.e., $\nextop_P(G) = \emptyset$.
We write $C(P)$ for the number of maximal reachable execution graphs in $\mathcal{T}(P)$.
%

Note that $\mathcal{T}(P)$ is a DAG but not necessarily a tree: two 
different paths in the exploration can reach the same execution graph.
For example, if two threads each read a variable, the two possible thread schedules both lead to the same execution graph.

\subsection{Counting Execution Graphs: Hardness}

When counting execution graphs, we assume that all runs of $P$ are bounded (by some number $n$).
To simplify notation, we will assume in the following that we have ``compiled'' the number $n$ 
into the program (e.g., by unrolling loops an appropriate number of times) and 
therefore, every run of the program is already bounded by the size of the program. 
Thus, we will omit $n$ in our problem formulation and results, and assume $n = |P|$ 
bounds the number of events in any execution of $P$.
Our goal is to count the number $C(P)$ of maximal reachable execution graphs of $P$.
To this end, we define the following counting problem.

\begin{problem}[The Counting Problem]
    \label{prob1}
    Given as input a program $P$, output the number of maximal reachable execution graphs $C(P)$ in $\mathcal{T}(P)$.
\end{problem}


We first show that the counting problem above is complete for the class $\sharpP$ (proof in \cref{theory:A}):
\begin{theorem}\label{thm:counting-problem-hardness}
   The Counting Problem is $\sharpP$-complete. 
   $\sharpP$-hardness holds even if either the number of threads or the number of shared locations is constant.
\end{theorem}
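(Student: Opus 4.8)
The plan is to establish membership in $\sharpP$ and $\sharpP$-hardness separately, and then to revisit the hardness construction twice so that it fits each restricted regime.

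For membership, I would exhibit a nondeterministic polynomial-time machine that \emph{guesses the graph directly} rather than a schedule. Since every run is bounded by $|P|$, a candidate maximal graph $G=(E,\rf{},\mo{})$ has polynomial size, so the machine guesses $E$, $\rf{}$, and $\mo{}$ and then verifies deterministically in polynomial time that (i) $G$ is well formed and consistent with the control flow of $P$, (ii) $G$ is sequentially consistent, by forming $\seqcon{}$ and checking irreflexivity of its transitive closure, and (iii) $G$ is maximal, i.e.\ $\nextop_P(G)=\emptyset$. Using the standard correspondence between the operational and the declarative semantics (reachable graphs are exactly the sequentially consistent ones), the accepting computations are in bijection with the maximal reachable graphs. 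Guessing $G$ itself --- rather than a linearization --- is what makes the witness unique per graph and avoids the overcounting that arises because $\mathcal{T}(P)$ is a DAG; hence the number of accepting paths is exactly $C(P)$ and the problem lies in $\sharpP$.

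For hardness, the first thing to notice is that in this operational semantics no interleaving is ever rejected: a read always reads the current $\mo{}$-maximal write and a freshly added write becomes $\mo{}$-maximal, so neither step can close a $\seqcon{}$-cycle, and $\nextop_P(G)=\emptyset$ holds only upon termination. Consequently hardness cannot be obtained by making ``bad'' assignments infeasible; it must come from the sheer number of \emph{distinct} $(\rf{},\mo{})$ outcomes. I would therefore give a parsimonious reduction from counting linear extensions of a partial order (Brightwell--Winkler), which is $\sharpP$-complete. Encode the elements of the poset as writes to a single counter location $c$, so that the $\mo{c}$ order of a maximal graph is precisely a total order of the elements; encode each constraint $u<v$ by a dedicated single-writer/single-reader \emph{signal} location through which the thread of $v$ reads a value written by the thread of $u$ after its $c$-write, forcing a $\po{}$--$\rf{}$--$\po{}$ chain from $u$'s $c$-write to $v$'s $c$-write and hence $\mo{c}$-ordering them. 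Because each signal location is written once and read once, its $\rf{}$ edge is forced and its $\mo{}$ is trivial, so signal events contribute no extra multiplicity; the reachable maximal graphs then biject with the linear extensions of the poset and $C(P)$ equals their number.

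Finally, to obtain hardness in the two restricted regimes I would re-engineer this encoding so that the combinatorial width is absorbed either by the threads or by the locations. For a constant number of locations (and many threads) I would place each element in its own thread and realize each choice as the $\mo{}$-race order on a bounded pool of shared locations, reusing a fixed set of synchronization locations to install the order constraints one at a time; for a constant number of threads (and many locations) I would instead keep a fixed set of reader/writer threads and use one location per variable, realizing each choice as an $\rf{}$-selection and composing the constraint gadgets across locations, where reducing from \#SAT is often the more convenient source. The main obstacle throughout is \emph{parsimony under the parameter budget}: compressing a general $\sharpP$-hard instance into constantly many threads or locations while guaranteeing that distinct solutions yield distinct maximal graphs, that no interleaving produces a spurious graph or collapses two solutions, and that every auxiliary event has a forced $\rf{}$ and a trivial $\mo{}$ so that it never inflates the count. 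Discharging these invariants under \emph{all} interleavings --- rather than for one canonical schedule --- is the delicate part of the argument.
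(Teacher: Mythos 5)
Your membership argument is fine, and in fact slightly cleaner than the paper's: you guess the maximal graph itself, whereas the paper guesses a schedule and checks that the resulting trace is canonical; both make witnesses unique per graph. The genuine gap is in your hardness reduction, specifically the claim that each signal location's $\rf{}$ edge is ``forced'' because the location is written once and read once. It is not: every location is additionally written by $\init$, and in this transition system a read always reads the currently $\mo{}$-maximal write, so in any interleaving that schedules $v$'s signal read \emph{before} $u$'s signal write, that read must read from $\init$ and the run still completes --- as you yourself correctly observe, no interleaving is ever rejected in this semantics. Each such run yields a reachable maximal graph in which the constraint $u<v$ is simply absent, so $C(P)$ counts the linear extensions \emph{plus} a large family of spurious graphs (one for each pattern of violated constraints together with each compatible $\mo{c}$), and your claimed bijection fails. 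Nor can you eliminate these graphs within the model: branching on the read value only changes \emph{which} garbage graphs appear, not whether they appear, and a bounded spin-loop produces one additional distinct graph per failed read, again destroying parsimony. Your final paragraph names exactly this obstacle (``no interleaving produces a spurious graph'') but treats it as discharged for the base gadget, which it is not, and the sketches for the two restricted regimes defer the entire construction to the very invariants you flag as delicate.

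The missing idea, which is precisely how the paper's proof works, is to give up parsimony and use a \emph{subtractive} reduction from $\#\mathrm{CNF}$: one builds two programs $P_\varphi$ and $P'$ so that every satisfying assignment of $\varphi$ yields exactly one ``lockstep'' maximal graph of $P_\varphi$, every unsatisfying assignment yields exactly two (via an extra read of a flag location that may or may not have been written), $P'$ yields two lockstep graphs per assignment unconditionally, and the garbage graphs --- the unavoidable maximal graphs arising from interleavings that break the intended read--write pairing, the analogue of your spurious graphs --- are \emph{identical} in the two programs. Then $\#\mathrm{CNF}(\varphi)=\#\mathcal{R}(P')-\#\mathcal{R}(P_\varphi)$ and the garbage cancels. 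The two restricted regimes are then handled by two separate explicit constructions: one with only $2$ threads and $n{+}2$ locations (lockstep iterations over a handshake location), and one with only $4$ locations and $m{+}4$ threads (values encoding iteration indices, a tiebreaker thread, and per-clause checker threads). To repair your proof along its own lines you would need either a blocking primitive that rejects bad interleavings (absent from this model) or a cancellation device of the paper's subtractive kind; the single-program parsimonious route from linear extensions does not go through.
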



Recall that $\sharpP$ is the complexity class associated with counting the number
of accepting computations of a non-deterministic Turing machine (see, e.g., \cite{aroraBarak}). 
In particular, the problem of counting the number of satisfying assignments of a SAT formula is 
$\sharpP$-complete. 
Hence, $\sharpP$-complete problems are at least as hard as NP-complete problems, which 
gives strong evidence against the existence of any polynomial-time
algorithms for Problem~\ref{prob1}. 
In light of this, we can relax the requirement of the Counting Problem 
so that we only need an approximate count, rather than an exact count.

\begin{problem}[The Approximate Counting Problem]
    \label{prob2}
    A randomized algorithm $A$ is an $(f(n), \rho)$-approximate counter if, for every program $P$ of size $n$, we have
    \[
    \operatorname{Pr}[C(P)/f(n) \leq A(P, \rho) \leq C(P)f(n)] \geq 1 - \rho
    \]
\end{problem}

Note that since $C(P)$ is always at most an exponential function in the size of $P$,
the least non-trivial approximation that we can ask for is an approximation
with a sub-exponential ratio. 
However, even for this very weak approximation ratio, we show that it is highly unlikely that a (randomized) polynomial-time algorithm exists, by means of the following theorem
(proof in \cref{theory:B}):

\begin{theorem}\label{thm:approximate-counting-problem-hardness}
 Unless RP = NP, there is no randomized polynomial-time $(2^{n^{1-\epsilon}}, \rho)$-approximate 
 counter for $\epsilon > 0$ independent of the input.
\end{theorem}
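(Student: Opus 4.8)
The plan is to reduce the decision problem \textsf{SAT} to \emph{approximate} counting, by first manufacturing a constant-factor gap and then amplifying it beyond the square of the claimed approximation factor via parallel composition. Assume toward a contradiction that a randomized polynomial-time $(2^{n^{1-\epsilon}},\rho)$-approximate counter $A$ exists for some fixed $\epsilon>0$ and some $\rho<1/2$. Starting from (a mild variant of) the $\sharpP$-hardness reduction behind \cref{thm:counting-problem-hardness}, I would build in polynomial time, from a CNF formula $\phi$ on $m$ variables, a program $P_\phi$ of size $p=\mathrm{poly}(m)$ such that
\[
\phi \text{ unsatisfiable} \implies C(P_\phi)=1, \qquad \phi \text{ satisfiable} \implies C(P_\phi)\ge 2 .
\]
The crucial design point is a strictly positive baseline in the unsatisfiable case: since every program admits at least one maximal trace, I would keep one ``canonical'' maximal execution graph that is always present — e.g.\ via a top-level nondeterministic branch, one side running a fixed single-trace gadget and the other a checker contributing one extra maximal graph per satisfying assignment — so that $C(P_\phi)=1+\#\mathsf{sat}(\phi)$.

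Next I would prove a product lemma: if $P_1,P_2$ use disjoint thread identifiers and disjoint locations, then $C(P_1 \parallel P_2)=C(P_1)\cdot C(P_2)$. This holds because $\rf{}$ and $\mo{}$ only relate same-location events while $\po{}$ only relates same-thread events, so the shared $\init$ event has no incoming $\seqcon{}$-edges and no $\seqcon{}$-cycle can cross components; hence a combined graph is sequentially consistent and maximal iff both its projections are, and reachability decomposes by interleaving event insertions. Applying this to $k$ disjoint renamed copies gives $C(P_\phi^{\parallel k})=C(P_\phi)^k$, turning the gap into $2^k$ versus $1$. Writing $N=kp$ for the size of the product, the requirement $2^k>(2^{N^{1-\epsilon}})^2$ reduces to $k^{\epsilon}>2p^{1-\epsilon}$, which holds for some $k=\mathrm{poly}(m)$ because $\epsilon$ is fixed; thus $N=\mathrm{poly}(m)$ and the whole construction is polynomial.

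To finish, run $A$ on $P_\phi^{\parallel k}$ with factor $f=2^{N^{1-\epsilon}}$ and output ``satisfiable'' iff $A>f$. If $\phi$ is unsatisfiable then $C=1$, so $A\le f$; if $\phi$ is satisfiable then $C\ge 2^k$, so $A\ge 2^k/f>f$ by the amplified gap. In either case $A$ answers correctly with probability $\ge 1-\rho$, giving a two-sided-error polynomial-time decision procedure for \textsf{SAT}, i.e.\ $\textsf{NP}\subseteq\textsf{BPP}$. Invoking the standard collapse $\textsf{NP}\subseteq\textsf{BPP}\Rightarrow\textsf{NP}=\textsf{RP}$ (obtained from the downward self-reducibility of \textsf{SAT}) then yields $\textsf{RP}=\textsf{NP}$, the desired contradiction.

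The hard part will be the base reduction together with the product lemma. The reduction must guarantee a clean, strictly positive count in the unsatisfiable case — otherwise a ``$1$ versus $0$'' gap would make even a trivial factor suffice, which is both too strong and not what the statement intends — and the multiplicativity lemma must be verified carefully in the execution-graph semantics, in particular checking that the single shared $\init$ node and the irreflexivity of $\seqcon{}$ create no cross-copy interference so that maximal reachable graphs of $P_\phi^{\parallel k}$ biject with $k$-tuples of maximal reachable graphs of $P_\phi$. By comparison, the polynomial size bookkeeping for $k$ and the $\textsf{BPP}\to\textsf{RP}$ step are routine.
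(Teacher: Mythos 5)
Your product lemma, the gap-amplification arithmetic ($k^\epsilon > 2p^{1-\epsilon}$, hence polynomial $k$), and the final thresholding plus self-reducibility step are all sound; in particular, multiplicativity $C(P_1 \parallel P_2) = C(P_1)\cdot C(P_2)$ for thread- and location-disjoint programs is true in this semantics, since every $\rf{}$, $\mo{}$, $\fr{}$ edge stays within one component, the shared $\init$ has only outgoing edges, and schedules of the two components interleave freely in $\mathcal{T}(P)$. The genuine gap is the base construction on which everything rests: a polynomial-size $P_\phi$ with $C(P_\phi) = 1 + \#\mathrm{sat}(\phi)$, and in particular $C(P_\phi)=1$ when $\phi$ is unsatisfiable, is not realizable in this model. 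The only nondeterminism available is scheduling, which manifests as $\rf{}$/$\mo{}$ choices; any gadget that evaluates $\phi$ must first guess an assignment through such choices, and each guess is permanently recorded in the execution graph (assignments that are to influence control flow must be distinguished by $\rf{}$, since a thread's behavior is a function of the values it reads). A thread that discovers its guess is falsifying can only terminate early, and the resulting graph is still maximal --- there is no mechanism by which a branch contributes \emph{zero} maximal graphs. So any assignment-enumerating program already has $C \ge 2^m$ in the unsatisfiable case, and a single satisfying assignment then perturbs the count by a factor of only $1+2^{-m}$ unless satisfying branches are given extra multiplicity; amplifying a $1+2^{-m}$ ratio with your product lemma would require $k = \Omega(2^m)$ copies, destroying polynomiality. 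This obstruction is exactly why the paper's own $\sharpP$-hardness proof of Theorem~\ref{thm:counting-problem-hardness} resorts to \emph{subtractive} reductions: the unsatisfying-assignment and garbage graphs cannot be made to vanish, only to cancel between two programs --- and cancellation is useless for a single approximate query.

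The repair is essentially the paper's actual proof: accept the $2^n$ baseline and instead inflate the multiplicity of each \emph{satisfying} branch exponentially. There, each of $n+n^k$ auxiliary threads writes once to a private location; the checker thread reads the first $n$ to form the assignment (so maximal graphs correspond exactly to $\rf{}$-choice vectors, with no garbage), and upon finding the assignment satisfying it performs $n^k$ further reads of fresh one-writer locations, each an independent binary $\rf{}$ choice. Hence $C(P_\varphi) = 2^n$ if $\varphi$ is unsatisfiable and $C(P_\varphi) \ge 2^n - 1 + 2^{n^k}$ otherwise; choosing $k$ with $k\epsilon > 1$ makes $2^{n^k}$ dominate the square of the approximation factor $2^{|P_\varphi|^{1-\epsilon}}$ (note the factor is measured in $|P_\varphi| = \Theta(n^k)$, which also absorbs $|\varphi|$), so one threshold query to the counter decides SAT, and self-reducibility yields $\mathrm{RP}=\mathrm{NP}$ just as in your last step. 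Once this per-branch blow-up is available, your parallel composition is superfluous; it remains a correct and potentially useful amplification device, but only for a base with a constant-factor gap and a known unsatisfiable-side count, which, for the reason above, cannot be manufactured here.
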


Here RP is the class of problems which admit randomized polynomial-time algorithms;
(see \cite{aroraBarak}). Hence, this theorem puts hard limits on the Approximate Counting Problem. 
Furthermore, unless P = NP, a similar argument shows that there is no deterministic polynomial-time
algorithm that gives a sub-exponential approximation to $C(P)$. 

In light of these hardness results, we now present some polynomial-time randomized algorithms that offer an \emph{unbiased estimate} of $C(P)$, i.e., the \emph{expected value} returned by them will be $C(P)$.




\section{Poly-Time Unbiased Estimation Algorithms} 
\label{sec:pestor}
Let $P$ be a program and $\mathcal{T}(P)$ its transition system.
A randomized algorithm $A$ that maps programs $P$ to natural numbers 
is an \emph{unbiased estimator} for $C(P)$ if
the expected value of $A$ on the input $P$, over all of its internal random choices, is $C(P)$.
It is poly-time if it runs in time polynomial in $|P|$.

In this section, we present two poly-time unbiased estimators for $C(P)$.
Both of them have the nice property that they are simple to describe and work directly on $\mathcal{T}(P)$, which is quite natural and provides a convenient model of the space of Mazurkiewicz trace equivalence classes. However, as we shall see, both of them have high variance even on simple examples. 
In the next section, we will see how we can reduce the high variance, by building upon ideas from this section.


\subsection{Knuth Estimator}
We now present our first estimator, which works \textit{as long as} $\mathcal{T}(P)$ \textit{is a tree}. It is based on Monte Carlo estimation.
The algorithm, given below, was described by Knuth \cite{knuth75}.

\begin{verse}
\label{algo:K}
\textbf{Algorithm K} \cite{knuth75}. 
Given a tree $T$, run the following randomized procedure.
Starting at the root,
for each node $n_i$ with $d_i$ children, choose one of its $d_i$ children uniformly at random and move there, until reaching a leaf $n_k$.
Stop and output $d_1 \cdot d_2 \cdot \ldots \cdot d_{k-1}$ as an estimate of the number of leaves of $T$.
\end{verse}

\begin{proposition}{\cite{knuth75}}
If $\mathcal{T}(P)$ is a tree, \textbf{Algorithm K} is a poly-time unbiased estimator for $C(P)$.
\end{proposition}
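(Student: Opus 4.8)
The plan is to establish the two claims—unbiasedness and polynomial running time—separately, handling unbiasedness by a structural induction on the tree and the time bound by appealing to the size bounds that the semantics impose on $\mathcal{T}(P)$.

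For unbiasedness, I would argue by induction on the structure of the tree. For a node $n$, let $L(n)$ denote the number of leaves in the subtree rooted at $n$, and let $X(n)$ be the random output of running \textbf{Algorithm K} started at $n$ (the product of branching factors encountered during the descent). The claim is $\mathbb{E}[X(n)] = L(n)$; applying this at the root gives $\mathbb{E}[\text{output}] = C(P)$, since when $\mathcal{T}(P)$ is a tree each maximal reachable execution graph is reached by a unique path and hence corresponds to a distinct leaf, so the leaves are exactly counted by $C(P)$. The base case is a leaf, where the product is empty, so $X(n) = 1 = L(n)$. For the inductive step, suppose $n$ has $d$ children $c_1, \dots, c_d$. \textbf{Algorithm K} moves to $c_j$ with probability $1/d$ and then returns $d \cdot X(c_j)$, so by linearity of expectation and the inductive hypothesis,
\[
\mathbb{E}[X(n)] = \sum_{j=1}^{d} \frac{1}{d}\, \mathbb{E}[d \cdot X(c_j)] = \sum_{j=1}^{d} \mathbb{E}[X(c_j)] = \sum_{j=1}^{d} L(c_j) = L(n).
\]
An equivalent one-line argument makes the cancellation transparent: each leaf $\ell$ is reached with probability equal to the reciprocal of the product of branching factors on its root-to-$\ell$ path, and the value output upon reaching $\ell$ is exactly that product, so every leaf contributes precisely $1$ to the expectation and the total is the leaf count.

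For the running-time bound, I would use the structure of $\mathcal{T}(P)$ fixed by the transition semantics. Each edge $G_1 \xrightarrow{t} G_2$ adds exactly one event, and since the bound has been compiled into $P$ so that every run has at most $|P|$ events, the depth $k$ of any root-to-leaf path is at most $|P|$. At each visited node we compute $\nextop_P(G)$ and construct the sampled successor—adding the event and updating $\rf{}$ and $\mo{}$, which for a read requires locating $G.\mo{l}^{\max}$, a polynomial operation on a graph of at most $|P|$ events—all polynomial. The branching factor $d_i$ is bounded by the number of unblocked threads, hence by $|\Tid| \le |P|$, so sampling a uniform child is cheap. Thus the walk touches at most $|P|$ nodes and does polynomial work at each.

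The one point requiring genuine care—the main obstacle in an otherwise routine argument—is the size of the returned value. The product $d_1 \cdots d_{k-1}$ can be as large as $|P|^{|P|}$ and is therefore exponential \emph{as a number}, which is unavoidable since $C(P)$ itself may be exponential. However, its bit-length is only $O(|P| \log |P|)$, so it is stored and computed with polynomially many multiplications of polynomially many bits, and the algorithm remains poly-time. With this accounting in place, the rest is a telescoping induction; the substantive work in the paper lies not here but in later showing that the \trust exploration actually realizes such a tree, so that this proposition becomes applicable.
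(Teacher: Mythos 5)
Your proof is correct and follows essentially the same route as the paper: your ``one-line'' cancellation argument (each leaf is reached with probability equal to the reciprocal of the output value, so each contributes exactly $1$ to the expectation) is precisely the paper's proof of unbiasedness, and your poly-time argument matches the paper's observation that the branching factor equals the number of available threads. The additional material you supply---the structural induction and the bit-length accounting for the exponentially large output value---is sound extra rigor rather than a departure.
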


We observe that each path $n_1 \ldots n_k$ of the tree from its root $n_1$ to a leaf $n_k$ is reached by the random walk with probability $p(n_1,\dots,n_k) = 1/(d_1d_2\ldots d_{k-1})$. Since the output of our algorithm for the path $n_1,\dots,n_k$
is exactly $1/p(n_1,\dots,n_k)$, it follows that the expected value of our output
is simply the number of all possible paths $n_1,\dots,n_k$,
which is exactly the number of leaves.
\begin{tcolorbox}[
    colback=SkyBlue!5,
    colframe=blue!20,
    boxrule=1pt,
    arc=2pt,
    width=\textwidth,
    breakable
]

\begin{example}
Consider the program
\begin{equation*}
  \tag{\textsc{r+w+w}}
  \label{ex:r+w+w}
  \inarr{ \textcolor{Red}{\readInst{b}{x}} }{ \textcolor{blue}{\writeInst{x}{1}} }{ \textcolor{Dandelion!60!black}{\writeInst{x}{2}} }
\end{equation*}
\cref{fig:r+w+w-t} shows the transition system $\mathcal{T}(\mathsf{R+W+W})$.
The estimate for any random path through the tree is $3\times 2\times 1 = 6$, and the expectation is also 6. 
\qed
\end{example}
\input{graphs/RWW-T}
\end{tcolorbox}

To show that \textbf{Algorithm K} runs in poly-time, it suffices to show that the number of children of a node can be computed in poly-time. This is possible, since  the number of children of each execution graph is bounded by the number of available threads at an execution graph.
Hence, \textbf{Algorithm K} is easy to implement in a random tester: it only requires that we intercept the scheduling events 
and pick one of the available threads at random. 

Even when $\mathcal{T}(P)$ is not a tree, we can run \textbf{Algorithm K} as follows: Sample a path along $\mathcal{T}(P)$ and return the product of the number of available choices at each step. 
However, since this strategy ignores the 
DAG structure of $\mathcal{T}(P)$, it can substantially overcount the number of maximal execution graphs, and is not an unbiased estimator.

\begin{tcolorbox}[
    colback=SkyBlue!5,
    colframe=blue!20,
    boxrule=1pt,
    arc=2pt,
    width=\textwidth,
    boxsep=5pt,
    left=5pt,
    right=5pt,
    top=5pt,
    bottom=5pt
]

\noindent
\begin{minipage}[b]{0.65\textwidth}
\begin{example}
Consider the program  
\begin{equation*}
  \tag{\textsc{r+r+r}}
  \label{ex:r+r+r}
  \inarr{ \textcolor{Red}{\readInst{a_1}{x}} }{ \textcolor{blue}{\readInst{a_2}{x}} }{ \textcolor{Dandelion!40!black}{\readInst{a_3}{x}} }
\end{equation*}

There is exactly one maximal execution: all reads read from $\init$.
The transition system $\mathcal{T}(\text{\textsc{r+r+r}})$ is a DAG with exactly one maximal execution.
\textbf{Algorithm K} estimates 6 executions, one for each ordering of the threads.
In general, if there were $n$ reads, \textbf{Algorithm K} would estimate $n!$ executions. \qed
\end{example}
\end{minipage}%
\hfill
\begin{minipage}[b]{0.40\textwidth}
\centering
\vspace{0pt}
\begin{tikzpicture}[
    scale=0.8,
    transform shape,
    >=stealth,
    node distance=2.2cm and 2.5cm,
    hypernode/.style={draw, rectangle, rounded corners=4pt, very thick, inner sep=3pt, font=\small\bfseries},
    edge label/.style={midway, very thick,
    fill=white, font=\footnotesize, inner sep=1pt},
    edge-t1/.style={->, very thick, Red},
    edge-t2/.style={->, very thick, blue},
    edge-t3/.style={->, very thick, Dandelion!60!black}
]

\node[hypernode] (G0) at (0,0) {$G_0$};

\node[hypernode] (G1) at (-2,-1) {$G_1$};
\node[hypernode] (G2) at (0,-1) {$G_2$};
\node[hypernode] (G3) at (2,-1) {$G_3$};

\node[hypernode] (G4) at (-1.5,-2) {$G_4$};
\node[hypernode] (G5) at (0,-2) {$G_5$};
\node[hypernode] (G6) at (1.5,-2) {$G_6$};

\node[hypernode] (G7) at (0,-3) {$G_7$};

\draw[edge-t1] (G0) -- (G1);
\draw[edge-t2] (G0) -- (G2);
\draw[edge-t3] (G0) -- (G3);

\draw[edge-t2] (G1) -- (G4);
\draw[edge-t3] (G1) -- (G5);

\draw[edge-t1] (G2) -- (G4);
\draw[edge-t3] (G2) -- (G6);

\draw[edge-t1] (G3) -- (G5);
\draw[edge-t2] (G3) -- (G6);

\draw[edge-t3] (G4) -- (G7);
\draw[edge-t2] (G5) -- (G7);
\draw[edge-t1] (G6) -- (G7);

\end{tikzpicture}
\label{hypergraph-rrr}
\end{minipage}%
\end{tcolorbox}

\subsection{Pitt Estimation}

The above disadvantage of \textbf{Algorithm K} on DAGs can be rectified:
Pitt \cite{pitt} showed that \textbf{Algorithm K} can be adapted to estimate the number of sinks in a DAG. 
The key idea is to correct for overcounting at each step,  by dividing  the multiplicative weight with the  number of incoming edges to the current node.

\begin{verse}
\textbf{Algorithm P} \cite{pitt}.
Given a DAG $T$, run the following randomized procedure.
Starting at the root, for each node $n_i$ with $d_i$ children and $e_i$ incoming edges,
pick a child uniformly at random and move there, until reaching a node $n_k$ with no outgoing edges.
Stop and output $\frac{d_1}{e_1} \ldots \frac{d_k}{e_k}$.
In the algorithm, we assume $e_1 = 1$ for the root and $d_k = 1$ for the leaf.
\end{verse}

\begin{proposition}
For any program $P$, \textbf{Algorithm P} is a poly-time unbiased estimator for $C(P)$.
\end{proposition}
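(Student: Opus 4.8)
The plan is to compute the expectation of Algorithm P's output directly as a sum over all root-to-sink paths of $\mathcal{T}(P)$, and to show that the per-path contributions reorganize into a ``conservation of flow'' identity contributing exactly $1$ per sink. Write a run of the algorithm as a path $\pi = (n_1, \dots, n_k)$ with $n_1 = G_{\init}$ the root and $n_k$ a sink, where $n_i$ has $d_i$ children and $e_i$ incoming edges. This path is realized with probability $p(\pi) = \prod_{i=1}^{k-1} 1/d_i$ (a uniform child choice at each of the $k-1$ non-sink nodes) and the algorithm outputs $W(\pi) = \prod_{i=1}^{k} d_i / e_i$. First I would perform the telescoping computation: using the conventions $e_1 = 1$ and $d_k = 1$,
\[
p(\pi)\, W(\pi) \;=\; \Big(\prod_{i=1}^{k-1} \tfrac{1}{d_i}\Big)\,\frac{d_1 \cdots d_k}{e_1 \cdots e_k} \;=\; \frac{d_k}{e_1}\,\prod_{i=2}^{k} \frac{1}{e_i} \;=\; \prod_{i=2}^{k} \frac{1}{e_{n_i}} ,
\]
so if $\hat{C}$ denotes the (random) output then $\mathbb{E}[\hat{C}] = \sum_{\pi} p(\pi) W(\pi) = \sum_{\pi}\prod_{i=2}^{k} 1/e_{n_i}$, the sum ranging over root-to-sink paths.

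The core step is a backward-flow identity. For a reachable node $v$, define $g(v) = \sum_{\pi:\, G_{\init} \rightsquigarrow v} \prod_{u \in \pi,\, u \neq G_{\init}} 1/e_u$, the sum over root-to-$v$ paths. I claim $g(v) = 1$ for every reachable $v$, proved by induction along a topological order of the finite DAG $\mathcal{T}(P)$. For the base case $v = G_{\init}$ the only path is trivial and the empty product is $1$. For the inductive step, every path to a non-root $v$ factors as a path to some predecessor $u$ followed by the edge $u \to v$, contributing an extra factor $1/e_v$; hence $g(v) = (1/e_v)\sum_{u \to v} g(u)$. Since at most one edge joins any ordered pair of nodes (the added event, hence its thread, is determined by $u$ and $v$), the number of predecessors of $v$ equals its number of incoming edges $e_v$; by the induction hypothesis each $g(u) = 1$, so $g(v) = (1/e_v)\cdot e_v = 1$. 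Grouping the path contributions by their terminal sink then gives $\mathbb{E}[\hat{C}] = \sum_{\text{sinks } s} g(s) = \sum_{\text{sinks } s} 1 = C(P)$, the number of maximal reachable execution graphs, which is exactly the claimed unbiasedness.

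For the poly-time bound, note each edge of $\mathcal{T}(P)$ adds exactly one event, so every path has length at most $|P|$ and the walk halts at a sink after at most $|P|$ steps. As for \textbf{Algorithm K}, the out-degree $d_i$ is the number of available threads and is computed in poly-time. The delicate ingredient, and what I expect to be the main obstacle, is computing the in-degree $e_i$ of the current graph $G$ in poly-time and arguing it counts precisely the \emph{reachable} predecessors. I would characterize the predecessors of $G$ as the graphs $G \setminus \{e\}$ obtained by deleting an event $e$ that is maximal in program order within its thread, satisfies the SC transition constraints (if $e$ is a read it must read from $G.\mo{\loc(e)}^{\max}$, and if $e$ is a write it must be $\mo{}$-maximal on its location), and has $e \in \nextop_P(G \setminus \{e\})$. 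Each candidate is checkable in poly-time and there are at most $|P|$ of them, so $e_i$ is poly-time computable. The remaining subtlety is to confirm that every such local predecessor is itself reachable from $G_{\init}$, so that $e_i$ is indeed the in-degree within the reachable sub-DAG used in the induction; this follows because any $\seqcon{}$-consistent graph can be assembled one event at a time along a linear extension of $\seqcon{}$, under which each read sees its intended write as the current $\mo{}$-maximum.
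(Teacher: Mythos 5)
Your proposal is correct, and it is worth separating its two halves when comparing it to the paper. For the poly-time half you arrive at essentially the paper's own argument: the paper proves (Proposition~\ref{prop1}) that the predecessors of a reachable $G$ are in bijection with its sequentially maximal events $\seqcon{}^{\max}(G)$, and establishes reachability of each candidate predecessor $G\setminus\{e\}$ by repeatedly peeling off sequentially maximal events until reaching $G_{\init}$; your forward assembly of $G\setminus\{e\}$ along a linear extension of $\seqcon{}$ is the same argument run in the opposite direction (your verification that a read added in such an order sees its $\rf{}$-source as the current $\mo{}$-maximum is exactly what the peeling argument uses backwards), and your observation that distinct predecessors add distinct events, so that in-degree equals the number of predecessors, mirrors the paper's injectivity step. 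For the unbiasedness half you go strictly further than the paper, which delegates this entirely to the citation of Pitt \cite{pitt}: your telescoping identity $p(\pi)W(\pi)=\prod_{i\geq 2} 1/e_{n_i}$ combined with the flow-conservation lemma $g(v)=1$, proved by induction along a topological order of $\mathcal{T}(P)$, is a clean self-contained derivation, and it correctly isolates the one fact that makes the induction sound in this setting—that every predecessor of a reachable node is itself reachable, so the locally computed $e_i$ is the in-degree of the DAG the walk actually inhabits, which is precisely the content of the paper's Proposition~\ref{prop1}. One small point to tighten: in your local characterization of predecessors, for a write $e$ you should additionally require that no read of $G$ reads from $e$ (i.e., $e$ has no outgoing $\rf{}$ edge), since otherwise $G\setminus\{e\}$ is not even a well-formed execution graph; this is only implicit in your ``SC transition constraints,'' whereas the paper's $\seqcon{}$-maximality condition—maximality with respect to $(\po{}\cup\rf{}\cup\mo{}\cup\fr{})^{+}$—packages it explicitly.
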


\begin{tcolorbox}[
    colback=SkyBlue!5,
    colframe=blue!20,
    boxrule=1pt,
    arc=2pt,
    width=\textwidth,
    breakable
]
\begin{example}
For \cref{ex:r+r+r}, \textbf{Algorithm P} estimates $(3\times 2\times 1 \times 1)/(1\times 1 \times 2\times 3) = 1$ for each path through the DAG.
Thus, the expectation is also 1 (and the variance is 0).
\qed
\end{example}
\end{tcolorbox}
To show that \textbf{Algorithm P} is poly-time, given an execution graph, we have to determine the number of its \emph{incoming} edges in $\mathcal{T}(P)$ in poly-time.
For sequentially consistent executions, we can do this as follows.
Let $G=(E,\rf{},\mo{})$ be an execution graph.
An event $e\in G.E$ is said to be sequentially maximal if it is maximal with respect to the $\seqcon{}$ relation. 

\begin{proposition}
\label{prop1}
Let $P$ be a program and let $G$ a reachable execution graph in $\mathcal{T}(P)$. 
The number of predecessors of $G$ is exactly the number of sequentially maximal events of $G$. 
\end{proposition}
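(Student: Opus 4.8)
The plan is to set up an explicit bijection between the immediate predecessors of $G$ in $\mathcal{T}(P)$ and the sequentially maximal events of $G$, and then simply count. Every transition $G'\xrightarrow{t}G$ adds exactly one event, so a predecessor $G'$ is pinned down by the single event $e\in G.E$ it omits; moreover the clauses defining the edge force $G'.\rf{}$ and $G'.\mo{}$ to equal $G.\rf{}$ and $G.\mo{}$ restricted to $G.E\setminus\set{e}$, so in fact $G'=G\setminus\set{e}$. Hence the map $G'\mapsto e$ is injective, and the whole proposition reduces to identifying its image: I will show that $e$ is the event omitted by some predecessor if and only if $e\in\seqcon{}^{\max}(G)$. (Note that $\init$ is sequentially maximal only in the trivial root $G_{\init}$, which has no predecessors; for every other reachable $G$ the event $\init$ carries an outgoing $\po{}$-edge and is excluded, so all events in play are genuine thread events.)

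First I would translate $\seqcon{}$-maximality into local, checkable conditions. Since $\seqcon{}=(\po{}\cup\rf{}\cup\mo{}\cup\fr{})^{+}$, an event $e$ is maximal in $\seqcon{}$ iff it has no outgoing edge in the base relation $\po{}\cup\rf{}\cup\mo{}\cup\fr{}$. Unfolding the four relations, this means: (i) $e$ is $\po{}$-maximal, i.e. the last event of its thread; (ii) if $e$ is a write, no read reads from $e$; (iii) if $e$ is a write, $e$ is $\mo{}$-maximal on $\loc(e)$; and (iv) if $e$ is a read, $e$ reads from the $\mo{}$-maximal write on $\loc(e)$ (the only way to kill outgoing $\fr{}=\rf{}^{-1};\mo{}$ edges). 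Conditions (ii)–(iii) are vacuous for reads and (iv) is vacuous for writes, so $\seqcon{}$-maximality is exactly (i)$\wedge$(ii)$\wedge$(iii) for a write and (i)$\wedge$(iv) for a read — precisely the shape of the two cases in the definition of the transition relation.

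For the direction $e\in\seqcon{}^{\max}(G)\Rightarrow e$ is omitted by a predecessor, I would take $G'=G\setminus\set{e}$ and discharge two obligations. That $G'$ is a consistent execution graph is immediate: by (i), deleting $e$ leaves no thread suffix dangling; by (ii), when $e$ is a write no surviving read loses its writer; and $G'.\seqcon{}\subseteq G.\seqcon{}$ keeps $\seqcon{}$ irreflexive. That $G'\xrightarrow{\tid(e)}G$ then reads straight off the rules: if $e$ is a write, (ii) gives $G.\rf{}=G'.\rf{}$ and (iii) says $e$ is exactly the $\mo{}$-maximal insertion; if $e$ is a read, $\mo{}$ is untouched and (iv) identifies the single reinstated edge as $(G'.\mo{\loc(e)}^{\max},e)$, using that deleting a read does not change $G.\mo{\loc(e)}^{\max}$. (Each such $G'$ is again reachable: reorder the execution witnessing $G$ so that $e$ is added last, which is legitimate since $e$ has no $\seqcon{}$-successors.) The converse runs the same computation backwards: given $G'\xrightarrow{t}G$ with omitted event $e$, the $\nextop_P$ requirement forces $e$ to be $\po{}$-maximal, and the $\rf{}$/$\mo{}$ clause of whichever case applies yields (ii)–(iii) or (iv), so $e\in\seqcon{}^{\max}(G)$.

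The step I expect to be most delicate is verifying the remaining transition obligation $e\in\nextop_P(G')$, because the paper keeps $\nextop_P$ abstract and a read event carries its value inside its label. The point to nail down is that deleting the single $\seqcon{}$-maximal event $e$ does not alter the value read by any surviving event of thread $\tid(e)$, so the deterministic sequential semantics reproduces the same next instruction, namely the operation underlying $e$; and that the value the transition then assigns to $e$ agrees with $\val(e)$ — for a write it is the locally computed value, unaffected by the deletion, and for a read it is the value of $G'.\mo{\loc(e)}^{\max}=G.\mo{\loc(e)}^{\max}$, which by (iv) is exactly the writer of $e$ in $G$. Once $e\in\nextop_P(G')$ is settled, the bijection is complete and the number of predecessors of $G$ equals $|\seqcon{}^{\max}(G)|$.
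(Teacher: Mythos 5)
Your proposal is correct and takes essentially the same route as the paper: both proofs set up the bijection $G' \leftrightarrow e$ with $G' = G\setminus\set{e}$, showing that the event added along any incoming transition is $\seqcon{}$-maximal, and conversely that deleting a $\seqcon{}$-maximal event yields a consistent execution graph from which $G$ is re-obtained by one transition. The only minor divergence is the reachability sub-step — the paper peels off sequentially maximal events one at a time to trace a path back to the initial graph, whereas you rearrange the run witnessing $G$'s reachability so that $e$ comes last; both are sound, and your explicit discharge of the $e \in \nextop_P(G')$ obligation is a point the paper leaves implicit.
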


The proof can be found in \cref{theory:C}.
Since it is easily seen that we can compute the number of sequentially maximal events of an execution graph in polynomial-time,
Proposition~\ref{prop1} implies that \textbf{Algorithm P} is poly-time.

\begin{tcolorbox}[
    colback=SkyBlue!5,
    colframe=blue!20,
    boxrule=1pt,
    arc=2pt,
    width=\textwidth,
    boxsep=5pt,
    left=5pt,
    right=5pt,
    top=5pt,
    bottom=5pt
]

\noindent
\begin{minipage}[b]{0.62\textwidth}
\begin{example} \label{example:R+RR}
Unfortunately, \textbf{Algorithm P} can have very high variance even for a trivial program.
Consider this simple program with three events:
 \begin{equation*}
  \tag{\textsc{r+rr}}
  \label{ex:r+rr}
  \inarr{ \textcolor{Red}{\readInst{a_1}{y}} }{
\textcolor{blue}{\readInst{a_2}{x}} \\
\textcolor{blue}{\readInst{a_3}{x}}
}   
\end{equation*}
\cref{meta-graph-r+rr} shows the DAG $\mathcal{T}(\text{\textsc{r+rr}})$ of the program. Although $\mathcal{T}(\text{\textsc{r+rr}})$ has exactly one sink, there are three paths to reach the sink, and running \textbf{Algorithm P} on each path returns a different estimation.
\end{example}
\end{minipage}%
\hfill
\begin{minipage}[b]{0.40\textwidth}
\centering
\vspace{0pt}
\centering
\begin{tikzpicture}[
    scale=0.7,
    transform shape,
    >=stealth,
    node distance=2.2cm and 2.5cm,
    hypernode/.style={draw, rectangle, rounded corners=4pt, very thick, inner sep=3pt, font=\small\bfseries},
    event/.style={draw=none, font=\normalsize},
    rf/.style={->, thick, dashed, green!60!black},
    co/.style={->, thick, dotted, red!80!black},
    graph box/.style={draw, rectangle, rounded corners=4pt, thick, minimum width=1.4cm, minimum height=1cm, inner sep=0.1cm, minimum size=0cm},
    edge label/.style={midway, very thick, fill=white, font=\footnotesize, inner sep=1pt},
    edge-t1/.style={->, very thick, Red},
    edge-t2/.style={->, very thick, blue},
]

\node[hypernode] (G0) at (0,0) {$G_0$};

\node[hypernode] (G1) at (-1,-1) {$G_1$};

\node[hypernode] (G2) at (1,-1) {$G_2$};

\node[hypernode] (G3) at (0,-2) {$G_3$};

\node[hypernode] (G4) at (2,-2) {$G_4$};

\begin{scope}[xshift=2cm, yshift=-3.2cm, scale=0.5, local bounding box=G5box]
  \node[event] (g5i) {$\init$};
  \node[event, below=0.7cm of g5i, xshift=-1.0cm] (g5r1) {$\opr(y)$};
  \node[event, below=0.7cm of g5i, xshift=1.0cm] (g5r2) {$\opr(x)$};
  \node[event, below=0.4cm of g5r2] (g5r3) {$\opr(x)$};
  \draw[->] (g5i) -- (g5r1);
  \draw[->] (g5i) -- (g5r2);
  \draw[->] (g5r2) -- (g5r3);
\end{scope}
\node[graph box, fit=(G5box)] (G5) {};

\draw[edge-t1] (G0) -- (G1);
\draw[edge-t2] (G0) -- (G2);

\draw[edge-t2] (G1) -- (G3);
\draw[edge-t1] (G2) -- (G3);

\draw[edge-t2] (G2) -- (G4);

\draw[edge-t2] (G3) -- (G5);
\draw[edge-t1] (G4) -- (G5);
\end{tikzpicture}
\captionof{figure}{The transition system $\mathcal{T}(\ref{ex:r+rr})$}
\label{meta-graph-r+rr}
\end{minipage}%

For example, the leftmost path—where $t_1$ is scheduled first—returns $0.5$. The middle path—where $t_1$ is scheduled second—returns $1$. The rightmost path—where $t_1$ is scheduled last—returns $2$. Intuitively, the path in which all events of $t_2$ are scheduled before $t_1$ has the lowest probability of being chosen and therefore yields the largest estimate. 

We extend this to the case where $t_2$ performs $n$ reads from $x$ (with $n>2$). There is again exactly one sink, but $n{+}1$ root-to-sink paths, indexed by $k\in[0,n]$ as the number of $t_2$ events before $t_1$. The path with index $k$ is taken with probability $2^{-k-1}$ if $k\neq n$ and $2^{-n}$ if $k=n$. The estimate returned by \textbf{Algorithm P} is 
\[
r_0 = 2^{1-n}, \quad r_k = 2^{2k-n} \mbox{ for }0< k <n, \mbox{ and }r_n = 2^{n-1}
\]
The variance of \textbf{Algorithm P} on this program is $\operatorname{Var}(r)=\frac{9}{28}\,2^{n}+\frac{10}{7}\,2^{-2n}-1$, which is exponential in $n$. Since the variance is large in this very simple setting, scaling to more complex programs is not promising using this approach.
\qed
\end{tcolorbox}

\section{Unbiased Estimation from Optimal Dynamic Partial Order Reduction}
\label{trust_sec}
\subsection{The Tree of \trust} \label{trust_sec:tree}
Based on Example~\ref{example:R+RR}, one reason \textbf{Algorithm P} has high variance is that it still samples interleavings, even when the set of sinks is much smaller. To circumvent the high variance, we now define a different transition system on the space of execution graphs, that we call
$\mathcal{D}(P)$.
Our construction will ensure three properties:
(a) $\mathcal{D}(P)$ is a tree;
(b) the set of leaves of $\mathcal{D}(P)$ corresponds to the set of sink nodes of $\mathcal{T}(P)$, and so the number of leaves
of $\mathcal{D}(P)$ is precisely $C(P)$; and 
(c) the children of an internal node $G$ of $\mathcal{D}(P)$ can be constructed in polynomial time, solely by looking at $G$.

The construction of \(\mathcal{D}(P)\) over the space of execution graphs relies on a \emph{stateless optimal \dpor} technique \cite{optimal-dpor,trust}.
An optimal \dpor technique is an algorithm that traverses $\mathcal{T}(P)$, with the guarantee that each sink in the graph is visited exactly once.
It is stateless if it only uses the current path to find the successors of a node.

%

We describe the construction of $\mathcal{D}(P)$ using \trust\cite{trust}, a stateless optimal \dpor algorithm. In order to describe this construction, we first extend the definition of an execution graph $G$ to also include a total order $\inord$ on the events of $G$, called the \emph{insertion order} of $G$. This insertion order will represent the order in which events were incrementally added to $G$ (starting from the initial event $\init$). We will represent each such execution graph $G$ as a tuple of the form $(E,\rf,\mo,\inord)$. Note that any such new execution graph $G$ corresponds to a unique execution graph $G_{\textsf{del} \inord} \in \mathcal{T}(P)$ obtained by deleting the insertion order from $G$.
In addition to the new execution graphs, we also define a new relation, called the \emph{happens-before} relation  ($\porf{}$) as $(\po{} \cup \rf{})^+$.
From now on, we will use the extended definition of execution graphs throughout.
All of the definitions and properties that we used for execution graphs in $\mathcal{T}(P)$, for example, sequential consistency, carry over to the new execution graphs in a straightforward manner. 

We now describe the \trust algorithm in a declarative formulation, describing how it always produces a tree $\mathcal{D}(P)$ over (extended) sequentially consistent execution graphs from the program $P$. 
We describe the construction step-wise, by first giving some intuition on how the \trust algorithm visits each sequentially consistent execution graph of $P$ and then explaining each step formally.

Initially the \trust algorithm begins at the initial graph $G_\init = (\{\init\},\emptyset,\emptyset,\emptyset)$.
Assume that for each sequentially consistent graph $G$, the set $\nextop_{P}(G)$ is ordered---this corresponds to a fixed scheduler. Now, suppose the \trust algorithm
is currently at some sequentially consistent graph $G$. In order to determine which graph $G'$ to visit next, it first picks the minimal element $e$ from the set $\nextop_{P}(G)$. Now, $e$ could either be a read event or a write event. In case $e$ is a read event, \trust extends the graph $G$ by adding $e$ to it. To do this, it needs to decide which write event precedes the event $e$ according to the $\rf{}$ order.
To this end, let $e'_1,e'_2,\dots,e'_k$ be the set of all write events in $G$ that write to the same location that $e$ reads from and writes the same value that $e$ reads. Corresponding to each $e'_i$,
\trust creates a new graph $G'_i$ by adding $(e'_i,e)$ to the $\rf{}$ relation. Furthermore, in $G'_i$, $e$ is added as the maximum element in the insertion order $\inord$, reflecting the fact that it is the newest element that has been added. After $e$ has been added, the graph $G'_i$ need not necessarily be sequentially consistent. Thus, \trust also uses an oracle (consistency checker) to check if $G'_i$ is sequentially consistent. If it indeed turns out to be consistent, \trust then recursively continues the exploration from $G'_i$. (In the implementation, \trust first constructs $G'_1$, checks if it is consistent and if so, recursively explores from $G'_1$, then it constructs $G'_2$, checks if it is consistent and if so, recursively explores from $G'_2$ and so on.) These steps are called \emph{forward revisits - write to read} and, corresponding to them, in $\mathcal{D}(P)$, we have edges $G \xrightarrow{\fwdtikz, e, e'_i} G'_i$ for each $G'_i$ that is sequentially consistent.

Suppose $e$ is a write event. \trust can now add $e$ in two possible ways. The first way is to add $e$ so that it does not affect any of the existing read events in $G$.  To do this, \trust simply needs to decide where exactly $e$ would fit into the $\mo{}$ order. To this end, let $e'_1,\dots,e'_k$ be the
set of all write events in $G$ that write to the same location as $e$. Corresponding to each $e'_i$,
\trust creates a new graph $G'_i$ by placing $e$ just after $e'_i$ in the $\mo{}$ order, i.e., $e'_i$ and every predecessor of $e'_i$ is a predecessor of $e$ and every successor of $e'_i$ is a successor of $e$. 
Furthermore, in $G'_i$, $e$ is added as the maximum element in the insertion order $\inord$.
The $\rf{}$ relation is unchanged, because we are adding $e$ in a way so that it does not affect any of the existing read events in $G$. Then \trust checks for sequential consistency of $G'_i$ and if it is found to be consistent, it recursively explores from $G'_i$. (Once again, an implementation of \trust first constructs $G'_1$, checks if it is consistent and if so, recursively explores from $G'_1$, then it constructs $G'_2$, checks if it is consistent and if so, recursively explores from $G'_2$ and so on.) 
These steps are called \emph{forward revisits - write to write} and, corresponding to them, in  $\mathcal{D}(P)$, 
we have edges $G \xrightarrow{\fwdtikz, e, e'_i} G'_i$ for each $G_i'$ that is sequentially consistent. 

The other possible way for \trust to add $e$ is to do so in a way that it affects some existing read event $r$ in $G$, which reads the same value from the same location that $e$ is writing to. To this end, first observe that, if $r$ and $e$ are related by the $G.\porf{}$ relation, i.e., $(r,e) \in G.\porf{}$, then $e$ cannot affect $r$ at all. This is because the $\porf{}$ relation ensures that the only way a thread can perform $e$ is to have done $r$ first. So, \trust will only consider those $r$ that satisfy $(r,e) \notin G.\porf{}$. 

Now, it might happen that considering all read events $r$ that satisfy $(r,e) \notin G.\porf{}$ could cause 
redundancies in exploration: the same trace can be visited more than once. 
To avoid this, \trust further considers only those read events $r$ that satisfy a condition called \emph{maximally revisitable}. To define this, we set up some notation. Given a read event $r$ (that reads the same value from the same location that $e$ writes to), we let $D_r$ be the set of all events $e' \neq r$ that were inserted after $r$ (i.e., $(r,e') \in G.\inord$) such that $(e',e) \notin G.\porf{}$.

Now, intuitively, we say $G$ is \emph{maximally revisitable} with respect to $r$ if every event in 
$D_r \cup \{r\}$ was added to $G$ in a consecutive sequence. Formally, let $d_1,d_2,\dots,d_k$ be the elements of $D_r$ arranged according to $G.\inord$. Then, we say that $G$ is maximally revisitable with respect to $r$ if there exist graphs $H_1,H_2,\dots,H_k,H_{k+1}$ in $\mathcal{T}(P)$ 
such that $H_{k+1} = G_{\textsf{del} \inord}$, and each $H_{i+1}$ is obtained from $H_i$ by executing $d_i$.




Now \trust only considers all those read events $r$ such that $(r,e) \notin G.\porf{}$ and $r$ is maximally revisitable. For any such event $r$, it first removes the set $D_r$ completely from $G$.
Then, it adds $e$ as a write event that precedes the read event $r$ (i.e., adds $(e,r)$ to the $\rf{}$ relation). Finally, it then decides where to include $e$ in the $\mo{}$ ordering, by going over all possible write events $w$ that write to the same location as $e$ does and adding $e$ just after $w$ in the $\mo{}$ ordering. (This part is similar to the procedure described in forward revisit - write to write). Let the new graph obtained this way be $G'_{r,w}$. \trust then checks if $G'_{r,w}$ is consistent and if so, it recursively continues the exploration from $G'_{r,w}$.  Such a step by \trust is called \emph{backward revisit - write to read}. Corresponding to this, in $\mathcal{D}(P)$, we will have the edge $G \xrightarrow{\bwdtikz, e, r} G'_{r,w}$ if $G'_{r,w}$ is consistent.

This completes the definition of $\mathcal{D}(P)$. As mentioned before, it is a declarative formulation of \trust and so inherits its ``nice'' properties. In particular, the following proposition follows from~\cite{trust}, where it is shown that $\mathcal{D}(P)$ is indeed a tree.

\begin{proposition}\label{prop:trust-size}
$\mathcal{D}(P)$ is a tree.
The outdegree of any node is bounded by $O(|P|^2)$.
The number of leaves of $\mathcal{D}(P)$ is precisely $C(P)$.
The depth of $\mathcal{D}(P)$ is $O(|P|^2)$.
\end{proposition}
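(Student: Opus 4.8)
The plan is to handle the four assertions separately, exploiting the fact that $\mathcal{D}(P)$ is a declarative restatement of \trust and therefore inherits its soundness, completeness, and optimality guarantees \cite{trust}. The genuinely new content is the combinatorial bookkeeping for the outdegree and the depth, together with the translation of \trust's optimality into the statement that the reachable fragment of $\mathcal{D}(P)$ is a tree. I expect the unique-predecessor argument (for the tree property) and the bound on $\sum|D_i|$ (for the depth) to be the two delicate points.

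\emph{$\mathcal{D}(P)$ is a tree.} The root $G_\init$ has no incoming transition, and every node is reachable from $G_\init$ by construction, so it suffices to show that each node $G'\neq G_\init$ has a unique predecessor. The key device is the insertion order: in all three transition rules the freshly executed event $e$ (the minimal element of $\nextop_P$ of the source) is made $\inord$-maximal in the target, so the last-inserted event of $G'$ determines $e$. If $e$ is a read, the step must have been a forward write-to-read, and the parent is recovered by deleting $e$ together with the single $\rf{}$-edge into it. If $e$ is a write and no read of $G'$ reads from $e$, the step was a forward write-to-write, and the parent is $G'$ with $e$ and its $\mo{}$-edges removed. The subtle case is when $e$ is a write and some read $r$ (necessarily unique, since $e$ is $\inord$-maximal) reads from $e$: the step was a backward revisit, whose parent additionally contains the deleted set $D$, information that $G'$ does not retain. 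Here uniqueness is not local; it follows from \trust's optimality together with the maximally-revisitable side condition, which together guarantee that among reachable graphs exactly one backward revisit yields $G'$. I would cite \cite{trust} for this, since reproving it essentially repeats \trust's optimality proof.

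\emph{Outdegree and leaves.} Every reachable node is an execution graph of $P$, and since runs are bounded by $|P|$ it carries $O(|P|)$ events. The children of an internal node $G$ are enumerated by $\visit(G,e)$ for $e$ the minimal element of $\nextop_P(G)$: if $e$ is a read there are at most $|G.\opw_{\loc(e)}| = O(|P|)$ forward children; if $e$ is a write there are at most $O(|P|)$ forward write-to-write children, plus, for each of the $O(|P|)$ revisitable reads $r$ and each of the $O(|P|)$ writes in the restricted graph $G''$, one backward child, i.e.\ $O(|P|^2)$ in total, giving outdegree $O(|P|^2)$. A node is a leaf exactly when $\nextop_P(G)=\emptyset$: if $\nextop_P(G)\neq\emptyset$, then reading from (resp.\ appending after) the $\mo{}$-maximal write reproduces the consistent $\mathcal{T}(P)$-step and hence is a genuine child. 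Thus the leaves are precisely the maximal graphs, and $G\mapsto G_{\textsf{del}\inord}$ is a bijection between leaves of $\mathcal{D}(P)$ and maximal reachable graphs of $\mathcal{T}(P)$ — completeness of \trust gives surjectivity and optimality gives injectivity — so the number of leaves equals $C(P)$.

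\emph{Depth.} Every transition inserts exactly one event, and a backward revisit additionally deletes the set $D$; hence along a root-to-leaf path $G_\init = G_0 \to \cdots \to G_L$ the running event count obeys $|G_L.E| = 1 + L - \sum_i |D_i|$, the sum ranging over the backward steps. As $|G_L.E| \le |P|$, this yields $L \le |P| - 1 + \sum_i |D_i|$, reducing the bound to showing that the total number of events deleted along a single path is $O(|P|^2)$. This is the main obstacle, because the event count is not monotone, so deletions must be charged against a well-founded progress measure rather than counted naively. The intended argument is that each backward revisit deletes $O(|P|)$ events and that a path triggers $O(|P|)$ backward revisits — each redirecting some read to a strictly newer write in a way that the maximally-revisitable condition prevents from recurring — so that $\sum_i|D_i| = O(|P|^2)$ and hence $L = O(|P|^2)$. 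This matches the quadratic worst case noted in the introduction and mirrors \trust's complexity analysis, which I would invoke to make the charging precise.
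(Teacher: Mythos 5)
Your proposal is correct and takes essentially the same route as the paper, which states the proposition without any proof and justifies it solely by the preceding remark that $\mathcal{D}(P)$ is a declarative formulation of \trust and therefore inherits its guarantees \cite{trust}; your added bookkeeping (the outdegree count from $\visit$, the leaf characterization via the $\mo{}$-maximal forward revisit, and the identity $|G_L.E| = 1 + L - \sum_i |D_i|$) is sound, and the two points you defer---uniqueness of the backward-revisit parent and the bound on total deletions---are exactly the points the paper also delegates to \trust's optimality and complexity analysis. One caution: your intermediate claim that a branch triggers only $O(|P|)$ backward revisits is not obviously true in general (a write deleted by a later revisit from another thread can be re-added and revisit again), but since you ultimately charge the depth bound to \trust's analysis rather than to this claim, it does not undermine the argument.
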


We now present a new estimation algorithm that combines the construction of $\mathcal{D}(P)$ with \textbf{Algorithm K}.

\begin{verse}
\textbf{Algorithm T}. 
Starting at the root $G_1 := G_\init$ of $\mathcal{D}(P)$, run the following
randomized procedure.
At each node $G_i$, pick one of its $d_i$ children
uniformly at random and move there, until reaching a node $G_k$ with
no outgoing edges.
Stop and output $d_1\cdot\ldots\cdot d_{k-1}$.
\end{verse}

\begin{proposition}
\textbf{Algorithm T} is a poly-time unbiased estimator for $C(P)$.
\end{proposition}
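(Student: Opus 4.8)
The plan is to observe that \textbf{Algorithm T} is nothing more than \textbf{Algorithm K} run on the tree $\mathcal{D}(P)$ rather than on the DAG $\mathcal{T}(P)$, and then to lean entirely on the structural guarantees established in the preceding proposition. First I would invoke that proposition to record the three facts that make everything work: $\mathcal{D}(P)$ is a tree; its leaves are exactly the sinks of $\mathcal{T}(P)$, so it has precisely $C(P)$ leaves; and both its depth and its outdegree are $O(|P|^2)$. I would also recall the remark made just before the algorithm, namely that the children of a node of $\mathcal{D}(P)$ can be computed in polynomial time looking only at the node itself.

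For unbiasedness, I would repeat the analysis already given for \textbf{Algorithm K}. Fix any root-to-leaf path $G_1, \ldots, G_k$ and let $d_i$ denote the number of children of $G_i$. Because $\mathcal{D}(P)$ is a tree, this is the \emph{unique} path from the root $G_\init$ to the leaf $G_k$, so the random walk realizes it exactly when it makes the corresponding choice at each of $G_1, \ldots, G_{k-1}$. Since each choice is uniform and independent, the walk follows this path with probability $p(G_1, \ldots, G_k) = 1/(d_1 d_2 \cdots d_{k-1})$, while the value output on it is exactly $d_1 \cdots d_{k-1} = 1/p(G_1, \ldots, G_k)$. Hence each leaf contributes $p \cdot (1/p) = 1$ to the expectation, and summing over all root-to-leaf paths shows the expected output equals the number of leaves, which is $C(P)$.

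For the running-time bound, I would argue that a single run is monotone (it only descends) and therefore touches at most $O(|P|^2)$ nodes, this being the depth of $\mathcal{D}(P)$. At each visited node $G_i$ the algorithm must enumerate the children to compute $d_i$ and to sample one uniformly; by the cited remark this is polynomial-time and depends only on $G_i$ (concretely, the children correspond to the elements of $\visit(G_i, e)$ for the minimal enabled event $e$, together with the maximal-revisitability checks, all computable from $G_i$ alone), and there are at most $O(|P|^2)$ of them. Multiplying the per-node cost by the polynomial path length yields a polynomial total running time.

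I expect the only genuine subtlety to be the \emph{tree} property, which is precisely what separates \textbf{Algorithm T} from the naive application of \textbf{Algorithm K} to $\mathcal{T}(P)$ that overcounts---and this property is supplied by the previous proposition rather than proved here. Once we may treat $\mathcal{D}(P)$ as a tree with $C(P)$ leaves whose branching is locally computable, the statement follows with no new ideas; all the technical content lives in the construction of $\mathcal{D}(P)$ and in verifying that the \trust transitions out of a node depend only on that node.
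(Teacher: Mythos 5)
Your proposal is correct and follows essentially the same route as the paper: unbiasedness is inherited from the \textbf{Algorithm K} argument applied to the tree $\mathcal{D}(P)$ (whose leaf count is $C(P)$ by the preceding proposition), and the polynomial running time follows from the quadratic depth bound together with the fact that the children of a node are computable in polynomial time from that node alone. Your write-up merely spells out the path-probability calculation and per-node cost accounting that the paper leaves implicit.
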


Unbiasedness follows from the argument for \textbf{Algorithm K}.
The algorithm is poly-time because each execution graph is polynomial in the size of the program,
the depth of the tree $\mathcal{D}(P)$ is quadratically 
bounded and a path in the tree can be constructed purely by performing polynomial time operations
on the execution graphs along the path.

\begin{tcolorbox}[
    colback=SkyBlue!5,
    colframe=blue!20,
    boxrule=1pt,
    arc=2pt,
    width=\textwidth,
    breakable
]

\begin{example}
\cref{fig:r+w+w-knuth} shows $\mathcal{D}(\ref{ex:r+w+w})$. 
Notice that the tree is \emph{different} from $\mathcal{T}(P)$,
even though both are trees.
When we run \textbf{Algorithm T}, it picks the leaves 
with probabilities $1/6$, $1/6$, $1/12$, $1/12$, $1/4$, and $1/4$,
respectively.
The expectation is 6, which is equal to $C(P)$.

For this example, \textbf{Algorithm T} has higher variance than running 
\textbf{Algorithm K} on $\mathcal{T}(P)$.
However, the advantage is that \textbf{Algorithm T} works on programs for which
$\mathcal{T}(P)$ is not a tree.
For example, \textbf{Algorithm T} on both the programs \ref{ex:r+r+r} and \ref{ex:r+rr} returns 1. For both of these programs, their respective trees, $\mathcal{D}$(\ref{ex:r+r+r}) and $\mathcal{D}$(\ref{ex:r+rr}) contain exactly one
path and so the variance is 0. 
\qed
\end{example}
\centering
\begin{tikzpicture}[
    >=stealth,
    node distance=1.4cm and 1.8cm,
    event/.style={draw=none, font=\normalsize},
    rf/.style={->, thick, dashed, green!60!black},
    co/.style={->, thick, dotted, orange},
    hyper edge/.style={->, very thick, blue!70!black},
    graph box/.style={draw, rectangle, rounded corners=4pt, very thick, minimum width=1.4cm, minimum height=1cm, inner sep=0.1cm, minimum size=0cm},
graph box yellow/.style={graph box, dashed},
graph box blue/.style={graph box},
    scale=0.31,
    every node/.style={transform shape},
    edge-t1/.style={->, very thick, lime!60!black},
    edge-t2/.style={->, very thick, red!70!black},
    circ-label-t1/.style={text=black, inner sep=1pt, font=\normalsize, scale=2.5},
circ-label-t2/.style={text=black, inner sep=1pt, font=\normalsize, scale=2.5}
]

\pgfdeclarelayer{background}
\pgfsetlayers{background,main}

\begin{scope}[xshift=0, yshift=0, local bounding box=G1box]
  \node[event] (g1i) {init};
  \node[event, below=0.7cm of g1i, xshift=-1.5cm] (g1r1) {$\opr(x)$};
  \draw[->] (g1i) -- node[midway, right, font=\small] {} (g1r1);
  \draw[rf] (g1i) to[bend right=20] (g1r1);
\end{scope}
\begin{pgfonlayer}{background}
\node[graph box yellow, fit=(G1box)] (G1) {};
\end{pgfonlayer}

\begin{scope}[xshift=-8.5cm, yshift=-3cm, local bounding box=G2box]
  \node[event] (g2i) {init};
  \node[event, below=0.7cm of g2i, xshift=-1.5cm] (g2r1) {$\opr(x)$};
  \node[event, below=0.7cm of g2i] (g2w2) {$\opw(x)$};
  \draw[->] (g2i) -- node[midway, right, font=\small] {}(g2r1);
  \draw[->] (g2i) -- node[midway, right, font=\small] {}(g2w2);
  \draw[rf] (g2i) to[bend right=20] (g2r1);
\end{scope}
\begin{pgfonlayer}{background}
\node[graph box yellow, fit=(G2box)] (G2) {};
\end{pgfonlayer}

\begin{scope}[xshift=10cm, yshift=-3cm, local bounding box=G8box]
  \node[event] (g8i) {init};
  \node[event, below=0.7cm of g8i, xshift=-1.5cm] (g8r1) {$\opr(x)$};
  \node[event, below=0.7cm of g8i] (g8w2) {$\opw(x)$};
  \draw[->] (g8i) -- node[midway, left, font=\small] {}(g8r1);
  \draw[->] (g8i) -- node[midway, right, font=\small] {}(g8w2);
  \draw[rf] (g8w2) to[] (g8r1);
\end{scope}
\begin{pgfonlayer}{background}
\node[graph box yellow, fit=(G8box)] (G8) {};
\end{pgfonlayer}

\begin{scope}[xshift=-14.3cm, yshift=-7cm, local bounding box=G3box]
  \node[event] (g3i) {init};
  \node[event, below=0.7cm of g3i, xshift=-1.5cm] (g3r1) {$\opr(x)$};
  \node[event, below=0.7cm of g3i] (g3w2) {$\opw(x)$};
  \node[event, below=0.7cm of g3i, xshift=1.5cm] (g3w3) {$\opw(x)$};
  \draw[->] (g3i) -- node[midway, left, font=\small] {}(g3r1);
  \draw[->] (g3i) -- node[midway, right, font=\small] {}(g3w2);
  \draw[->] (g3i) -- node[midway, right, xshift=0.3cm, font=\small] {}(g3w3);
  \draw[rf] (g3i) to[bend right=20] (g3r1);
  \draw[co] (g3w2) -- (g3w3);
\end{scope}
\begin{pgfonlayer}{background}
\node[graph box blue, fit=(G3box)] (G3) {};
\end{pgfonlayer}

\begin{scope}[xshift=-9.2cm, yshift=-7cm, local bounding box=G4box]
  \node[event] (g4i) {init};
  \node[event, below=0.7cm of g4i, xshift=-1.5cm] (g4r1) {$\opr(x)$};
  \node[event, below=0.7cm of g4i] (g4w2) {$\opw(x)$};
  \node[event, below=0.7cm of g4i, xshift=1.5cm] (g4w3) {$\opw(x)$};
  \draw[->] (g4i) -- node[midway, left, font=\small] {}(g4r1);
  \draw[->] (g4i) -- node[midway, right, font=\small] {}(g4w2);
  \draw[->] (g4i) -- node[midway, right, xshift=0.3cm, font=\small] {}(g4w3);
  \draw[rf] (g4i) to[bend right=20] (g4r1);
  \draw[co] (g4w3) -- (g4w2);
\end{scope}
\begin{pgfonlayer}{background}
\node[graph box blue, fit=(G4box)] (G4) {};
\end{pgfonlayer}

\begin{scope}[xshift=-4cm, yshift=-7cm, local bounding box=G5box]
  \node[event] (g5i) {init};
  \node[event, below=0.7cm of g5i, xshift=-1.5cm] (g5r1) {$\opr(x)$};
  \node[event, below=0.7cm of g5i, xshift=1.5cm] (g5w3) {$\opw(x)$};
  \draw[->] (g5i) -- node[midway, left, font=\small] {}(g5r1);
  \draw[->] (g5i) -- node[midway, right, xshift=0.3cm, font=\small] {}(g5w3);
  \draw[rf] (g5w3) to[] (g5r1);
\end{scope}
\begin{pgfonlayer}{background}
\node[graph box yellow, fit=(G5box)] (G5) {};
\end{pgfonlayer}

\begin{scope}[xshift=-7cm, yshift=-11cm, local bounding box=G6box]
  \node[event] (g6i) {init};
  \node[event, below=0.7cm of g6i, xshift=-1.5cm] (g6r1) {$\opr(x)$};
  \node[event, below=0.7cm of g6i] (g6w2) {$\opw(x)$};
  \node[event, below=0.7cm of g6i, xshift=1.5cm] (g6w3) {$\opw(x)$};
  \draw[->] (g6i) -- node[midway, left, font=\small] {}(g6r1);
  \draw[->] (g6i) -- node[midway, right, font=\small] {}(g6w2);
  \draw[->] (g6i) -- node[midway, right, xshift=0.3cm, font=\small] {}(g6w3);
  \draw[rf] (g6w3) to[bend left=20] (g6r1);
  \draw[co] (g6w2) -- (g6w3);
\end{scope}
\begin{pgfonlayer}{background}
\node[graph box blue, fit=(G6box)] (G6) {};
\end{pgfonlayer}

\begin{scope}[xshift=-1cm, yshift=-11cm, local bounding box=G7box]
  \node[event] (g7i) {init};
  \node[event, below=0.7cm of g7i, xshift=-1.5cm] (g7r1) {$\opr(x)$};
  \node[event, below=0.7cm of g7i] (g7w2) {$\opw(x)$};
  \node[event, below=0.7cm of g7i, xshift=1.5cm] (g7w3) {$\opw(x)$};
  \draw[->] (g7i) -- node[midway, left, font=\small] {}(g7r1);
  \draw[->] (g7i) -- node[midway, right, font=\small] {}(g7w2);
  \draw[->] (g7i) -- node[midway, right, xshift=0.3cm, font=\small] {}(g7w3);
  \draw[rf] (g7w3) to[bend left=20] (g7r1);
  \draw[co] (g7w3) -- (g7w2);
\end{scope}
\begin{pgfonlayer}{background}
\node[graph box blue, fit=(G7box)] (G7) {};
\end{pgfonlayer}

\begin{scope}[xshift=5cm, yshift=-7cm, local bounding box=G9box]
  \node[event] (g9i) {init};
  \node[event, below=0.7cm of g9i, xshift=-1.5cm] (g9r1) {$\opr(x)$};
  \node[event, below=0.7cm of g9i] (g9w2) {$\opw(x)$};
  \node[event, below=0.7cm of g9i, xshift=1.5cm] (g9w3) {$\opw(x)$};
  \draw[->] (g9i) -- node[midway, left, font=\small] {}(g9r1);
  \draw[->] (g9i) -- node[midway, right, font=\small] {}(g9w2);
  \draw[->] (g9i) -- node[midway, right, xshift=0.3cm, font=\small] {}(g9w3);
  \draw[rf] (g9w2) to[] (g9r1);
  \draw[co] (g9w2) -- (g9w3);
\end{scope}
\begin{pgfonlayer}{background}
\node[graph box blue, fit=(G9box)] (G9) {};
\end{pgfonlayer}

\begin{scope}[xshift=13.5cm, yshift=-7cm, local bounding box=G10box]
  \node[event] (g10i) {init};
  \node[event, below=0.7cm of g10i, xshift=-1.5cm] (g10r1) {$\opr(x)$};
  \node[event, below=0.7cm of g10i] (g10w2) {$\opw(x)$};
  \node[event, below=0.7cm of g10i, xshift=1.5cm] (g10w3) {$\opw(x)$};
  \draw[->] (g10i) -- node[midway, left, font=\small] {}(g10r1);
  \draw[->] (g10i) -- node[midway, right, font=\small] {}(g10w2);
  \draw[->] (g10i) -- node[midway, right, xshift=0.3cm, font=\small] {}(g10w3);
  \draw[rf] (g10w2) to[] (g10r1);
  \draw[co] (g10w3) -- (g10w2);
\end{scope}
\begin{pgfonlayer}{background}
\node[graph box blue, fit=(G10box)] (G10) {};
\end{pgfonlayer}


\draw[edge-t1] (G1) -- node[above, yshift=8pt, circ-label-t1] {$\fwdtikz$} (G2);

\draw[edge-t2] (G1) -- node[above, yshift=8pt, circ-label-t2] {$\bwdtikz$} (G8);

\draw[edge-t1] (G2) -- node[left, yshift=20pt, circ-label-t1] {$\fwdtikz$} (G3);

\draw[edge-t1] (G2) -- node[right, xshift=8pt, circ-label-t1] {$\fwdtikz$} (G4);

\draw[edge-t2] (G2) -- node[right, yshift=20pt, circ-label-t2] {$\bwdtikz$} (G5);

\draw[edge-t1] (G8) -- node[left, yshift=20pt, circ-label-t1] {$\fwdtikz$} (G9);

\draw[edge-t1] (G8) -- node[right, yshift=20pt, circ-label-t1] {$\fwdtikz$} (G10);

\draw[edge-t1] (G5) -- node[left, yshift=5pt, xshift=-10pt, circ-label-t1] {$\fwdtikz$} (G6);

\draw[edge-t1] (G5) -- node[right, yshift=5pt, xshift=10pt, circ-label-t1] {$\fwdtikz$} (G7);

\end{tikzpicture}
\captionof{figure}{The transition system $\mathcal{D}(\ref{ex:r+w+w})$}
\label{fig:r+w+w-knuth}
\end{tcolorbox}

\subsubsection*{Some Remarks.} Before we proceed to further analyse \textbf{Algorithm T}, we make a few general remarks. First, we note that while there are different algorithms for optimal \dpor,  some algorithms maintain additional history-dependent state as the algorithm proceeds 
(e.g., the wakeup-trees of \cite{optimal-dpor}).
We cannot always turn those algorithms into an unbiased estimator because we require that a random sample can be constructed independently of the search history.
Maintaining a search history would require exponential time.
Also, if the additional state maintained by an optimal \dpor algorithm can grow exponentially large, sampling using such algorithms can lead to an exponential computational cost, in contrast to sampling using \trust. 

Finally, since \trust is an \emph{optimal} \dpor algorithm, the result returned by \textbf{Algorithm T} on a program $P$ can be used to track progress and forecast the remaining time that \emph{any} DPOR-based model checker must take on the program $P$. This is because, any DPOR-based model checker must explore at least as many complete execution graphs as \trust does. 

With regards to comparing \textbf{Algorithm K} and \textbf{Algorithm P} with \textbf{Algorithm T}, since \textbf{Algorithm T} uses \trust to construct the underlying tree on which it samples a path, it inherits many of its nice properties. For instance, consider any program $P$, where each thread performs only read operations. 
For such programs, it is clear that there is exactly one complete execution graph. 
From the description of the \trust algorithm above, it is clear that \trust, on the program $P$, only performs \emph{forward revisits - write to read} steps. Furthermore, it also follows by construction that $\mathcal{D}(P)$ in such a case will be simply a single path from the root. Hence, $\textbf{Algorithm T}$ will correctly report $C(P)$ to be 1, i.e., $\textbf{Algorithm T}$ will have zero variance in such cases.
On the other hand, as we have seen by the examples, \ref{ex:r+r+r} and \ref{ex:r+rr}, both \textbf{Algorithm K} and \textbf{Algorithm P} can have exponential variance even for such simple cases.

Finally, for the sake of readability, we have described \textbf{Algorithm T} for sequentially consistent executions. However, it can also be made to work for any memory model that satisfies a set of requirements as specified by the \trust algorithm, including TSO, PSO, and RC11~\cite{trust}. Indeed, the properties of the \trust algorithm ensure that we can get a poly-time unbiased estimators for all these memory models as well.


\subsection{The Problem of Variance}

As mentioned before, \textbf{Algorithm T} runs in time polynomial in the size of the program and its output is a random variable whose
expected value is $C(P)$.
However, the variance of the output can be (exponentially) large and we may require exponentially many samples
to estimate $C(P)$.
One might expect that this might be the case because $C(P)$ can be exponential in the size of $P$. However, the variance can be high even when the tree has few (polynomially many) leaves: if the tree is ``skewed,''
the probability of traversing a long path may be exponentially small and this can push the variance up. We now demonstrate an example for this phenomenon.


\begin{tcolorbox}[
    colback=SkyBlue!5,
    colframe=blue!20,
    boxrule=1pt,
    arc=2pt,
    width=\textwidth,
    boxsep=5pt,
    left=5pt,
    right=5pt,
    top=5pt,
    bottom=5pt
]

\noindent
\begin{minipage}[b]{0.4\textwidth}
\begin{example}[Hairbrush]
\label{ex:hairbrush}
Consider the following program
\begin{equation*}
  \tag{\textsc{r+nw}}
  \label{ex:r+nw}
  \inarr{ \readInst{a}{x} }{
\writeInst{x}{1} \\
\writeInst{x}{2} \\
\ldots \\
\writeInst{x}{n}
}
\end{equation*}
The tree $\mathcal{D}(\mathsf{R+NW})$ is ``skewed to the right.''
\textbf{Algorithm T} reaches the rightmost execution graph with probability $1/2^n$.
The expectation is $n+1$ but the variance is 
\[
\sum_{i=1}^{n}\frac{1}{2^i}2^{2i} + \frac{1}{2^n}2^{2n} = 2^{n+1} + 2^n - 2
\]
\qed
\end{example}
\end{minipage}%
\hfill
\begin{minipage}[b]{0.6\textwidth}
\centering
\vspace{0pt}
\begin{tikzpicture}[
    scale=0.7,
    transform shape,
    >=stealth,
    node distance=2.2cm and 2.5cm,
    hypernode/.style={draw, rectangle, rounded corners=4pt, very thick, inner sep=3pt, font=\small\bfseries},
    event/.style={draw=none, font=\normalsize},
    rf/.style={->, thick, dashed, green!60!black},
    co/.style={->, thick, dotted, red!80!black},
    graph box/.style={draw, rectangle, rounded corners=4pt, thick, minimum width=1.4cm, minimum height=1cm, inner sep=0.1cm, minimum size=0cm},
    edge label/.style={midway, very thick, fill=white, font=\footnotesize, inner sep=1pt},
    graph box yellow/.style={graph box, dashed},
    graph box blue/.style={graph box},
    every node/.style={transform shape},
    edge-t1/.style={->, very thick, lime!60!black},
    edge-t2/.style={->, very thick, red!70!black},
    circ-label-t1/.style={text=black, inner sep=1pt, font=\normalsize, scale=1.2},
    circ-label-t2/.style={text=black, inner sep=1pt, font=\normalsize, scale=1.2}
]

\node[hypernode] (G0) at (0,0) {$G_0$};

\node[hypernode] (G1) at (1.5,-1.5) {$G_1$};

\node[hypernode] (G2) at (3,-3) {$G_2$};

\begin{scope}[xshift=-1.9cm, yshift=-1.5cm, scale=0.5, local bounding box=G3box]
  \node[event] (g3i) {$\init$};
  \node[event, below=0.7cm of g3i, xshift=-1.6cm] (g3r1) {$\opr(x)$};
  \node[event, below=0.7cm of g3i] (g3w2) {$\opw_1(x)$};
  
  \node[event, below=1.1cm of g3i, font=\Large] (g3dots) {$\vdots$};
  
  \draw[->] (g3i) -- node[midway, left] {} (g3r1);
  \draw[->] (g3i) -- node[midway, right, font=\small] {}(g3w2);
  \draw[rf] (g3i) to[bend right=20] (g3r1);
\end{scope}
\node[graph box, fit=(G3box)] (G3) {};

\begin{scope}[xshift=-0.4cm, yshift=-3cm, scale=0.5, local bounding box=G4box]
  \node[event] (g4i) {$\init$};
  \node[event, below=0.7cm of g4i, xshift=-1.6cm] (g4r1) {$\opr(x)$};
  \node[event, below=0.7cm of g4i] (g4w2) {$\opw_1(x)$};
  \node[event, below=1.8cm of g4i] (g4w3) {$\opw_2(x)$};
  
  \node[event, below=2.9cm of g4i, font=\Large] (g4dots) {$\vdots$};
  
  \draw[->] (g4i) -- node[midway, left] {} (g4r1);
  \draw[->] (g4i) -- node[midway, right, font=\small] {}(g4w2);
  \draw[->] (g4w2) -- (g4w3);
  \draw[->] (g4w3) -- (g4dots);
  \draw[rf] (g4w2) to (g4r1);
\end{scope}
\node[graph box, fit=(G4box)] (G4) {};

  
  
  
  
  

\begin{scope}[xshift=3cm, yshift=-6cm, scale=0.5, local bounding box=Gmbox]
  \node[event] (gmi) {$\init$};
  
  \node[event, below=0.7cm of gmi, xshift=-1.6cm] (gmr1) {$\opr(x)$};
  hair-brush-rnw
  \node[event, below=-0.2cm of gmi, font=\Large] (gmdots2) {$\vdots$};
  
  \node[event, below=0.7cm of gmdots2] (gmw2) {$\opw_{n-1}(x)$};
  \node[event, below=1.8cm of gmdots2] (gmw3) {$\opw_{n}(x)$};
  
  
  \draw[->] (gmi) -- node[midway, left] {} (gmr1);
  \draw[->] (gmdots2) -- node[midway, right, font=\small] {}(gmw2);
  \draw[->] (gmw2) -- (gmw3);
  \draw[rf] (gmw2) to[] (gmr1);
\end{scope}
\node[graph box, fit=(Gmbox)] (Gm) {};

\begin{scope}[xshift=6.8cm, yshift=-6cm, scale=0.5, local bounding box=Gnbox]
  \node[event] (gni) {$\init$};
  
  \node[event, below=0.7cm of gni, xshift=-1.6cm] (gnr1) {$\opr(x)$};
  
  \node[event, below=-0.2cm of gni, font=\Large] (gndots2) {$\vdots$};
  
  \node[event, below=0.7cm of gndots2] (gnw2) {$\opw_{n-1}(x)$};
  \node[event, below=1.8cm of gndots2] (gnw3) {$\opw_{n}(x)$};
  
  
  \draw[->] (gni) -- node[midway, left] {} (gnr1);
  \draw[->] (gndots2) -- node[midway, right, font=\small] {}(gnw2);
  \draw[->] (gnw2) -- (gnw3);
  \draw[rf] (gnw3) to[bend left=20] (gnr1);
\end{scope}
\node[graph box, fit=(Gnbox)] (Gn) {};

\node[font=\Huge] (g2dots2) at (1.5, -4.7) {$\vdots$};

\node[font=\Huge] (g2dots3) at (4.5,-4.5) {$\vdots$};


\draw[edge-t2] (G0) -- node[above, yshift=3pt, xshift=7pt, circ-label-t2] {$\bwdtikz$} (G1);

\draw[edge-t1] (G0) -- node[above, yshift=8pt, circ-label-t1] {$\fwdtikz$} (G3);

\draw[edge-t2] (G1) -- node[above, yshift=3pt, xshift=7pt, circ-label-t2] {$\bwdtikz$} (G2);

\draw[edge-t1] (G1) -- node[above, yshift=8pt, circ-label-t1] {$\fwdtikz$} (G4);

\draw[edge-t2] (G2) -- node[above, yshift=3pt, xshift=7pt, circ-label-t2] {$\bwdtikz$} (g2dots3);

\draw[edge-t1] (G2) -- node[above, yshift=8pt, circ-label-t1] {$\fwdtikz$} (g2dots2);

\draw[edge-t2] (g2dots3) -- node[above, yshift=3pt, xshift=7pt, circ-label-t2] {$\bwdtikz$} (Gn);

\draw[edge-t1] (g2dots3) -- node[above, yshift=8pt, circ-label-t1] {$\fwdtikz$} (Gm);

\end{tikzpicture}
\captionof{figure}{The transition system $\mathcal{D}(\ref{ex:r+nw})$}
\label{hair-brush-rnw}
\end{minipage}%
\end{tcolorbox}

A simple heuristic to get better estimates is to always schedule writes first.
This strategy does reduce the variance for this example.
However, consider a small variation
in which the second thread starts with a read of a different location: we are back to an exponential variance.

\subsection{Stochastic Enumeration}
\label{sec:se}
In order to reduce the variance of \textbf{Algorithm T}, we use \emph{stochastic enumeration} \cite{rubinstein,vaisman2017}.
Stochastic enumeration runs multiple copies of \textbf{Algorithm T} in parallel, but couples the evolution of the parallel branches.

\begin{verse}
    \textbf{Algorithm S}. Given the tree $\mathcal{D}(P)$ and a budget $B\geq 1$, start at the root $H_1 = \set{G_\init}$ and run the following
    randomized procedure. 
    Suppose the current set of nodes is $H_i$, and let $S(H_i) = \cup_{G\in H_i} \set{G' \mid G' \mbox{ is a child of }G \mbox{ in } \mathcal{D}(P)}$.

    \noindent If $S(H_i) = \emptyset$, let $c(H_1), c(H_2), \dots, c(H_{i-1})$ be the number of leaves in the sets $H_1,H_2,\dots,H_{i-1}$. Then stop and return 
    \[
    \frac{c(H_1)}{|H_1|} + \frac{|S(H_1)|}{|H_1|}\frac{c(H_2)}{|H_2|} + \ldots + \left( \prod_{1\leq j\leq i-1} \frac{|S(H_j)|}{|H_j|}\right)\frac{c(H_{i})}{|H_{i}|}
    \]
    If $0 < |S(H_i)| \leq B$, set $H_{i+1} = S(H_i)$, and otherwise, let $H_{i+1}$ be a subset of $S(H_i)$ of size $B$ picked uniformly at random among
    all subsets of size $B$.
    Continue with $H_{i+1}$.
\end{verse}

\begin{proposition}
    For a program $P$ and a budget $B$ in unary, \textbf{Algorithm S} is a poly-time unbiased estimator for $C(P)$.
\end{proposition}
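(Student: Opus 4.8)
The plan is to prove unbiasedness by exhibiting a martingale whose value at the (bounded) stopping level equals the quantity returned by \textbf{Algorithm S}, and whose value at the root equals $C(P)$. For a node $G$ of $\mathcal{D}(P)$, let $\nu(G)$ denote the number of leaves of the subtree rooted at $G$; since $\mathcal{D}(P)$ is a tree with root $G_\init$, we have $\nu(G_\init)=C(P)$, and for every internal node $G$, $\nu(G)=\sum_{G'\text{ child of }G}\nu(G')$. Writing $W_i=\prod_{1\le k\le i-1}\frac{|S(H_k)|}{|H_k|}$ for the accumulated weight (so the returned value is the sum $Z=\sum_i W_i\,\frac{c(H_i)}{|H_i|}$ of the per-level contributions), I would define the potential $R_i=W_i\cdot\frac{1}{|H_i|}\sum_{G\in H_i}\nu(G)$ and the running quantity $M_i=\sum_{k<i}W_k\frac{c(H_k)}{|H_k|}+R_i$, adapted to the filtration $\mathcal{F}_i=\sigma(H_1,\dots,H_i)$.

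The heart of the argument is a one-step identity: conditioned on $H_i$,
\[
\mathbb{E}\!\left[R_{i+1}\mid \mathcal{F}_i\right]=W_i\cdot\frac{1}{|H_i|}\sum_{G'\in S(H_i)}\nu(G').
\]
To see this, recall $W_{i+1}=W_i\frac{|S(H_i)|}{|H_i|}$, so $R_{i+1}=W_i\frac{|S(H_i)|}{|H_i|}\cdot\frac{1}{|H_{i+1}|}\sum_{G'\in H_{i+1}}\nu(G')$. If $|S(H_i)|\le B$ then $H_{i+1}=S(H_i)$ deterministically and the claim is immediate. If $|S(H_i)|>B$ then $H_{i+1}$ is a uniform $B$-subset of $S(H_i)$, so each $G'\in S(H_i)$ lies in $H_{i+1}$ with probability $B/|S(H_i)|$; hence $\mathbb{E}[\sum_{G'\in H_{i+1}}\nu(G')\mid\mathcal{F}_i]=\frac{B}{|S(H_i)|}\sum_{G'\in S(H_i)}\nu(G')$, and since $|H_{i+1}|=B$ the factor $|S(H_i)|/B$ in $R_{i+1}$ cancels the inclusion probability $B/|S(H_i)|$, giving the identity. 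This is the step where the uniform-subset resampling, together with the tree property (distinct nodes of $H_i$ have disjoint child-sets, so $S(H_i)$ is a genuine set and $\sum_{G'\in S(H_i)}\nu(G')=\sum_{G\in H_i\text{ internal}}\nu(G)$ with no double counting), is essential; on a DAG the identity would fail.

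Combining the identity with $\nu(G)=1$ for leaves gives $\mathbb{E}[M_{i+1}\mid\mathcal{F}_i]=M_i$, so $(M_i)$ is a martingale: indeed $W_i\frac{c(H_i)}{|H_i|}+\mathbb{E}[R_{i+1}\mid\mathcal{F}_i]=W_i\frac{1}{|H_i|}\sum_{G\in H_i}\nu(G)=R_i$, collapsing the update. Since the depth of $\mathcal{D}(P)$ is $O(|P|^2)$ by the structural proposition for $\mathcal{D}(P)$, the stopping level $\tau$ (the first $i$ with $S(H_i)=\emptyset$) is bounded, the sum is finite, and $\mathbb{E}[M_\tau]=M_1=R_1=\nu(G_\init)=C(P)$. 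At level $\tau$ every node of $H_\tau$ is a leaf, so $R_\tau=W_\tau\frac{c(H_\tau)}{|H_\tau|}$ coincides with the final contribution and $M_\tau$ equals exactly the returned value $Z$; therefore $\mathbb{E}[Z]=C(P)$.

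Finally, for poly-time: the number of levels is $\tau=O(|P|^2)$; at each level $|H_i|\le B$ and, by the structural proposition, the children of each node are computable in time $\mathrm{poly}(|P|)$ by inspecting only that node, so $S(H_i)$ and a uniform $B$-subset are obtained in time $\mathrm{poly}(|P|,B)$; the accumulated weights are products of $O(|P|^2)$ rationals with numerators and denominators bounded by $(|P|^2)^{O(|P|^2)}$, hence of polynomial bit-length. As $B$ is given in unary, the whole run is polynomial. The main obstacle is the one-step identity—specifically verifying that the random $B$-subset resampling keeps the estimator unbiased via the exact inclusion probability $B/|S(H_i)|$; once that coupling is pinned down, the telescoping and the complexity bound are routine given the tree structure and depth bound already established for $\mathcal{D}(P)$.
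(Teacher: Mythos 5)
Your proof is correct, but it takes a different route from the paper: the paper's entire argument for unbiasedness is a citation to the stochastic-enumeration literature \cite{rubinstein,vaisman2017}, with the poly-time claim discharged by reusing the argument for \textbf{Algorithm T}, whereas you reconstruct the unbiasedness proof from scratch. Your martingale $M_i$ with potential $R_i = W_i\cdot\frac{1}{|H_i|}\sum_{G\in H_i}\nu(G)$ is essentially the mechanism underlying the cited results, and your one-step identity is verified correctly in both regimes (deterministic expansion when $|S(H_i)|\le B$, and exact cancellation of the inclusion probability $B/|S(H_i)|$ against the weight factor $|S(H_i)|/B$ under uniform $B$-subset resampling). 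What your version buys is that it makes explicit two points the paper leaves implicit: first, exactly where the tree structure of $\mathcal{D}(P)$ is needed (distinct nodes have disjoint child sets, so $\sum_{G'\in S(H_i)}\nu(G') = \sum_{G\in H_i}\nu(G) - c(H_i)$ with no double counting --- the identity that fails on the DAG $\mathcal{T}(P)$, which is precisely why \textbf{Algorithm P} was needed there); and second, that optional stopping is legitimate because $\tau$ is deterministically bounded by the $O(|P|^2)$ depth of $\mathcal{D}(P)$, since every node of $H_i$ sits at depth $i-1$. One point worth flagging: your returned value $Z=\sum_{k\le\tau}W_k\,c(H_k)/|H_k|$ includes the terminal level's contribution $W_\tau\,c(H_\tau)/|H_\tau|$, while the formula displayed in the paper's statement of \textbf{Algorithm S} stops the sum at $c(H_{i-1})$. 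Your reading is the right one --- it is the only one consistent with the hairbrush example (value $n+1$ with variance $0$) and with unbiasedness itself, so the displayed formula has an off-by-one typo rather than your proof having a gap. The poly-time portion of your argument coincides with the paper's, up to your slightly loose but harmless bound on the bit-length of the accumulated weights (each factor is at most $O(B|P|^2)$, so the product has bit-length polynomial in $|P|$ and the unary budget $B$).
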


The fact that it is an unbiased estimator follows from~\cite{rubinstein,vaisman2017}.
By exactly the same argument employed for \textbf{Algorithm T}, we can show that \textbf{Algorithm S} is poly-time in the size of $P$ and $B$.

We make a few simple remarks.
First, \textbf{Algorithm S} with a budget $B = 1$ is exactly \textbf{Algorithm T}.
Second, as the following example demonstrates, when $B > 1$, the algorithm is not the same as $B$ independent executions of \textbf{Algorithm T}.

\begin{tcolorbox}[
    colback=SkyBlue!5,
    colframe=blue!20,
    boxrule=1pt,
    arc=2pt,
    width=\textwidth,
    breakable
]
\begin{example}
For the hairbrush tree from \cref{ex:hairbrush}, we notice that with a budget $B = 2$,
we have $H_0 = \set{G_\init}$ and for each $1\leq i \leq n$, we have that $H_i$ consists of two nodes: the leaf node and the internal node at depth $i$.
Hence \textbf{Algorithm S} has exactly one run on this tree.
The value of the run, which is also the expected value, is $n+1$ and the variance is $0$!
More generally, if the maximal number of leaves at any level of $\mathcal{D}(P)$ is bounded by $B$, then \textbf{Algorithm S} with budget $B$
has exactly one run and variance 0.
\qed
\end{example}

\end{tcolorbox}

Finally, while stochastic enumeration is a general technique for reducing the variance of unbiased estimators, we cannot apply it to \textbf{Algorithm K} on arbitrary programs $P$, since as we have seen, \textbf{Algorithm K} need not be an unbiased estimator for programs $P$ where $\mathcal{T}(P)$ is not a tree.
Furthermore, since the correctness of stochastic enumeration (as discussed in \cite{rubinstein,vaisman2017}) depends on the underlying search space being a tree, it cannot be applied to \textbf{Algorithm P}.

\subsection{Optimization: Tree Compression}

We implement \textbf{Algorithm S} by a simple modification of the iterative \trust algorithm, where we keep 
a frontier set of up to $B$ nodes.
In each step of the iteration, we compute the successor nodes of the frontier set as in \trust, update the estimates,
and then pick a random subset of size $B$ of the successor nodes.
(If the successor set has size at most $B$, we retain the entire set.)
This incurs a cost in cloning execution graphs and re-execution over the \trust algorithm, since we cannot simply use depth first traversal.

We perform a simple tree compression optimization.
During the execution of \trust, it is often the case that a node has only a single successor.
In such cases, rather than returning this successor immediately,  
the algorithm continues until it encounters either multiple successors or a leaf node.
This technique reduces the effective size of the exploration tree, and avoids redundant cloning and re-execution.
Importantly, tree compression does not eliminate any leaf from the original tree, preserving the correctness of the estimation.
\begin{tcolorbox}[
    colback=SkyBlue!5,
    colframe=blue!20,
    boxrule=1pt,
    arc=2pt,
    width=\textwidth,
    breakable
]
  \begin{minipage}[t]{0.48\textwidth}
    \centering
    \begin{tikzpicture}[
    scale=0.7,
    transform shape,
    >=stealth,
    node distance=2.2cm and 2.5cm,
    hypernode/.style={draw, rectangle, rounded corners=4pt, very thick, inner sep=3pt, font=\small\bfseries},
    event/.style={draw=none, font=\normalsize},
    rf/.style={->, thick, dashed, green!60!black},
    co/.style={->, thick, dotted, red!80!black},
    graph box/.style={draw, rectangle, rounded corners=4pt, thick, minimum width=1.4cm, minimum height=1cm, inner sep=0.1cm, minimum size=0cm},
    edge label/.style={midway, very thick, fill=white, font=\footnotesize, inner sep=1pt},
    graph box yellow/.style={graph box, dashed},
    graph box blue/.style={graph box},
    every node/.style={transform shape},
    edge-t1/.style={->, very thick, lime!60!black},
    edge-t2/.style={->, very thick, red!70!black},
    circ-label-t1/.style={text=black, inner sep=1pt, font=\normalsize, scale=1.2},
    circ-label-t2/.style={text=black, inner sep=1pt, font=\normalsize, scale=1.2}
]

\node[hypernode] (G0) at (0,0) {$G_0$};

\node[hypernode] (G1) at (-2,-1) {$G_1$};

\node[hypernode] (G2) at (2,-1) {$G_2$};

\node[hypernode] (G3) at (-2,-2) {$G_3$};

\node[hypernode] (G4) at (1,-2) {$G_4$};

\node[hypernode] (G5) at (2,-2) {$G_5$};

\node[hypernode] (G6) at (3,-2) {$G_6$};

\node[hypernode] (G7) at (-2,-3) {$G_7$};

\node[hypernode] (G8) at (1,-3) {$G_8$};

\node[hypernode] (G9) at (2,-3) {$G_9$};

\node[hypernode] (G10) at (-3,-4) {$G_{10}$};

\node[hypernode] (G11) at (-1,-4) {$G_{11}$};

\node[hypernode] (G12) at (1,-4) {$G_{12}$};

\node[hypernode] (G13) at (2,-4) {$G_{13}$};

\node[hypernode] (G14) at (-3,-5) {$G_{14}$};

\node[hypernode] (G15) at (0,-5) {$G_{15}$};

\node[hypernode] (G16) at (2,-5) {$G_{16}$};

\node[hypernode] (G17) at (-3,-6) {$G_{17}$};

\node[hypernode] (G18) at (0,-6) {$G_{18}$};

\node[hypernode] (G19) at (0,-7) {$G_{19}$};


\draw[edge-t1] (G0) -- (G1);

\draw[edge-t1] (G0) -- (G2);

\draw[edge-t2] (G1) -- (G3);

\draw[edge-t1] (G2) -- (G4);

\draw[edge-t1] (G2) -- (G5);

\draw[edge-t1] (G2) -- (G6);

\draw[edge-t1] (G3) -- (G7);

\draw[edge-t2] (G4) -- (G8);

\draw[edge-t2] (G5) -- (G9);

\draw[edge-t1] (G7) -- (G10);

\draw[edge-t1] (G7) -- (G11);

\draw[edge-t1] (G8) -- (G12);

\draw[edge-t1] (G9) -- (G13);

\draw[edge-t2] (G10) -- (G14);

\draw[edge-t1] (G12) -- (G15);

\draw[edge-t1] (G12) -- (G16);

\draw[edge-t1] (G14) -- (G17);

\draw[edge-t2] (G15) -- (G18);

\draw[edge-t1] (G18) -- (G19);

\end{tikzpicture}
\captionof{figure}{The transition system $\mathcal{D}(\ref{ex:lll})$}
\label{lll-full}
  \end{minipage}
  \hfill
  \begin{minipage}[t]{0.48\textwidth}
    \centering
    \begin{tikzpicture}[
    scale=0.7,
    transform shape,
    >=stealth,
    node distance=2.2cm and 2.5cm,
    hypernode/.style={draw, rectangle, rounded corners=4pt, very thick, inner sep=3pt, font=\small\bfseries},
    event/.style={draw=none, font=\normalsize},
    rf/.style={->, thick, dashed, green!60!black},
    co/.style={->, thick, dotted, red!80!black},
    graph box/.style={draw, rectangle, rounded corners=4pt, thick, minimum width=1.4cm, minimum height=1cm, inner sep=0.1cm, minimum size=0cm},
    edge label/.style={midway, very thick, fill=white, font=\footnotesize, inner sep=1pt},
    graph box yellow/.style={graph box, dashed},
    graph box blue/.style={graph box},
    every node/.style={transform shape},
    edge-t1/.style={->, very thick, lime!60!black},
    edge-t2/.style={->, very thick, red!70!black},
    circ-label-t1/.style={text=black, inner sep=1pt, font=\normalsize, scale=1.2},
    circ-label-t2/.style={text=black, inner sep=1pt, font=\normalsize, scale=1.2}
]

\node[hypernode] (G0) at (0,0) {$G_0$};

\node[hypernode] (G1) at (-2,-1) {$G_1$};

\node[hypernode] (G2) at (2,-1) {$G_2$};

\node[hypernode] (G3) at (-3,-2) {$G_3$};

\node[hypernode] (G4) at (-1,-2) {$G_4$};

\node[hypernode] (G5) at (1,-2) {$G_5$};

\node[hypernode] (G6) at (2,-2) {$G_6$};

\node[hypernode] (G7) at (3,-2) {$G_7$};

\node[hypernode] (G8) at (0,-3) {$G_8$};

\node[hypernode] (G9) at (2,-3) {$G_9$};


\draw[edge-t1] (G0) -- (G1);

\draw[edge-t1] (G0) -- (G2);

\draw[edge-t2] (G1) -- (G3);

\draw[edge-t1] (G1) -- (G4);

\draw[edge-t1] (G2) -- (G5);

\draw[edge-t1] (G2) -- (G6);

\draw[edge-t1] (G2) -- (G7);

\draw[edge-t2] (G5) -- (G8);

\draw[edge-t1] (G5) -- (G9);

\end{tikzpicture}
\captionof{figure}{The compressed transition system $\mathcal{D}(\ref{ex:lll})$}
\label{lll-comp}
  \end{minipage}
\begin{example}
Consider the following program.

\begin{equation*}
  \tag{\textsc{L+L+L}}
  \label{ex:lll}
  \inarr{ 
\lockInst{l} \\
\unlockInst{l} \\
}{ 
\lockInst{l} \\
\unlockInst{l} \\
}{
\lockInst{l} \\
\unlockInst{l} \\
}
\end{equation*}

This program consists of three threads, each of which attempts to acquire a lock and subsequently release it. It has $3!$ execution graphs. The complete $\mathcal{D}(\ref{ex:lll})$ tree of this program without tree compression is depicted in \cref{lll-full} and
with compression in \cref{lll-comp}.
With $B=5$, the uncompressed tree requires 20 clones and re-execution but the compressed tree only 10.
\qed
\end{example}
\end{tcolorbox}

\subsection{Related Work: The GenMC Estimator}

The state estimation problem was also considered in \cite{gator}, where the authors give a randomized procedure based on the \trust algorithm.
However, as we show below, that estimator is not an unbiased estimator; thus, even with exponentially many samples,
it has no guarantees to converge to the correct answer.
Experimentally, the estimator converges to incorrect values quite frequently.

%

The procedure is also based on the \trust algorithm, and proceeds as follows.
First, it preferentially schedules writes over reads.
For each read or write event, it explores one of the forward-revisits uniformly at random. 
Instead of computing the estimate on the fly, it records the number of values each read can read from. 
In case a new write is added and \trust detects that a backward revisit would be required, the estimator
does not perform the backward revisit but increments the map entry for any read that could read from this new value by one. 
At the end, it takes the product of the ``possible reads-from'' counts for every read and the ``possible modification order" 
for every write to produce the final estimate.

Intuitively, the estimator assumes one can generate all graphs in $\mathcal{T}(P)$ 
without any backward revisits, simply by scheduling write events before read events.
When this assumption holds, the estimate is unbiased. 
However, the assumption may not hold. 
Since the algorithm forbids backward revisits, it may omit graphs 
reachable only via backward revisits and produce a biased estimate.
\begin{tcolorbox}[
    colback=SkyBlue!5,
    colframe=blue!20,
    boxrule=1pt,
    arc=2pt,
    width=\textwidth,
    boxsep=5pt,
    left=5pt,
    right=5pt,
    top=5pt,
    bottom=5pt
]

\noindent
\begin{minipage}[t]{0.7\textwidth}
\begin{example}
Consider the following program. It has 4 execution graphs: 
one where the line $a_2:=x$ reads from the $\init$, one where it reads from $x:=1$, and two where it reads from $x:=2$. 
The algorithm in \cite{gator} runs as follows. 
It first schedules $t_1$ and writes 1 to $x$. 
Then since both threads are enabled with a read, the scheduler picks one uniformly at random.
A case analysis of the two possibilities shows that the expected value of the estimator is
3.5.
\end{example}
\end{minipage}%
\hfill
\begin{minipage}[t]{0.3\textwidth}
\vspace{0pt}
\begin{equation*}
  \tag{\textsc{wrww+rr}}
  \label{ex:wrww+rr}
  \inarr{ 
\writeInst{x}{1} \\
\readInst{a_1}{x} \\
\writeInst{x}{2} \\
\writeInst{y}{1} 
}{ 
\readInst{a_2}{x} \\
\text{if } a_2 = 2 \\
\readInst{a_2}{y }}
\end{equation*}
\end{minipage}%

\qed
\end{tcolorbox}

\section{A Sub-Exponential Time Approximate Counter}

In the previous subsections, we had presented various (randomized) polynomial-time algorithms in order to estimate the number of minimal execution graphs of the given concurrent program. Due to the $\sharpP$-hardness result of approximation (Theorem~\ref{thm:approximate-counting-problem-hardness}), all
of these algorithms, in the worst case, are going to have exponentially high variance.
In light of this, we can ask the following question: 
What if we relax the requirement that our estimation algorithm must work in polynomial time? 

More precisely, note that as mentioned in Section~\ref{trust_sec}, the depth and the branching factor of the \trust tree $\mathcal{D}(P)$ is quadratic in $|P|$. Hence, the total size of $\mathcal{D}(P)$ could be $M := O((|P|^2)^{|P|^2+1} - 1)$ in the worst case. In such cases, we would like to know whether we can avoid exhaustive exploration of $\mathcal{D}(P)$ while still getting a good estimate.
Surprisingly, we prove that such an estimation algorithm exists.


\begin{theorem}
\label{th:stockmeyer}
    For any constants $r$ and $\rho$, there is a randomized $\tilde{O}(\sqrt{M})$-time $(r, \rho)$-approximate counter.
\end{theorem}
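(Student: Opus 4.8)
The plan is to view the statement as a purely combinatorial problem about the tree $\mathcal{D}(P)$: estimate its number of leaves $N = C(P)$, given that (by the proposition bounding $\mathcal{D}(P)$) the tree has depth $D = O(|P|^2)$, out-degree at most $b = O(|P|^2)$, poly-time-computable children, and that \textbf{Algorithm T} run from any node $G$ is a poly-time \emph{unbiased} estimator of the leaf-count $N_G$ of the subtree rooted at $G$. Every randomized primitive I use costs $\mathrm{poly}(|P|)$ time, so the $\tilde{O}(\cdot)$ will absorb $\mathrm{poly}(|P|)$ and logarithmic factors, matching the remark that the hidden polynomial factors are large. The one feature of \textbf{Algorithm T} I must keep in mind is that its \emph{relative} variance can be exponential---the hairbrush of \cref{ex:hairbrush} already has variance $\approx 2^{n}$ with only $n+1$ leaves---so it cannot be cheaply boosted to a low-relative-error estimate, and this is the seed of all the difficulty.

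The engine of the algorithm is the \emph{birthday paradox}. Suppose I had a sampler that emits each leaf of $\mathcal{D}(P)$ with probability $q_\ell$ that is uniform \emph{up to a constant factor}, i.e.\ $q_\ell \in [\alpha/N, \beta/N]$ for fixed constants $\alpha \le 1 \le \beta$. Then the probability that two independent draws collide is $\sum_\ell q_\ell^2 \in [\alpha^2/N, \beta^2/N]$, so the reciprocal of the empirical collision frequency among $m = \Theta(\sqrt{N})$ draws is, after concentration, a constant-factor estimate of $N$; a median-of-repetitions wrapper then drives the confidence to $1-\rho$ and the factor to any prescribed constant $r$. This is exactly why $\tilde{O}(\sqrt{N})$ is the right target, since it meets the $\Omega(\sqrt{N})$ collision lower bound cited in \cite{Stockmeyer85}. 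The crucial relaxation is that constant-factor-uniform sampling already suffices for a constant-factor count, so I never need a genuinely uniform sampler.

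Building such a constant-factor-uniform sampler is where the unbiased estimator re-enters. To descend near-uniformly from a node $v$ with children $c_1, \dots, c_d$, I need the branching proportions $N_{c_i}/N_v$ only to within a constant factor; but estimating each $N_{c_i}$ to a constant factor is itself an instance of the same problem on a strictly smaller subtree. The plan is therefore a recursion on the depth of $\mathcal{D}(P)$: to approximate a subtree's count (and to sample from it near-uniformly) I recursively obtain constant-factor estimates of its children's counts, branch in proportion to them, and recurse, while a doubling search over a guess $M = 2^k$ for the magnitude of $N$ fixes the overall acceptance/rejection rate so that accepted samples are near-uniform and $\Theta(\sqrt{N})$ of them are collected.

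The main obstacle is the accounting that makes this recursion both correct and fast. First, constant-factor errors compound multiplicatively across the $D = O(|P|^2)$ levels, so the per-level target accuracy must be tightened to $1 + O(1/D)$ and the per-node failure probability to $\rho/\mathrm{poly}(|P|)$; union-bounding over all sampled paths and all guesses $k$ is what multiplies the sample budget by the large $\mathrm{poly}(|P|)$ factors hidden in $\tilde{O}$. Second, and harder, I must bound the \emph{total} work of the recursion: a naive recursion that re-estimates counts at every branch point could cost $\Theta(N)$, so the key amortization step is to show that, by recursing only along the sampled paths and reusing estimates, the aggregate cost telescopes to $\sum_{\text{visited } v}\sqrt{N_v} = \tilde{O}(\sqrt{N})$ rather than anything larger. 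Establishing this telescoping bound, together with the propagation-of-error analysis, is the technical heart of the proof; the remaining birthday-paradox concentration is standard.
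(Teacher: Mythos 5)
There is a genuine gap, and it sits exactly where you locate ``the technical heart'': the telescoping bound $\sum_{\text{visited } v}\sqrt{N_v}=\tilde{O}(\sqrt{N})$ is false, even under the most favorable assumption that every node's estimate is computed once and reused. Take a balanced binary tree with $N$ leaves and draw $\Theta(\sqrt{N})$ root-to-leaf sample paths: the number of distinct visited nodes at depth $j$ is about $\min(2^j,\sqrt{N})$, and each such node $v$ carries a recursive estimation cost of order $\sqrt{N_v}=\sqrt{N/2^j}$, so the total is $\sum_j \min(2^j,\sqrt{N})\sqrt{N/2^j}=\Theta(N^{3/4})$, not $\tilde{O}(\sqrt{N})$. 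Moreover, the circularity you try to discharge with the doubling guess $M=2^k$ and rejection is not actually discharged on skewed trees---the very instances the paper introduces stochastic enumeration to handle. Under any Knuth-style descent of the hairbrush tree of \cref{ex:hairbrush}, the $i$-th leaf is reached with probability $2^{-\Theta(i)}$; uniformizing by rejection against a guessed scale either drives the acceptance rate exponentially low or, if the rejection ratio is capped, leaves the accepted distribution supported essentially on the $O(\log n)$ shallow leaves, so collision statistics estimate the shallow-leaf count rather than $N=n+1$. (A smaller slip: the median-of-repetitions wrapper boosts confidence only; the approximation factor remains lower-bounded by the sampler's non-uniformity gap $\beta/\alpha$, so ``any prescribed constant $r$'' cannot come from the median trick alone.)

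The paper's proof avoids near-uniform sampling of $\mathcal{D}(P)$'s leaves altogether, which is why it escapes both problems. It pads $\mathcal{D}(P)$ with dummy nodes into the complete $d^2$-ary tree $\mathcal{H}(P)$ of height $d^2$, where \emph{exactly} uniform sampling is trivial, and then splits on a threshold $\theta$: if $C(P)\leq\theta$, it enumerates the leaves exactly and deterministically, walking an inorder interval numbering of $\mathcal{H}(P)$ and using poly-time stateless \trust queries to test whether a node of $\mathcal{H}(P)$ lies in $\mathcal{D}(P)$ (cost $O(d^6\theta)$, zero variance); if $C(P)>\theta$, it draws uniform samples in $\mathcal{H}(P)$, counts hits on maximal execution graphs, rescales, and applies Chebyshev, the guaranteed hit probability $\geq\theta/M$ bounding the number of samples by $O\bigl(\tfrac{1}{\rho(r-1)^2}\cdot\tfrac{M}{\theta}\bigr)$. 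Choosing $\theta$ to balance the two acts yields the claimed square-root running time with no recursive sampler, no collision estimator, and no error propagation across $O(|P|^2)$ levels. If you want to salvage a birthday-paradox route, you would first need an independent argument that near-uniform leaf sampling of an implicitly given, possibly skewed tree is achievable in $\tilde{O}(\sqrt{N})$ time---which your construction does not provide and which the hairbrush suggests is the hard part, not a footnote.
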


The proof can be found in \cref{sec:stockmeyer}.
Here $\tilde{O}$ suppresses terms that are a polynomial in $|P|$. Note that this theorem is quite surprising: 
It shows that even when $C(P)$ is quite close to $M$, we can avoid exhaustive exploration and still get a good estimate of $C(P)$. 
However, the catch behind this algorithm is that the search has a high polynomial cost, i.e., the polynomial term suppressed by the $\tilde{O}$ notation is roughly $|P|^6$ and so we did not implement it.

We now describe the main ideas behind this algorithm, which go back to Stockmeyer \cite{Stockmeyer85}.
To describe our algorithm, we assume that we are given a concurrent program $P$, 
which defines the tree $\mathcal{D}(P)$. 
As mentioned before, if $d$ is the size of the program $P$, 
this tree has height $d^2$ and the degree of each node is at most $d^2$. 
However, every node need not have degree $d^2$. In order to make this tree uniform, we add dummy nodes to each node of $\mathcal{D}(P)$ so that the degree of each node becomes $d^2$. 
Call the resulting tree $\mathcal{H}(P)$. Note that $\mathcal{H}(P)$ 
is the complete $d^2$-ary tree of height $d^2$. 
(We will never explicitly construct $\mathcal{D}(P)$ nor $\mathcal{H}(P)$ during our algorithm; 
these are just defined implicitly). 

The algorithm distinguishes two situations: either $C(P)$ is ``small'' (less than a parameter $\theta$ that we will select later)
or ``big'' (bigger than the parameter $\theta$).
If $C(P)$ is big, then we independently sample a number of random walks proportional to $M/\theta$ in $\mathcal{H}(P)$.
We then count the number of times we hit a leaf node of $\mathcal{D}(P)$ in these random walks and then return this fraction of samples as our estimate.
Using Chebyshev bounds, we can bound the error in the procedure.
On the other hand, if $C(P)$ is small, we can enumerate all elements by a binary search.
The binary search uses an inorder numbering of all the nodes of the tree.
The numbering is used to define intervals of leaves, and the search checks if there is a leaf of $\mathcal{D}(P)$ in a given interval.
This query becomes the same as checking if a given path in the tree $\mathcal{H}(P)$ can be extended to a maximal execution graph in $\mathcal{D}(P)$, for which we can use \trust.

Finally, we set the parameter $\theta$ to be $\sqrt{M}$ using which we can bound the entire run time by $\tilde{O}(\sqrt{M})$. 



\section{Evaluation}\label{sec:experiment}

\subsubsection*{Implementation and Setup}

We implemented \textbf{Algorithm S} in a tool called \testor. 
It is built on top of \jmc~\cite{jmc}, an implementation of \trust for Java programs~\cite{condpor}.
For our evaluation, we also developed an ``offline'' version of \testor that runs \textbf{Algorithm S} 
on the tree ($\mathcal{D}(P)$) by running \jmc until completion and logging the events.
%
The offline variant simulates exactly the behavior of \testor as it would operate when embedded in a stateless \dpor exploration of $\mathcal{D}(P)$, thereby isolating its statistical properties from runtime overheads.

As a baseline, we also implemented \textbf{Algorithm P} directly within \jmc.
However, we found that the variance of this algorithm is so large that, in all of our experiments, the estimates never converge into a meaningful value.
All experiments were conducted on a MacBook Pro equipped with 34GB of RAM and an M3 chip.

\subsubsection*{Benchmarks}
Our benchmark set consists of handcrafted (\textsc{IncNTest}, \textsc{FineCounter}) and SV-COMP-\texttt{pthread} \cite{svcomp} benchmarks (\textsc{Big0}, \textsc{Fib1}, \textsc{Sigma}, \textsc{SVQueue2}, \textsc{SVQueue3}, and \textsc{SVStack2}),
  textbook concurrent lists/stacks/queues (\textsc{CoarseList} \cite{book:herlihy-shavit}, \textsc{FineList} \cite{finelist}, \textsc{OptList} \cite{book:herlihy-shavit}, \textsc{UbQueue} \cite{book:herlihy-shavit}, 
  \textsc{LbQueue} \cite{book:herlihy-shavit}, and \textsc{AgmStack} \cite{agmstack}), and real-world data structures such as \textsc{TimestampStack} \cite{timestamped-stack}.

\subsection{Overview }
We consider the following two configurations in our evaluation.

\begin{enumerate}[label=\textbf{(\Alph*)}, leftmargin=2em]
\item \label{phase:online}
We evaluate the practical potential of \testor by running it on a representative set of benchmarks with varying workloads.
For each benchmark suite, we run online \testor for 1 minute and 5 minutes, assess the accuracy of the resulting estimates, and compare its running time against that of \jmc run to completion or until timeout.
  %
\item \label{phase:acp} We then conduct a principled evaluation of the offline implementation. The offline setting enables a cleaner experimental setup, isolates the estimator from runtime-dependent effects, and improves reproducibility. We run the estimator on $\mathcal{D}(P)$ until it converges to a value close to the ground truth, and examine how its behavior varies across different budget settings.


\end{enumerate}  

\begin{table}[tbp]
\centering
\tiny

\begin{tabular}{llcccc}
\toprule
Benchmark & Param & Est.\ @1m & Est.\ @5m & \jmc count & \jmc status \\
\midrule
\multirow{3}{*}{CoarseList} & 8 & $(4.63 \pm 1.49) \times 10^{4}$ & $(4.13 \pm 0.77) \times 10^{4}$ & $4.03 \times 10^{4}$ & 1m 34s \\
 & 9 & $(6.72 \pm 5.65) \times 10^{5}$ & $(4.12 \pm 2.76) \times 10^{5}$ & $3.63 \times 10^{5}$ & 14m 11s \\
 & 10 & $(1.93 \pm 0.98) \times 10^{6}$ & $(2.12 \pm 0.54) \times 10^{6}$ & $>1.13 \times 10^{6}$ & TO at 1h \\
\midrule
\multirow{3}{*}{Fib1} & 10 & $(1.84 \pm 0.27) \times 10^{5}$ & $(1.87 \pm 0.06) \times 10^{5}$ & $1.85 \times 10^{5}$ & 6m 46s \\
 & 11 & $(6.93 \pm 1.10) \times 10^{5}$ & $(7.02 \pm 0.51) \times 10^{5}$ & $7.05 \times 10^{5}$ & 27m 15s \\
 & 12 & $(2.62 \pm 0.09) \times 10^{6}$ & $(2.61 \pm 0.16) \times 10^{6}$ & $>1.50 \times 10^{6}$ & TO at 1h  \\
\midrule
\multirow{3}{*}{FineCounter} & 11 & $(9.84 \pm 1.87) \times 10^{4}$ & $(1.21 \pm 0.11) \times 10^{5}$ & $8.64 \times 10^{4}$ & 2m 57s \\
 & 12 & $(7.83 \pm 3.12) \times 10^{5}$ & $(5.75 \pm 1.49) \times 10^{5}$ & $5.18 \times 10^{5}$ & 18m 3s \\
 & 13 & $(3.35 \pm 1.87) \times 10^{6}$ & $(3.78 \pm 1.48) \times 10^{6}$ & $>1.60 \times 10^{6}$ & TO at 1h  \\
\midrule
\multirow{3}{*}{FineList} & 8 & $(5.60 \pm 0.76) \times 10^{4}$ & $(4.37 \pm 0.37) \times 10^{4}$ & $4.03 \times 10^{4}$ & 2m 16s \\
 & 9 & $(3.25 \pm 2.55) \times 10^{5}$ & $(3.10 \pm 0.67) \times 10^{5}$ & $3.63 \times 10^{5}$ & 22m \\
 & 10 & $(3.13 \pm 0.67) \times 10^{5}$ & $(3.58 \pm 2.30) \times 10^{6}$ & $>8.85 \times 10^{5}$ & TO at 1h  \\
\midrule
\multirow{3}{*}{IncNTest} & 5 & $(1.43 \pm 0.06) \times 10^{4}$ & $(1.43 \pm 0.03) \times 10^{4}$ & $1.44 \times 10^{4}$ & 19s \\
 & 6 & $(4.96 \pm 0.88) \times 10^{5}$ & $(5.68 \pm 1.00) \times 10^{5}$ & $5.18 \times 10^{5}$ & 5m 15s \\
 & 7 & $(2.41 \pm 0.56) \times 10^{7}$ & $(2.42 \pm 0.51) \times 10^{7}$ & $>5.18 \times 10^{6}$ & TO at 1h  \\
\midrule
\multirow{3}{*}{OptList} & 4 & $(1.78 \pm 0.05) \times 10^{3}$ & $(1.77 \pm 0.05) \times 10^{3}$ & $1.69 \times 10^{3}$ & 15s \\
 & 5 & $(4.59 \pm 1.92) \times 10^{5}$ & $(4.83 \pm 1.11) \times 10^{5}$ & $4.57 \times 10^{5}$ & 16m 45s \\
 & 6 & $(1.94 \pm 3.20) \times 10^{8}$ & $(3.27 \pm 4.75) \times 10^{8}$ & $>1.16 \times 10^{6}$ & TO at 1h  \\
\midrule
\multirow{3}{*}{SVStack2} & 10 & $(6.73 \pm 1.18) \times 10^{4}$ & $(6.75 \pm 0.11) \times 10^{4}$ & $6.87 \times 10^{4}$ & 3m 15s \\
 & 11 & $(2.37 \pm 0.54) \times 10^{5}$ & $(2.53 \pm 0.30) \times 10^{5}$ & $2.42 \times 10^{5}$ & 11m 39s \\
 & 12 & $(7.60 \pm 0.99) \times 10^{5}$ & $(8.91 \pm 0.52) \times 10^{5}$ & $8.59 \times 10^{5}$ & 43m 39s \\
\midrule
\multirow{3}{*}{SVQueue2} & 9 & $(5.07 \pm 0.29) \times 10^{4}$ & $(4.73 \pm 0.19) \times 10^{4}$ & $4.86 \times 10^{4}$ & 4m 34s \\
 & 10 & $(1.88 \pm 0.22) \times 10^{5}$ & $(1.92 \pm 0.20) \times 10^{5}$ & $1.85 \times 10^{5}$ & 19m 38s \\
 & 11 & $(6.63 \pm 2.31) \times 10^{5}$ & $(7.38 \pm 0.88) \times 10^{5}$ & $>4.89 \times 10^{5}$ & TO at 1h  \\

\midrule
\multirow{3}{*}{SVQueue3}  & 15 & $(8.87 \pm 1.15) \times 10^{4}$ & $(1.13 \pm 0.11) \times 10^{5}$ & $1.09 \times 10^{5}$ & 8m 21s \\
 & 16 & $(2.05 \pm 0.85) \times 10^{5}$ & $(2.25 \pm 0.50) \times 10^{5}$ & $2.18 \times 10^{5}$ & 17m 5s \\
 & 17 & $(4.25 \pm 2.08) \times 10^{5}$ & $(5.43 \pm 2.74) \times 10^{5}$ & $4.37 \times 10^{5}$ & 36m 31s \\
\midrule
\multirow{3}{*}{UBQueue} & 9 & $(4.60 \pm 0.50) \times 10^{4}$ & $(4.61 \pm 0.10) \times 10^{4}$ & $4.61 \times 10^{4}$ & 1m 23s \\
 & 10 & $(5.01 \pm 1.86) \times 10^{5}$ & $(4.58 \pm 0.20) \times 10^{5}$ & $4.61 \times 10^{5}$ & 13m 16s \\
 & 11 & $(2.28 \pm 0.92) \times 10^{6}$ & $(2.98 \pm 0.30) \times 10^{6}$ & $>1.85 \times 10^{6}$ & TO at 1h  \\
 \midrule
 \multirow{3}{*}{LBQueue} & 13 & $(1.03 \pm 0.88) \times 10^{5}$ & $(6.23 \pm 0.76) \times 10^{4}$ & $6.91 \times 10^{4}$ & 3m 52s \\
 & 14 & $(2.48 \pm 1.12) \times 10^{5}$ & $(3.49 \pm 1.03) \times 10^{5}$ & $3.46 \times 10^{5}$ & 19m 45s \\
 & 15 & $(5.98 \pm 2.57) \times 10^{5}$ & $(1.65 \pm 1.45) \times 10^{6}$ & $>9.69 \times 10^{5}$ & TO at 1h  \\
 \midrule
\multirow{3}{*}{AgmStack} & 5 & $(6.25 \pm 0.36) \times 10^{3}$ & $(6.15 \pm 0.15) \times 10^{3}$ & $6.23 \times 10^{3}$ & 19s \\
 & 6 & $(5.14 \pm 1.25) \times 10^{5}$ & $(4.28 \pm 0.76) \times 10^{5}$ & $4.48 \times 10^{5}$ & 11m 41s \\
 & 7 & $(4.36 \pm 2.75) \times 10^{7}$ & $(5.07 \pm 1.52) \times 10^{7}$ & $>1.87 \times 10^{6}$ & TO at 1h  \\
 \midrule
\multirow{3}{*}{Sigma} & 4 & $(1.49 \pm 0.09) \times 10^{4}$ & $(1.40 \pm 0.05) \times 10^{4}$ & $1.38 \times 10^{4}$ & 20s \\
 & 5 & $(4.36 \pm 4.00) \times 10^{6}$ & $(1.64 \pm 0.41) \times 10^{6}$ & $1.73 \times 10^{6}$ & 21m 27s \\
 & 6 & $(2.61 \pm 2.57) \times 10^{8}$ & $(4.16 \pm 5.21) \times 10^{8}$ & $>3.85 \times 10^{6}$ & TO at 1h  \\
 \midrule
\multirow{3}{*}{TimeStampStack} & 3 & $(2.94 \pm 0.11) \times 10^{2}$ & $(2.90 \pm 0.05) \times 10^{2}$ & $2.91 \times 10^{2}$ & 12s \\
 & 4 & $(3.34 \pm 0.60) \times 10^{4}$ & $(4.71 \pm 0.60) \times 10^{4}$ & $3.26 \times 10^{4}$ & 1m 54s \\
 & 5 & $(1.41 \pm 2.29) \times 10^{7}$ & $(3.50 \pm 1.09) \times 10^{6}$ & $>7.76 \times 10^{5}$ & TO at 1h  \\
\bottomrule
\end{tabular}
\vspace{7pt}
\caption{Results of the online \testor with stochastic enumeration budget of 20 under fixed 1-minute and 5-minute time constraints, compared with the corresponding \jmc runs. For each benchmark and parameter, each estimator entry reports the mean and standard deviation over 5 independent runs. The \jmc \textit{count} column reports the number of executions explored by \jmc, which is exact when the run finishes and partial when the run times out. The \jmc \textit{status} column reports the \jmc execution time, or indicates a timeout when the run does not finish within the time limit.}
\label{tab:online-estimator-results}
\end{table}

\subsection{Configuration \ref{phase:online}: Effectiveness of \testor}

In this configuration, we evaluate \testor as a tool for estimating the state space of a program.

\subsubsection*{Setup}

A practical use case for \testor is as a lightweight budgeting tool: one first runs the estimator for a short period, and then uses the resulting estimate to inform how long the model checker should continue running.
Accordingly, in this configuration, we run \testor for 1 minute and 5 minutes, and run \jmc with a 1-hour time limit.

We consider a set of standard concurrent benchmarks whose workload scales with input parameters, such as the number of workers or the number of operations.
For each benchmark, we select three workload configurations: one for which the model checker finishes comfortably within the time limit, one expected to be close to the largest workload the model checker can still handle within the time limit, and one for which the model checker is unlikely to terminate within the time limit.
For each benchmark–workload configuration, we run \testor five times and record the mean estimate, as well as the standard deviation.
We adopt a stochastic enumeration budget of $\mathcal{B} = 20$, informed by the findings of our study in Configuration~\ref{phase:acp}.

\subsubsection*{Observations}
Table~\ref{tab:online-estimator-results} reports the performance of \testor under fixed one-minute and five-minute budgets, alongside the corresponding model-checking runs. For each benchmark and parameter setting, each estimator entry reports the mean and standard deviation over 5 independent runs. The \jmc \emph{count} column gives the number of executions explored by \jmc: this is exact when \jmc terminates within the time limit, and partial when the run times out. The \jmc \emph{status} column reports either the wall-clock time of the completed model-checking run or the timeout status.
Overall, \testor yields encouraging results. On configurations for which \jmc terminates and the exact count is known, 99 out of 150 1-minute runs and 113 out of 150 5-minute runs fall within 80\%–120\% of the exact count.
On the subset of non-timeout configurations whose exact \jmc runs take more than 5 minutes, 47 out of 85 1-minute runs and 61 out of 85 5-minute runs fall within 80\%–120\% of the exact count. This suggests that even a short fixed budget often suffices to obtain a useful estimate, while a 5-minute budget yields a meaningful improvement in fine-grained accuracy. As expected, 1-minute estimates exhibit higher standard deviations than their 5-minute counterparts.
For the smallest configurations, exhaustive exploration finishes before the one minute and five minute deadline; however, there is no a priori way to predict that this will happen other than by computing estimates.

The table further illustrates that the practical value of  \testor extends beyond predicting the final count: it also reveals how difficulty scales across neighboring workload configurations. For instance, on \textsc{CoarseList}, the estimated count increases by only a factor of approximately five from parameter~9 to parameter~10. Although the parameter-10 model-checking run times out, this estimate suggests that the exact run may still be feasible within a few hours. Moreover, the partial count at timeout already covers roughly half of the estimated total, providing a meaningful measure of progress. In contrast, for \textsc{OptList}, the estimator predicts a jump from only hundreds of thousands of executions to hundreds of millions. This indicates a much sharper scalability cliff: even substantially extending the execution time is unlikely to make the larger instance tractable, and the roughly one million executions explored before timeout may represent only a negligible fraction of the total space.

We note that the issue of variance is not fully mitigated.
For example, one configuration of \textsc{CoarseList} exhibits an unusually large standard deviation. 
Upon inspection, this is caused not by pervasive instability across all repeated runs, but by a single outlying run among the five repetitions. 
We also observe that when the estimator returns a large estimate, the standard deviation tends to be larger as well, which is expected. 
A short runtime is unlikely to suffice for accurate estimation when the underlying search space is inherently enormous. Nevertheless, in such cases, even obtaining the correct order of magnitude is already very useful.


\begin{table}[t]
  \centering
  \tiny
  \begin{tabular}{@{} l l S[scientific-notation=true] S[table-format=1.2] l @{}}
    \toprule
    \textbf{Concurrent Structures} & \textbf{Estimator} & \textbf{Relative Error} & \textbf{Success Ratio} & \textbf{Trials} \\
    \midrule

   \multirow{4}{*}{\makecell[l]{CoarseList\\ Insertion threads=9\\ Executions = \num{362880}}} &  B=1 & \num{1.42} & \num{0} & (failed) \\
    &  B=5      & \num{0.25} & \num{0.20} & (failed) \\
    &  B=20     & \num{0.0647} & {1 \ding{51}} & \num{187} \\
    &  B=50     & \num{0.0271} & {1\ding{51}} & \num{125} \\
    \midrule
    \multirow{4}{*}{\makecell[l]{FineList\\ Insertion threads=9\\ Executions = \num{362880}}} &  B=1 & \num{1.42} & \num{0} & (failed) \\
    &  B=5      & \num{0.25} & \num{0.20} & (failed) \\
    &  B=20     & \num{0.0647} & {1\ding{51}} & \num{187} \\
    &  B=50     & \num{0.0271} & {1\ding{51}} & \num{125} \\
    \midrule
     \multirow{4}{*}{\makecell[l]{OptList\\ Insertion threads=3\\ Deletion threads=2\\ Executions = \num{456995}}} & B=1 & \num{0.784} & \num{0} & (failed) \\
    &  B=5      & \num{0.32} & {\num{0.60}\,\ding{51}} & \num{1156} \\
    &  B=20     & \num{0.371} & {\num{0.80}\,\ding{51}} & \num{794} \\
    &  B=50     & \num{0.0851} & {1\ding{51}} & \num{210} \\
    \midrule
    \multirow{4}{*}{\makecell[l]{UbQueue\\ Insertion threads=5\\ Deletion threads=5\\Executions = \num{460800}}} &  B=1             & \num{0.217} & \num{0.40} & (failed) \\
    &  B=5            & \num{0.0654} & {1\ding{51}} & \num{350} \\
    &  B=20           & \num{0.0303} & {1\ding{51}} & \num{311} \\
    &  B=50           & \num{0.00845} & {1\ding{51}} & \num{126} \\ 
    \midrule
        \multirow{4}{*}{\makecell[l]{LbQueue\\  Enqueue threads=14\\Executions = \num{345600}}} &  B=1              & \num{0.787} & \num{0} & (failed) \\
    &  B=5             & \num{0.191} & {\num{0.60}\,\ding{51}}  & \num{1097} \\
    &  B=20            & \num{0.0604} & {1\ding{51}} & \num{165} \\
    &  B=50            & \num{0.0408} & {1\ding{51}} & \num{197}  \\ 
    \midrule
\multirow{4}{*}{\makecell[l]{AmgStack\\ Pusher threads=3\\ Poper thread threads=3\\ Exeuctions = \num{447576}}} & B=1 & \num{0.509} & \num{0.00} & (failed) \\
    &B=5      & \num{0.191} & {\num{0.60}\,\ding{51}}  & \num{1005} \\
    &  B=20     & \num{0.0469} & {1\ding{51}} & \num{456} \\
    & B=50     & \num{0.0429} & {1\ding{51}} & \num{357} \\
    \midrule
    
\multirow{4}{*}{\makecell[l]{TimeStampStack\\ Pusher threads=2\\ Poper thread threads=2\\ Executions = \num{32648}}} &   B=1 & \num{0.395} & \num{0.00} & (failed) \\
    &  B=5      & \num{0.21} & {\num{0.60}\,\ding{51}} & \num{1100} \\
    &  B=20     & \num{0.0376} & {1\ding{51}} & \num{479} \\
    &  B=50     & \num{0.0362} & {1\ding{51}} & \num{286} \\
    \bottomrule

  \end{tabular}

  \vspace{0.1in}
  \caption{Concurrent data structure benchmarks for estimator evaluation on  \textbf{maximal executions}. Columns (left→right): benchmark info, which contains input detail and total complete execution number; estimator configuration; average relative error after 2000 trials; success ratio (fraction of 5 seeds that converge within the 80–120\% band); and average trials to converge across the successful seeds (reported only if success ratio$\geq 0.6$; otherwise shown as “failed”). }
    \label{tab:convergence}

\end{table}


\begin{table}[t]
  \centering
  \tiny
  \begin{tabular}{@{} l l S[scientific-notation=true] S[table-format=1.2] l @{}}
    \toprule
    \textbf{Synthetic Benchmarks} & \textbf{Estimator} & \textbf{Relative Error} & \textbf{Success Ratio} & \textbf{Trials} \\
    \midrule
\multirow{5}{*}{\makecell[l]{IncNTest\\  Incrementor threads=6\\ Executions = \num{518400}}} &   B=1 & \num{0.408} & \num{0.20} & (failed) \\
    & B=5      & \num{0.206} & {\num{0.60}\,\ding{51}}  & \num{575} \\
    &  B=20     & \num{0.256} & {1\ding{51}} & \num{909} \\
    &  B=50     & \num{0.0535} & {1\ding{51}} & \num{560} \\
    \midrule
\multirow{4}{*}{\makecell[l]{Big0\\ Executions = \num{69112}}} &   B=1              & \num{0.0333} & {1\ding{51}}& \num{159} \\
    &  B=5             & \num{0.0068} & {1\ding{51}} & \num{151} \\
    &  B=20            & \num{0.00767} & {1\ding{51}} & \num{112} \\
    &  B=50            & \num{0.00231} & {1\ding{51}} & \num{106}  \\
    \midrule
\multirow{4}{*}{\makecell[l]{FineCounter\\   Threads=12\\ Executions = \num{518400}}} &   B=1 & \num{0.651} & \num{0} & (failed) \\
    &  B=5      & \num{0.481} & \num{0.20} & (failed) \\
    &  B=20     & \num{0.0923} & {1\ding{51}} & \num{192} \\
    &  B=50     & \num{0.0454} & {1\ding{51}} & \num{239} \\
  
    \midrule
    \multirow{4}{*}{\makecell[l]{Fib1\\   operations=11\\ Executions = \num{705432}}} &    B=1              & \num{2.9} & \num{0.20} & (failed) \\
    &  B=5             & \num{0.0951} &  {1\ding{51}} & \num{570} \\
    &  B=20            & \num{0.0402} &  {1\ding{51}} & \num{184} \\
    &    B=50            & \num{0.0204} &  {1\ding{51}} & \num{136} \\ 
    \midrule
    \multirow{4}{*}{\makecell[l]{Sigma\\   operations=5\\ Executions = \num{1728000}}} &   B=1              & \num{0.53} & \num{0} & (failed) \\
    &  B=5             & \num{2.68} & {\num{0.60}\,\ding{51}} & \num{1342} \\
    &  B=20            & \num{0.616} &  {1\ding{51}} & \num{1022} \\
    &  B=50            & \num{0.108} & {\num{0.80}\,\ding{51}} & \num{552} \\ 
  
    \midrule
      \multirow{4}{*}{\makecell[l]{SV Queue2\\   Operations=10\\ Executions = \num{184756}}} &  B=1 & \num{0.651} & \num{0} & (failed) \\
    &  B=5      & \num{0.105} &  {1\ding{51}} & \num{395} \\
    &  B=20     & \num{0.0319} &  {1\ding{51}} & \num{174} \\
    &  B=50     & \num{0.0104} &  {1\ding{51}} & \num{127} \\
     \midrule
        \multirow{4}{*}{\makecell[l]{SV Queue3\\   Operations=17\\ Executions = \num{436728}}} & B=1      & \num{0.956} & \num{0.00} & (failed) \\
    & B=5      & \num{0.396} & \num{0.40} & (failed) \\
    & B=20     & \num{0.0696} &  {1\ding{51}} & \num{715} \\
    & B=50     & \num{0.0162} &  {1\ding{51}} & \num{140} \\
     \midrule
      \multirow{4}{*}{\makecell[l]{SV Stack2\\   Operations=11\\ Executions = \num{241616}}} &  B=1 & \num{0.408} & \num{0} & (failed) \\
    &  B=5      & \num{0.266} &  {1\ding{51}} & \num{791} \\
    &  B=20     & \num{0.0314} &  {1\ding{51}} & \num{259} \\
    &  B=50     & \num{0.0306} &  {1\ding{51}} & \num{135} \\
    \bottomrule

  \end{tabular}

  \vspace{0.1in}
  \caption{Synthetic benchmarks for estimator evaluation on \textbf{maximal executions}. Columns (left→right): benchmark info; estimator configuration; average relative error after 2000 trials; success ratio (fraction of 5 seeds that converge within the 80–120\% band); and average trials to converge across the successful seeds (reported only if success ratio$\geq 0.6$; otherwise shown as “failed”). }
    \label{tab:synthtic_converge}

\end{table}

\subsection{Configuration \ref{phase:acp}: Convergence of Estimation
}


We use the offline implementation to  explore the convergence behavior of \testor's estimation in a principled manner.

\subsubsection*{Setup}
In this configuration, we consider offline-\testor has succeeded if its estimate converges to a range around the real count. 
Specifically, we declare convergence when the running mean of offline-\testor enters the $\pm20\%$ relative error band and remains within it for a stability window of 
$50$ iterations, while also satisfying a flatness condition: the moving average changes by less than a $2\%$ relative threshold over a stability window of 100 iterations.
We evaluate four configurations of offline-\testor with stochastic enumeration budgets $\mathcal{B} \in \{1, 5, 20, 50\}$. 
Each configuration is executed for a sequence of $2000$ runs, and to account for randomness, each sequence is repeated with five independent random seeds. 
A configuration is deemed to successfully converge if it converges under at least three of the five seeds. 
To assess the effectiveness of our estimators, we focus on two metrics:
\begin{enumerate}
    \item \emph{Relative error after 2000 runs}, computed as $\text{estimated count} / \text{exact count}$; and
    \item \emph{Trials to convergence}, reported only when a configuration converges on a given benchmark and averaged over the seeds for which convergence was achieved.
\end{enumerate}

\subsubsection*{Benchmarks}

For this evaluation configuration, we study \testor on 15 benchmarks: the same benchmark suite used in Configuration~\ref{phase:online}, plus one additional benchmark (Big0) that is small and does not have a tunable parameter, and was therefore not included in Configuration~\ref{phase:online}.
 
The first seven benchmarks (presented in \cref{tab:convergence}) correspond to typical uses of concurrent data structures, 
including \textsc{CoarseList}, \textsc{FineList}, \textsc{OptList}, \textsc{UbQueue}, \textsc{LbQueue}, \textsc{AMGStack}, and \textsc{TimestampStack}. 
These programs exhibit high thread contention, yielding large and complex behavior spaces. 
The remaining eight (presented in \cref{tab:synthtic_converge}) are synthetic benchmarks designed to isolate specific concurrency 
patterns and stress particular aspects of the estimators’ convergence behavior.

\subsubsection*{Results Summary}

\Cref{tab:convergence} and \ref{tab:synthtic_converge} present the results of the evaluation. 
For each estimator configuration, we report (i) the average relative error after 2,000 trials, averaged over all 5 seeds, 
(ii) the success ratio (i.e., the fraction of seeds that satisfy the convergence criterion), and 
(iii) when this ratio exceeds 0.6 (at least 3 out of 5 seeds), the average number of trials-to-convergence over the converged seeds (rounded to int).

We note that the baseline \textbf{Algorithm P} failed to converge on every benchmark.


\begin{description}
    \item[Convergence:] Positively, we observe offline-\testor with budgets of 20 and 50 converges on all benchmarks within a very small 
    number of trials--—typically a few hundred--—even when the underlying behavior space is much larger, 
    ranging from 30 thousand up to 1.7 million. We therefore believe that our evaluation demonstrates the practicality of our estimator.
    On the other hand, offline-\testor with a budget of 1 (\textbf{Algorithm T}) failed to converge on all benchmarks except Big0.
    \item[Relative Error:] Offline-\testor with budgets of 20 and 50 generally yields very small relative errors after 2000 runs. While a budget of 50 does not show a clear advantage in achieving convergence compared to a budget of 20, it does provide a noticeable improvement in the final relative error. We also observe that, for most benchmarks, whenever a configuration reported convergence, the relative error induced by the configuration remains well within the $20\%$ band.
\end{description}

\subsection{Configuration~\ref{phase:acp}: Scaling with Budget}

\begin{figure}[t]
  \centering
  \includegraphics[width=0.8\linewidth]{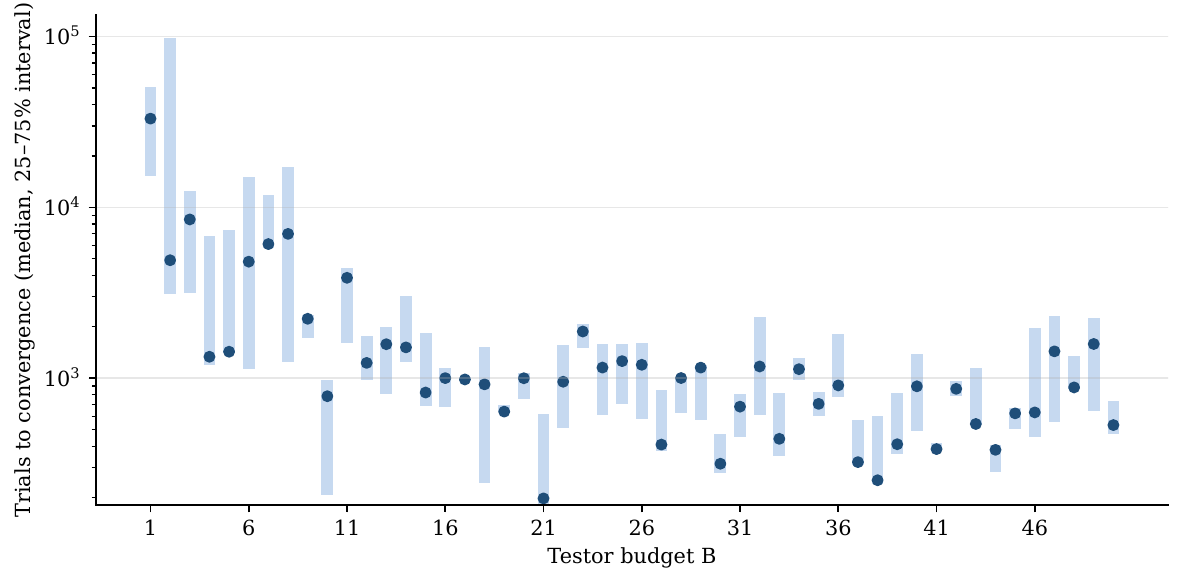}
  \caption{Median trials to convergence across budgets, with the 25th--75th percentile (IQR) band shown; the y-axis is log-scaled.}
  \label{fig:trial}
\end{figure}


Based on the previous results, a budget of 20 appears sufficient for most of our purposes, offering comparable accuracy while requiring fewer computational resources than a budget of 50. To better understand the effect of the budget on convergence, we now revisit the \textsc{Sigma} benchmark, the only benchmark in our suite for which offline \testor with budget~20 requires more than 1,000 trials to converge. \textsc{Sigma} also has the largest exact count in the suite, at 1.7 million executions. 
Using this benchmark as a stress test, we study how the number of trials required for convergence varies with the budget. 
For each budget, we run \testor with five seeds and increase the trial cap until all seeds satisfy our convergence criterion or the cap reaches one million trials.

In Figure~\ref{fig:trial}, we plot the 25–75\% (IQR) band and mark the median explicitly. We observe that, as expected, larger budgets generally lead to more stable convergence. Some smaller budgets may still struggle to converge even after $10^5$ trials while larger budgets over 20 typically converge for all seeds within only a few hundreds trials. 
%
However, the pattern shown in the figure is not strictly monotone:  these irregular spikes are due to randomness under a finite number of seeds—at certain budgets we are simply ``unlucky.” In expectation over the randomization, increasing the budget can only improve, or at worst leave unchanged, the estimator’s performance.

\section{Related Work}

In this paper, we have studied the problem of estimating the number of Mazurkiewicz traces in a program. 
(Approximate) Counting of combinatorial objects is a classical topic in complexity theory \cite{Valiant79,JerrumVV86}.
A central problem here is \emph{propositional model counting} ($\#SAT$).
Our proof of hardness of the estimation problem reduces from $\#SAT$.
There are approximate counters for $\#SAT$, such as ApproxMC \cite{chakraborty2013scalableapproximatemodelcounter}, that run in randomized polynomial time with access to
an NP-oracle.
These techniques are not easily applied in our context because compiling all executions of a real-world concurrent program into a SAT instance is a formidable problem.
Instead, we take the route of unbiased estimators obtained from stateless search.

Our work is also close to approximate counting on automata, where one asks for the number of words of a given length accepted by an NFA. 
This problem admits a fully polynomial randomized approximation scheme (FPRAS) \cite{GoreJKSM97,10.1145/3422648.3422661}, 
with several algorithmic improvements \cite{meel2024fasterfprasnfa,meel2025practicalfprasnfaexploiting}.
A recent paper \cite{ColnetMM26} considers the problem of counting traces in a regular trace language and shows an FPRAS for the problem.
It might seem that our estimation problem can be reduced to these problems by taking the automaton for each thread, forming their partially synchronous product, 
intersecting with the regular language of lexicographic normal forms of trace equivalence, and finally asking how many words of some appropriate length the resulting automaton accepts. 
However, the product automaton is exponential in the number of threads and variables, so even FPRAS algorithms for approximate counting on automata become exponential in the number of threads. 

Our work is also related to estimating coverage in a model checker.
Penix and Visser~\cite{penix} add standard code-coverage metrics to Java PathFinder \cite{jpf}, tracking which bytecode instructions and branch outcomes are exercised during state-space exploration.
Bron, Ziv, and Ur~\cite{bron05} introduce synchronization coverage for concurrent Java programs, defining a small set of concurrency-specific coverage tasks (for example, whether a synchronized block ever caused another thread to block) and treating full coverage as a practical stopping criterion.
Nevertheless, code coverage and its derivatives are known to be imprecise measures for the space of a program.
Chockler, Kupferman, and Vardi~\cite{10.1007/978-3-540-39724-3_11} propose mutation-based coverage metrics for model checking that ask which states, transitions, and subformulas actually matter for establishing
if a specification is satisfied.  
Taleghani and Atlee \cite{5431751} estimate state-space coverage in explicit-state model checking by sampling unexplored transitions using BFS and extrapolating the number of remaining states. 
However the estimate is heuristic and comes with no guarantees of being an unbiased estimator. 
These approaches describe which parts of the system and specification are exercised, or roughly how much of the explicit state graph has been visited, but not how far verification has progressed within the underlying execution space. 
In contrast, our contribution is complementary: we start with a complete set of behaviors as our coverage goal and we estimate the size of the set.
Moreover, our notion of behavior identifies two causally identical executions as one.
Closest to our work is the estimate of \cite{gator}, which inspired this work.
As discussed before, our paper provides the precise complexity bounds for the estimation problem and ensures that the estimator is unbiased.

Purely statistical techniques have also been proposed in recent  fuzzing literature.
For example, the use of Good-Turing estimator \cite{10.1145/3210309, 10.1145/3611019,10.1145/3468264.3468570} has been proposed to give statistical evidence to continue or to stop a fuzzing campaign; it is unclear if their estimate can
be converted to an estimate for the concurrent state space.
More heuristic notions of saturation or past coverage trends have also been proposed to predict the expected time to full coverage  \cite{10.1145/3597503.3639198,10.1145/3597926.3598043}. 
%

\section{Conclusions and Future Work}

In this paper, we looked at the problem of estimating the size of the space explored by a model checker on a bounded concurrent program. Even though there has been previous work on related questions, they were based on heuristic approaches without any theoretical results. 
We lay the theoretical foundations for this problem, both by exhibiting hardness results as well as by coming up with the first poly-time unbiased estimator for the problem. 
An implementation of our procedure on a wide range of shared-memory concurrency benchmarks shows that our estimator provides robust estimates whilst only using a modest number of samples.
This work  opens up avenues for interesting future work including, accounting 
for weaker equivalences such as reads-from equivalence~\cite{readsFrom2019},
reads-value-from equivalences~\cite{readsValue2021} and recent intermediate
equivalences such as grain equivalence~\cite{FarzanMathur2024} as well as weaker memory models.

\begin{acks}
We thank the reviewers for their helpful comments. We also thank Viktor Vafeiadis and Michalis Kokologiannakis for their suggestions, which greatly improved the presentation of the paper. This research was sponsored in part by the \grantsponsor{DFG}{Deutsche Forschungsgemeinschaft}{https://www.dfg.de/} project \href{https://gepris.dfg.de/gepris/projekt/389792660}{\grantnum{DFG}{389792660}} TRR 248–CPEC. Umang Mathur is partially supported by the National Research Foundation, Singapore, and Cyber Security Agency of Singapore under its National Cybersecurity R\&D Programme (Fuzz Testing <NRF-NCR25-Fuzz-0001>). Any opinions, findings and conclusions, or recommendations expressed in this material are those of the author(s) and do not reflect the views of National Research Foundation, Singapore, and Cyber Security Agency of Singapore.
\end{acks}
\section*{Data-Availability Statement}
The artifact for this paper is publicly available on Zenodo~\cite{zenodo_artifact}. The source code of \jmc, including the implementations mentioned in this work, is publicly available at~\cite{jmc}.

\bibliographystyle{ACM-Reference-Format}
\bibliography{ref.bib}

\newpage
\appendix
\section{Complexity of Counting} \label{theory}

In this section, we prove our theoretical hardness results on the (Approximate) Counting Problem. We begin by proving the hardness of the Counting Problem, i.e., Theorem~\ref{thm:counting-problem-hardness}.

\subsection{Proof of Theorem~\ref{thm:counting-problem-hardness}} \label{theory:A}

In order to prove that the Counting Problem is $\sharpP$-complete, we first recall
the definition of the complexity class $\sharpP$. Suppose we are given some relation $R(x,y)$ over some set. For each $x$, we can associate the number $\#R(x)$ as the number
of possible $y$ such that $R(x,y)$ is true. Intuitively, if we think of $x$ as some problem instance and $y$ as a possible solution to the problem instance $x$, then 
$\#R(x)$ simply counts the number of possible solutions for $x$.

\begin{example}
    Let CNF be the relation such that CNF$(x,y)$ is true iff $x$ denotes a formula in CNF
    and $y$ denotes a satisfying assignment for $x$. Then $\#\text{CNF}(x)$ is simply the number
    of satisfying assignments of $x$.
\end{example}

This definition of $\#R(x)$ now leads us to the definition of $\sharpP$.
For any relation $R$, we say that the function $\#R$ belongs to $\sharpP$ if there is a non-deterministic polynomial-time Turing machine that, for any input $x$, has exactly $\#R(x)$ many accepting paths.

\begin{example}
    Note that $\#\text{CNF} \in \sharpP$: We can have a non-deterministic Turing machine 
    that on input $x$, simply guesses an assignment $y$ of $x$ and accepts iff $y$ satisfies $x$. This example already illustrates the power of $\sharpP$ : If we can efficiently count $\#\text{CNF}$, then we can easily decide if a formula $x$ is satisfiable
    by simply checking if $\#\text{CNF}(x) > 0$. 
\end{example}

Now, similar to the notion of NP-hardness, we also have a notion of $\sharpP$-hardness.
For a relation $R$, the function $\#R$ is said to be $\sharpP$-hard if every 
function in $\sharpP$ can be reduced to $\#R$ by means of polynomial-time Turing reductions. $\#R$ is then said to be $\sharpP$-complete if it is both in $\sharpP$
and $\sharpP$-hard.

We will not formally define Turing reductions here, but for the purposes of the paper, only the following two notions are important. 
\begin{itemize}
    \item The function $\#\text{CNF}$ is $\sharpP$-complete.
    \item Suppose we have two functions $\#R_1$ and $\#R_2$ such that $\#R_1$ is $\sharpP$-hard. Further, suppose we have two polynomial-time algorithms $f, g$ such that $R_2(f(x)) \subseteq R_2(g(x))$ and $\#R_1(x) = \#R_2(f(x)) - \#R_2(g(x))$. Then $\#R_2$ is $\sharpP$-hard as well.
    (In this case, the algorithms $f,g$ are called a subtractive reduction and they are a restricted class of Turing reductions)~\cite{subtractive-reductions}. 
\end{itemize}

Note that proving that counting a function $\#R$ is $\sharpP$-complete is evidence of intractability. Indeed, if we can efficiently count $\#R$ for some $\sharpP$-complete
function $\#R$, then we can also efficiently count $\#\text{CNF}$, and by the argument above,
we can also efficiently decide satisfiability of propositional formulas, which is NP-hard.

As our first result, we prove the $\sharpP$-completeness of the Counting Problem
that we consider in this paper. To this end, let us consider the relation $\mathcal{R}(P,G)$, which is true iff $P$ is a (bounded) program and $G$ is a reachable maximal execution graph of $P$. Then $\#\mathcal{R}$ is precisely the Counting Problem that
we defined in Problem {prob1}. We now show that

\begin{theorem}
    The Counting Problem $\#\mathcal{R}$ is $\sharpP$-complete. Moreover, the hardness holds even when the number of threads of the given program is 2.
\end{theorem}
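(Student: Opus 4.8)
The plan is to prove both directions---membership in $\sharpP$ and $\sharpP$-hardness---with the hardness part carried out as a \emph{subtractive} reduction from $\#\text{CNF}$ using only two threads.

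For membership, I would exhibit a nondeterministic polynomial-time machine that guesses a candidate maximal graph and verifies it deterministically. Since every maximal graph has at most $|P|$ events, the machine can guess the event set together with the relations $\rf{}$ and $\mo{}$ in polynomial space, and then check deterministically that (i) the guess is a sequentially consistent execution graph, (ii) its events conform to the program (each thread's events form a legal prefix of that thread under the values it reads in the guess), (iii) it is reachable in $\mathcal{T}(P)$, and (iv) it is maximal, i.e. $\nextop_P(G)=\emptyset$. Reachability is the only nonobvious check, and here I would reuse the peeling argument from the proof of Proposition~\ref{prop1}: a consistent, program-conforming graph is reachable precisely when its sequentially maximal events can be removed one at a time down to $G_\init$, which is a polynomial-time test. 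Because verification is deterministic, each reachable maximal graph contributes exactly one accepting path and every other guess contributes none, so the machine has exactly $C(P)$ accepting paths; hence $\#\mathcal{R}\in\sharpP$.

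For hardness I would reduce from $\#\text{CNF}$. Given $\varphi=c_1\wedge\cdots\wedge c_m$ over $v_1,\dots,v_n$, I build a two-thread program $P_\varphi$ whose reads-from choices encode assignments: thread~$1$ writes once to each of fresh locations $x_1,\dots,x_n$ (and once to a marker location $z$), while thread~$2$ reads $x_1,\dots,x_n$; each such read may take its value from $\init$ or from thread~$1$'s write, and these $n$ choices are mutually independent under sequential consistency, yielding exactly $2^n$ assignments, each a distinct execution graph. Thread~$2$ then evaluates $\varphi$ on the values it read (a deterministic sequential computation expressible with register and branch instructions) and performs an additional read of $z$ \emph{only} when $\varphi$ is falsified. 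Because $z$ carries exactly the two writes $\init$ and thread~$1$'s write, a falsifying assignment admits two completions and a satisfying assignment admits exactly one, giving $C(P_\varphi)=\#\text{SAT}(\varphi)+2\,(2^n-\#\text{SAT}(\varphi))=2^{n+1}-\#\text{SAT}(\varphi)$.

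To turn this into the required subtractive reduction, I would pair $P_\varphi$ with a trivial two-thread program $P_{\mathrm{all}}$ that runs the same $x_1,\dots,x_n$ races and \emph{always} reads $z$, so that $C(P_{\mathrm{all}})=2^{n+1}$ and the maximal graphs of $P_\varphi$ embed injectively into those of $P_{\mathrm{all}}$ (send the unique completion of a satisfying assignment to its ``$z$ reads $\init$'' graph, and map the $z$-choices of falsifying assignments identically). Then $\#\text{SAT}(\varphi)=C(P_{\mathrm{all}})-C(P_\varphi)$, which is exactly a subtractive reduction, yielding $\sharpP$-hardness with two threads. The main obstacle is precisely why the reduction must be subtractive rather than parsimonious: in a bounded program every run terminates and hence yields a maximal graph, so a falsifying assignment can never be forced to contribute \emph{zero} graphs; instead the construction makes each falsifying assignment contribute a controlled constant surplus of one extra graph, which is then cancelled by subtracting the count of the auxiliary program $P_{\mathrm{all}}$. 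A secondary point to verify carefully is that the $n$ reads-from choices stay genuinely independent (no spurious $\seqcon{}$ cycle couples them) and that the data-dependent branch in thread~$2$ introduces no execution graphs beyond the intended factor of two.
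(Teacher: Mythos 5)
Your proposal is correct, but it reaches the result by a genuinely different route than the paper, in both halves. For membership, the paper's machine guesses a \emph{schedule}, simulates $P$, and accepts iff the resulting trace is canonical (so each maximal graph is counted via its unique canonical linearization); you instead guess the maximal graph itself and verify it, using the peeling argument of Proposition~\ref{prop1} to certify reachability in polynomial time. Your variant is sound --- the key point, which you implicitly use correctly, is that an $\seqcon{}$-maximal read must read from the globally $\mo{}$-maximal write (otherwise an $\fr{}$-edge would leave it), so peeling really does invert valid $\mathcal{T}(P)$ transitions --- though you should fix a canonical encoding of guesses so that each graph yields exactly one accepting path. For hardness, the paper encodes each variable through a \emph{single} location $x_1$ using a three-valued ($T$/$F$/\$) write protocol and a lockstep reading discipline; this produces ``garbage'' (non-lockstep) maximal graphs that must be argued to coincide across $P_\varphi$ and $P'$ so they cancel in the subtraction. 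Your construction instead races thread~$2$'s reads against $n$ \emph{distinct} single-writer locations $x_1,\dots,x_n$, and since both threads access these locations in the same ascending order, no $\fr{};\po{};\rf{};\po{}$ cycle can arise, so all $2^n$ reads-from combinations are sequentially consistent and reachable --- giving the clean identities $C(P_\varphi)=2^{n+1}-\#\mathrm{SAT}(\varphi)$ and $C(P_{\mathrm{all}})=2^{n+1}$ with \emph{no} garbage executions at all. The arithmetic skeleton (one graph per satisfying assignment, two per falsifying one, subtracted from a program that always reads the marker) is the same as the paper's, but your gadget eliminates the lockstep analysis entirely; the paper's more elaborate single-location machinery pays off only in its second theorem, where the number of \emph{locations} must be constant, a regime your multi-location trick cannot handle. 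One shared looseness: the formal definition of subtractive reductions requires the witness set of one instance to be a \emph{subset} of the other's, and in both your construction and the paper's, the satisfying-assignment graphs of $P_\varphi$ are not maximal graphs of the always-read program, so strict containment fails; your ``injective embedding'' does not repair this, but since the identity $\#\mathrm{SAT}(\varphi)=C(P_{\mathrm{all}})-C(P_\varphi)$ already yields a two-query polynomial-time Turing reduction, $\sharpP$-hardness follows under the reduction notion the paper actually invokes.
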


\begin{proof}
    First, we will prove $\#\mathcal{R}$ is in $\sharpP$. Indeed, it is easy to see that there is a non-deterministic Turing machine, that on input a program $P$, simply simulates $P$ with some schedule among the threads and records the execution trace. 
    Once this is done, it checks if this trace is canonical. 
    Since this can be achieved in polynomial time, we can conclude
    that the Counting Problem is in $\sharpP$.

    To prove $\sharpP$-hardness, we give a subtractive reduction from $\#$CNF.
    To this end, let $\varphi$ be a formula in CNF over $n$ variables. We will construct
    two bounded concurrent programs $P_\varphi$ and $P'$ such that $\#$CNF$(\varphi) = \#\mathcal{R}(P') - \#\mathcal{R}(P_\varphi)$. This will then establish the required
    subtractive reduction.

    The intuitive idea behind $P_\varphi$ and $P'$ will be the following.
    We will construct $P_\varphi$ in such a way so that  every satisfying
    assignment of $\varphi$ will correspond to exactly one complete execution graph,
    every unsatisfying assignment of $\varphi$ will correspond to exactly two
    complete execution graphs and then there will be some remaining complete execution
    graphs called garbage execution graphs. Then, we will construct $P'$ so that
    every assignment of $\varphi$ will correspond to exactly two complete execution graphs
    and then there will be some garbage execution graphs which will be of the same magnitude as the number of garbage execution graphs of $P_\varphi$. It would then follow that $\#$CNF$(\varphi) = \#\mathcal{R}(P') - \#\mathcal{R}(P_\varphi)$,
    thereby proving the claim.

    First, we show how to construct $P_\varphi$. $P_\varphi$ will have 2 threads $t_1, t_2$
    and $n+2$ locations $x_1,x_2,r_1,\dots,r_n$. The set of values of the locations will be $\{T,F,\$\}$. Initially, we assume that all the locations have the value $F$.

    The thread $t_1$ will work as follows: It will have $n$ iterations
    and in each iteration it will do three writes to $x_1$ - First it
    will write $T$, then $F$ and finally $\$$.
    After the $n$ iterations, it will terminate by writing $T$ into $x_2$. 
    
    The thread $t_2$ will work as follows: It will have $n$ iterations and in each $i^{th}$ iteration it will do a read from $x_1$ and it will store this value into $r_i$. If this value is $\$$, it will read from $x_2$ and terminate.
    Otherwise, it will read from $x_1$ again. If this new value is not $\$$, it will
    read from $x_2$ and terminate. Otherwise, it will continue with the $(i+1)^{th}$ iteration. 
    Finally, after all the $n$ iterations are over, it would have stored either $T$ or $F$
    in each $r_i$. It will then check if the formula $\varphi$ is satisfiable
    with the assignment where the $i^{th}$ variable of $\varphi$ is assigned the value of $r_i$. If so,
    it will terminate. Otherwise, it will read from $x_2$ and terminate.

    To fully understand how $P_\varphi$ conforms to the intuition mentioned before,
    we first define \emph{lockstep execution graphs}. A reachable maximal execution graph of $P_\varphi$ is a lockstep execution graph if, whenever $t_2$ reads for the first time from $x_1$ in its $i^{th}$ iteration, the corresponding write was $t_1$ writing $T$ or $F$ into $x_1$
    in its $i^{th}$ iteration. Furthermore, whenever $t_2$ reads for the second time from $x_1$ in its $i^{th}$ iteration, the corresponding write was $t_2$ writing $\$$ into $x_1$ in its $i^{th}$ iteration.
    These are precisely the reachable maximal execution graphs of $P_\varphi$ which do not allow $t_2$ to terminate within any of its $n$ iterations. 
    All the other type of reachable maximal execution graphs correspond precisely to those executions in which $t_2$ will terminate within any one of its $n$ iterations.

    Based on the definition of lockstep execution graphs, it is now easy to
    show how to relate assignments of $\varphi$ to lockstep execution graphs
    of $P_\varphi$.
    \begin{itemize}
        \item For every satisfying assignment $A$ of $\varphi$, we can construct 
        a lockstep execution graph as follows: If $A$ assigns the $i^{th}$ variable of $\varphi$ to $T$, then we let $t_2$ read for the first time from $x_1$ in its $i^{th}$ iteration after $t_1$ has written $T$ in its $i^{th}$ iteration; otherwise we let $t_2$ read for the first time from $x_1$ in its $i^{th}$ iteration
        after $t_1$ has written $F$ in its $i^{th}$ iteration. Note that, in this way, after the $n$ iterations of $t_2$, $t_2$ will check that the assignment stored in $r_1,\dots,r_n$ satisfies the formula $\varphi$ and then terminate without reading from $x_2$. Hence, it does not matter when we let $t_1$ write into $x_2$, since that will never be read by $t_2$. 

        \item For every unsatisfying assignment $A$ of $\varphi$, we can construct
        two lockstep execution graphs as follows: If $A$ assigns the $i^{th}$ variable of $\varphi$ to $T$, then we let $t_2$ read for the first time from $x_1$ in its $i^{th}$ iteration after $t_1$ has written $T$ in its $i^{th}$ iteration; otherwise we let $t_2$ read for the first time from $x_1$ in its $i^{th}$ iteration
        after $t_1$ has written $F$ in its $i^{th}$ iteration. Note that, in this way, after the $n$ iterations of $t_2$, $t_2$ will check that the assignment stored in $r_1,\dots,r_n$ does not satisfy the formula $\varphi$.
        Hence, it will terminate only after reading from $x_2$. Depending on whether
        we allow $t_1$ to write into $x_2$ before $t_2$ reads from $x_2$ or not, we will 
        get two lockstep execution graphs.

        \item Finally, every other way of scheduling $t_1$ and $t_2$ will result
        in a maximal execution graph that is not a lockstep execution graph. Such graphs
        will be called garbage execution graphs and will correspond to exactly
        those maximal execution graphs in which $t_2$ terminates during one of its $n$ iterations.
    \end{itemize}

    This completes the description behind the behavior of $P_\varphi$. Now, we show how to construct $P'$. $P'$ will work in exactly the same way as $P_\varphi$, except that
    it will always read from $x_2$ before terminating, i.e., after it checks
    whether the values in $r_1,\dots,r_n$ satisfy the formula $\varphi$, it will
    read from $x_2$ and then terminate. Based on exactly the same argument as before,
    we can show that $P'$ has exactly two lockstep execution graphs corresponding to 
    every assignment of $\varphi$ and every other maximal execution graph
    is a garbage execution graph that is present in $P_\varphi$ as well.
    Hence, the number of satisfying assignments of $\varphi$ is precisely $\#\mathcal{R}(P') - \#\mathcal{R}(P_\varphi)$, which completes the proof.
\end{proof}

We now present a different construction which shows that hardness holds when the number of locations is a constant, but the number of threads is unbounded. 

\begin{theorem}
    The Counting Problem is $\#P$-hard, even when the number of locations is a constant.
\end{theorem}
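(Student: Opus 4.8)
The plan is to again exhibit a subtractive reduction from $\#\text{CNF}$, but now \emph{dualizing} the previous construction: rather than fixing two threads and using $\Theta(n)$ locations, I would fix a constant number of shared locations and instead use $\Theta(n)$ threads. Exactly as before, I would construct two bounded programs $P_\varphi$ and $P'$ over the same constant set of locations so that each satisfying assignment of $\varphi$ contributes exactly one maximal reachable execution graph, each unsatisfying assignment contributes exactly two, and every remaining (\emph{garbage}) graph occurs with identical multiplicity in $P_\varphi$ and $P'$. Cancelling the garbage then yields $\#\text{CNF}(\varphi) = \#\mathcal{R}(P') - \#\mathcal{R}(P_\varphi)$, establishing $\sharpP$-hardness.

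The first component is an \emph{assignment generator} that produces all $2^n$ assignments using a single shared location. I would have one writer thread perform two writes $\opw(x,T)$ and $\opw(x,F)$ to a location $x$, and $n$ \emph{variable threads} $V_1,\dots,V_n$, each reading $x$ exactly once. Since the variable threads only read, their reads-from choices are mutually independent under sequential consistency: each $V_i$ may read either write, and each of the $2^n$ combinations yields a distinct consistent graph (distinct $\rf{}$), with $V_i$ recording its chosen bit in a local register. Thus the $\rf{}$ relation alone encodes the truth assignment, at the cost of only one location.

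The delicate second component is \emph{evaluating} $\varphi$ on the chosen assignment while still using only $O(1)$ locations---the point where the dual of the earlier argument is genuinely needed. Because a constant number of locations cannot simultaneously hold $n$ independent bits for a single evaluator to read, I would \emph{serialize} the communication through a constant set of control locations (a phase/token location $p$ and a one-word data bus $d$), mirroring the lockstep idea of the previous proof. An evaluator thread runs $n$ rounds; in round $i$ it advances $p$ to signal $V_i$'s turn, $V_i$ reacts by publishing its recorded bit on $d$, and the evaluator reads $d$ into a local register. I would call an execution \emph{lockstep} when, in every round, the evaluator reads exactly the bit published by $V_i$ in that round; lockstep executions are in bijection with assignments of $\varphi$, while every non-lockstep schedule (a stale or mistimed read of $d$ or $p$) drives the evaluator onto a fixed abort path built \emph{identically} into $P_\varphi$ and $P'$, so it contributes only garbage. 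After the $n$ rounds the evaluator checks $\varphi$ on the gathered register values and, exactly as in the two-thread case, a satisfying assignment terminates directly (one graph) whereas an unsatisfying assignment first reads a termination flag written by a finisher thread, producing two graphs; in $P'$ the flag is always read, giving two graphs per assignment while leaving the same garbage.

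The main obstacle is \emph{multiplicity control}: with only constantly many locations and many threads writing the shared control words, I must ensure that each lockstep assignment yields precisely the intended one or two graphs and not some larger, assignment-dependent number. Concretely, the handshake on $p$ and $d$ must be engineered so that, along lockstep schedules, no additional reads-from options or $\mo{}$ linearizations among the control writes are available (otherwise a satisfying assignment could spawn several graphs and break the clean $1$-versus-$2$ accounting), and so that the garbage graphs arising from mistimed handshakes are in exact bijection between $P_\varphi$ and $P'$. Verifying this bookkeeping---that lockstep executions biject with assignments, that off-lockstep executions coincide across the two programs, and that the $\mo{}$ and $\rf{}$ choices contribute no hidden factors---is the crux of the proof, after which the subtractive identity $\#\text{CNF}(\varphi)=\#\mathcal{R}(P')-\#\mathcal{R}(P_\varphi)$ follows as before.
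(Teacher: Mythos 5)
Your proposal is correct in outline and deliberately reuses the paper's overall scaffolding for this theorem (a subtractive reduction from $\#\text{CNF}$, a notion of lockstep executions, cancellation of garbage graphs between $P_\varphi$ and $P'$, and the one-versus-two accounting via a final read of a termination flag), but your gadget is genuinely different from the paper's. The paper's constant-location construction has threads $T_1,T_2$ play a write/read ping-pong on a location $d$ so that a tiebreaker thread $T_3$'s reads-from choices encode the assignment; $T_3$ rebroadcasts each bit on $e$, $m$ parallel clause threads $C_j$ each re-read the entire assignment from $e$ and post an indexed vote $T_{n+j}$ or $F_{n+j}$ on $f$, and an aggregator $T_4$ collects the votes. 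You instead encode the assignment directly in the $\rf{}$ choices of $n$ one-shot reader threads on a single location $x$ (two writes by one writer thread), and evaluate $\varphi$ locally inside a single evaluator that polls the readers serially through a token location $p$ and a data bus $d$. Your layout is arguably simpler: since threads may perform arbitrary deterministic local computation, one evaluator suffices, and the paper's vote-aggregation stage ($C_j$ plus $T_4$ plus the location $f$) disappears. The crux you flag but leave open is dischargeable by exactly the device the paper makes explicit: tag every handshake value with its round index (values such as $T_i/F_i$ for round $i$, so the value domain grows with $n$, which the model permits), so that each expected value has a unique writer; this forces all lockstep $\rf{}$ edges and makes stale or mistimed reads detectable, which is what routes them to the shared abort path. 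For the hidden-multiplicity worry, note that $\mo{}$ on the bus $d$ is forced along lockstep executions because the evaluator reads $d$ once per round in program order, and any inverted $\mo{}$ between two polled writes closes a $\po{};\fr{};\rf{}$ cycle in $\seqcon{}$, while $\mo{}$ on $x$, $p$, and the flag location is forced because each has a single writer thread. With that indexing in place, each assignment yields exactly one lockstep graph in $P_\varphi$ when satisfying (the evaluator halts without reading the flag) and two when unsatisfying, $P'$ yields two per assignment, aborted executions coincide between the two programs since they differ only in the evaluator's final step, and your identity $\#\text{CNF}(\varphi)=\#\mathcal{R}(P')-\#\mathcal{R}(P_\varphi)$ follows, matching the paper's conclusion by a different construction.
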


\begin{proof}
    We once again construct a subtractive reduction from $\#$CNF. Our proof is inspired by the construction in 
    \cite{sharedMem}, which shows that testing sequential consistency is $\mathsf{NP}$-hard for execution traces with only a constant number of locations. However, our reduction introduces several nontrivial modifications.

    Let $\varphi$ be a CNF formula with $n$ variables
    and $m$ clauses. Similar to the previous proof, we will construct
    two programs $P_\varphi$ and $P'$ such that  $\#$CNF$(\varphi) = \#\mathcal{R}(P') - \#\mathcal{R}(P_\varphi)$. The intuition behind these programs will
    be the same as in the previous proof.

    First, we describe the program $P_\varphi$. $P_\varphi$ will have $m+4$ threads
    denoted by $\{T_1,T_2,T_3,T_4,C_1,C_2,\dots,C_m\}$ and four locations $d,e,f,g$. 
    Furthermore, the set of values will be $\{T_j : 1 \le j \le n+m\} \cup \{F_j : 1 \le j \le n+m\} \cup  \{\$\}$. All of the locations will initially have the value $\$$.
    
    Intuitively, the threads $T_1$ and $T_2$ will be responsible for choosing either $T$ or $F$ for each variable of $\varphi$. The thread $T_3$ will act as a tiebreaker between $T_1$ 
    and $T_2$, by choosing one of their values for each variable of $\varphi$.
    Each thread $C_i$ will be responsible for checking that the assignment
    chosen by $T_3$ satisfies the $i^{th}$ clause of $\varphi$.
    Finally thread $T_4$ collects all the votes from all of the $C_i$ threads and 
    ensures that all the clauses are satisfied.

    More precisely, the thread $T_1$ will behave as follows: It will have $n$ iterations
    and in each $i^{th}$ iteration it will first write $T_i$ to the location $d$. Then, it will read from $d$.
    If this value is not $F_i$, then it will read from $g$ and terminate.
    Otherwise, it will continue with the $(i+1)^{th}$ iteration.

    The thread $T_2$ will behave as follows: It will also have $n$ iterations
    and in each $i^{th}$ iteration it will first read from the location $d$.
    If this value is not $T_i$, it will read from $g$ and terminate.
    Otherwise, it will write $F_i$ to the location $d$ and continue with the $(i+1)^{th}$ iteration.

    The thread $T_3$ will behave as follows: It will also have $n$ iterations
    and in each $i^{th}$ iteration it will first read from the location $d$.
    If this value is neither $T_i$ or $F_i$ it will read from $g$ and terminate.
    Otherwise, it will write the value that it read into the location $e$
    and continue with the $(i+1)^{th}$ iteration.
    At the end of $n$ iterations, it will write $T_1$ onto the location $g$.

    Each thread $C_j$ will behave as follows: It will have a single Boolean local variable $flag$ which is initially set to false. Furthermore, it will have $n$ iterations
    and in each $i^{th}$ iteration it will first read from location $e$.
    If this value is neither $T_i$ or $F_i$, it will read from $g$ and terminate.
    Otherwise, 
    \begin{itemize}
        \item If the $i^{th}$ variable of $\varphi$ appears in the $j^{th}$ clause positively (resp. negatively) and it read $T_i$ (resp. $F_i$), then it will 
        set its $flag$ variable to true.
        \item If the $i^{th}$ variable of $\varphi$ appears in the $j^{th}$ clause positively (resp. negatively) and it read $F_i$ (resp. $T_i$), then it will do nothing.
    \end{itemize}
    At the end of the $n^{th}$ iteration, $C_j$ will read check if its $flag$ variable is true.
    If so, then it will write $T_{n+j}$ onto the location $f$.
    Otherwise, it will write $F_{n+j}$ onto the location $f$.

    Finally, the thread $T_4$ will behave as follows: It will also have a single Boolean local variable $flag$ which is initially set to true. Furthermore, it will have $m$ iterations
    and in each $i^{th}$ iteration it will read from the location $f$.
    If this value is not $T_{n+i}$, then it will set $flag$ to false; otherwise, it will not do anything. It will then continue
    with the $(i+1)^{th}$ iteration.
    Finally, at the end of the $m^{th}$ iteration, it will check if $flag$ is true. 
    If it is, it will terminate; otherwise it will read from $g$ and terminate.

    Similar to the previous reduction, we define \emph{lockstep execution graphs}
    as reachable maximal execution graphs obtained where each thread finishes only after executing
    all of its iterations. Again similar to the previous reduction, we can 
    map one lockstep execution graph for every satisfying assignment
    and two lockstep execution graphs for every unsatisfying assignment
    (and all the other remaining execution graphs will not be lockstep execution graphs).
    Now, if we define $P'$ to be the same as $P_\varphi$, except thread $T_4$ always
    reads from $g$ and terminates, then similar to the previous reduction
    we have $\#$CNF$(\varphi) = \#\mathcal{R}(P') - \#\mathcal{R}(P_\varphi)$.
    This completes the reduction.
\end{proof}


Having proved our hardness results for the Counting Problem, we now move to the Approximate Counting problem.

\subsection{Proof of Theorem~\ref{thm:approximate-counting-problem-hardness}} \label{theory:B}

Suppose there exists a polynomial-time randomized algorithm and some $\epsilon > 0$
and $\rho > 1/2$
with the following property: For every program $P$ of size $n$,
the algorithm returns a number $N$ such that $N/2^{n^{1-\epsilon}} \le C(P) \le N \times 2^{n^{1-\epsilon}}$ with probability at least $\rho$. Pick $k$ such that $k\epsilon > 1$. By definition of $k\epsilon$, it follows that for any constant $D$ and for all sufficiently large $n$,
$n^k > D(n^{k(1-\epsilon)+1})$. Using this, we now show that SAT can be decided in randomized polynomial time, which would prove that RP = NP. 

Let $\varphi$ be a Boolean formula in CNF over $n$ variables and $m$ clauses.
We construct a program $P_\varphi$ with threads $t_0,t_1,\dots,t_{n+n^k}$ and locations $x_1,\dots,x_{n+n^k}, r_1,\dots,r_n$. 
The values of the locations
will be $\{T,F\}$, with $F$ being the default value. 

Each thread $t_i$ for $1 \le i \le n+n^k$, simply writes $T$ into the location $x_i$
and terminates. The thread $t_0$ has $n$ iterations and in each $i^{th}$ iteration
it will copy the value in the location $x_i$ to $r_i$. Note that
after $n$ iterations, $t_0$ will store an assignment of the formula $\varphi$
in $r_1,\dots,r_n$. It will then check if this assignment satisfies $\varphi$.
If the assignment satisfies $\varphi$, it will do $n^k$ more steps, where in each $i^{th}$ iteration it will read from the location $x_{n+i}$ and at the end of all of 
these iterations, it will terminate.
If the assignment does not satisfy $\varphi$, it will simply terminate at that point.
Note that this program can be easily encoded in a way so that its size is $O(n^k)$.

For each assignment $A$ of $\varphi$, we can construct reachable maximal execution graph(s) of $P_\varphi$ as follows: If $A$ assigns the $i^{th}$ variable to true, then we let $t_i$
execute before the $i^{th}$ iteration of $t_0$; otherwise, we let $t_i$ execute after
the $i^{th}$ iteration of $t_0$. This ensures that when $t_0$ finishes its $n$ iterations,
the assignment stored in $r_1,\dots,r_n$ is the same as $A$. Then, we let
$t_0$ check whether $A$ satisfies $\varphi$. If it does not then $t_0$ immediately terminates and we schedule all of the remaining threads $t_{n+1},\dots,t_{n+n^k}$
to get a maximal execution graph. Suppose $A$ satisfies $\varphi$. Then, we can generate $2^{n^k}$ different maximal execution graphs by scheduling $t_0,t_{n+1},\dots,t_{n+n^k}$ in all possible ways.
Hence, for each unsatisfying assignment, we get one reachable maximal execution graph
and for each satisfying assignment, we get $2^{n^k}$ reachable maximal execution graphs.
From the construction of $P_\varphi$, it can be easily inferred that these are the only possible reachable maximal execution graphs of $P_\varphi$.

It follows then that if $\varphi$ is unsatisfiable, then $C(P_\varphi) = 2^n$
and if $\varphi$ is satisfiable, then $C(P_\varphi) \ge 2^n - 1 + 2^{n^k}$.
Hence, if $\varphi$ is unsatisfiable, then our assumed algorithm, on input $P_\varphi$, returns a number $N$ such that $N/2^{|P_\varphi|^{1-\epsilon}} \le 2^n$ and so $N \le 2^n \times 2^{|P_\varphi|^{1-\epsilon}} \le
2^{O(n^{k(1-\epsilon)}) + 1}$. On the other hand, if $\varphi$ is satisfiable,
then our assumed algorithm, on input $P_\varphi$,
returns a number $N$ such that $2^{n^k} \le N \times 2^{|P_\varphi|^{1-\epsilon}}$
and so $2^{n^k}/2^{|P_\varphi|^{1-\epsilon}} \le N$ which implies that
$2^{n^k-O(n^{k(1-\epsilon)})} \le N$.
If we can show that $2^{O(n^{k(1-\epsilon)}) + 1} \le 2^{n^k-O(n^{k(1-\epsilon)})}$ (for all sufficiently large $n$), then we can clearly decide whether $\varphi$ is satisfiable
with probability $\rho$, by getting the number $N$ from our assumed algorithm and checking whether it is lower than the first quantity
or higher than the second quantity. Hence, the only thing left to show is this inequality.

To this end, it suffices to show that $O(n^{k(1-\epsilon)}) + 1 \le n^k-O(n^{k(1-\epsilon)})$. For showing this, it suffices to show that $O(n^{k(1-\epsilon)+1}) \le n^k - O(n^{k(1-\epsilon)+1})$ which is the same as showing
$2O(n^{k(1-\epsilon)+1}) \le n^k$. Since $k\epsilon > 1$,
it follows that $k(1-\epsilon) + 1 < k$. Hence, for all sufficiently large $n$,
$n^k$ will eventually dominate $2O(n^{k(1-\epsilon)+1})$, which completes the proof.

Hence, with probability $\rho$, we can correctly decide whether $\varphi$ is satisfiable or not. By using the standard machinery of self-reducibility~\cite{aroraBarak}, it follows
that we can convert such an algorithm into one that always rejects when $\varphi$ is unsatisfiable and accepts with probability $> 1/2$ when $\varphi$ is satisfiable.
It then follows that RP = NP. 

Furthermore, the same proof implies that, if a deterministic algorithm exists
which would always output an estimate $N$ such that $N/2^{n^{1-\epsilon}} \le C(P) \le N \times 2^{n^{1-\epsilon}}$, then P = NP. (This is because in that case $\rho$ would be 1 and so we would be able to decide with probability 1, i.e., deterministically, whether
$\varphi$ is satisfiable).

\section{Proof of Proposition~\ref{prop1}}\label{theory:C}

First, we will show that any predecessor of $G$ defines a unique 
sequentially maximal event of $G$. Indeed, for any predecessor $H$ of $G$,
let $e_H$ be the event added from $H$ to reach $G$.
It is immediate to see that $e_H \neq e_H'$ if $H \neq H'$.
Now, let $e = e_H$ for some such predecessor $H$.
Inspecting the definition of $\rightarrow$ in $\mathcal{T}(P)$, it is easy to see that there is no $e'$ in $G.E$ such that $(e,e') \in G.\po{} \cup G.\rf{} \cup G.\mo{}$. Finally, the only event $e'$ for which $(e,e') \in G.\rf{}^{-1}$ could possibly happen is $e' = H.\momax{\loc(e)}$. However,
by definition there cannot be any $e''$ such that $(e',e'') \in G.\mo{}$ and so there cannot be any $e'$ such that $(e,e') \in G.\fr{}$.
Hence $e$ is a sequentially maximal event.

Next, we will show that any sequentially maximal event gives rise to a unique predecessor. Indeed, for any sequentially maximal event $e$ of $G$, let $H_e = G \setminus \{e\}$. Since $e$ is sequentially maximal, removing 
$e$ preserves all relation properties, i.e. $H_e.\rf{} \subseteq G.\rf{}$ and $H_e.\mo{}\subseteq G.\mo{}$. Thus, $H_e$ is a consistent execution graph. Note that $H_e \neq H_{e'}$ if $e \neq e'$. Furthermore $G$ can be obtained from $H_e$ by simply adding $e$ back in. Hence, the only thing we need to show is that $H_e$ is reachable.

To this end, since $H_e$ is sequentially consistent, we can pick a sequentially maximal event $e_1$ of $H_e$ and remove it from $H_e$ to get $H_{e_1}$, which will also be sequentially consistent, by the same argument as before. From there, we can remove another sequentially maximal event and so on and so forth, till we finally reach the initial execution graph $S_0$. This allows us to trace a path from $H_e$ all the way to $S_0$,
which proves that $H_e$ is reachable.
\section{Proof of Theorem \ref{th:stockmeyer}: A Sub-Exponential Time Approximate Counter}
\label{sec:stockmeyer}


To describe our algorithm, we assume that we are given a concurrent program $P$, 
which defines the tree $\mathcal{D}(P)$. 
As mentioned before, if $d$ is the size of the program $P$, 
this tree has height $d^2$ and an inspection of the construction shows that the degree of each node is at most $d^2$. 
However, every node need not have degree $d^2$. In order to make this tree uniform, we add dummy nodes to each node of $\mathcal{D}(P)$ so that the degree of each node becomes $d^2$. 
Call the resulting tree $\mathcal{H}(P)$. Note that $\mathcal{H}(P)$ 
is the complete $d^2$-ary tree of height $d^2$. 
(We will never explicitly construct $\mathcal{D}(P)$ nor $\mathcal{H}(P)$ during our algorithm; 
these are just defined implicitly).

We define an \emph{inorder numbering} of $\mathcal{H}(P)$, which associates a subset of numbers to each node of $\mathcal{H}(P)$ recursively in the following way: We start at the root with a variable called count initialized to 0. For each $i$ from 1 to $d^2$, we recursively compute the inorder numbering of the $i^{th}$ subtree of the root. At the end of this computation, let the value of count be $n_i$. We increment $n_i$ by 1,
add $n_i+1$ to the inorder numbering of the root and move to the next iteration of the loop (i.e., to $i+1$). Note that the base case of the recursion is when we arrive at a leaf, in which case we will simply increment count by 1 and add this incremented value to the inorder numbering of the leaf and go back to its parent. In this way, we associate a subset of numbers to each node of $\mathcal{H}(P)$. 

For any node $u$, let $M(u)$ be the subset of numbers associated to $u$ by the inorder numbering. Let $S(u)$ and $B(u)$ be the smallest and biggest numbers appearing in $M(u)$ respectively. It is easy to see that $M(u)$ is a single number for each leaf
and otherwise, consists of exactly $d^2$ numbers. Furthermore, $M(u) \cap M(v) = \emptyset$
for any two distinct nodes $u,v$. 
Hence, from the construction, it follows that each number between $1$ and $N = (d^2)^{d^2} + (d^2)[{(d^2)^{d^2+1}} - 1]$ is assigned to exactly one node.

Now, for any pair of numbers $1 \le a \le b \le N$, let $[a,b]$ denote the interval of natural numbers between $a$ and $b$. Notice that if $M(u) \cap [a,b] \neq \emptyset$
and $M(v) \cap [a,b] \neq \emptyset$, then the least common ancestor $w$ of $u,v$ also satisfies $M(w) \cap [a,b] \neq \emptyset$. Hence, for each interval $[a,b]$, there
is a unique node $w_{a,b}$ such that $M(w_{[a,b]}) \cap [a,b] \neq \emptyset$ and 
every node $u$ satisfying $M(u) \cap [a,b] \neq \emptyset$ belongs to the subtree of $w_{[a,b]}$. 

Based on this notion of inorder numbering, we now present our estimation algorithm. Our estimation algorithm works on a parameter $\theta$ and consists of two acts: one where we operate under the assumption that the number of leaves of $\mathcal{D}(P) \le \theta$ and another where we operate under the assumption that the number of leaves of $\mathcal{D}(P) > \theta$.

\paragraph{Act One: Number of leaves of $\mathcal{D}(P) \le \theta$. }
In order to describe the first act of the algorithm, we need some notation and some small observations. Recall that any number between 1 and $N$ is assigned to exactly one node. Therefore, we can order the nodes of $\mathcal{H}(P)$ totally as follows:
$u \preceq v$ iff $S(u) < S(v)$. We now make the following series of observations.

\subparagraph*{Observation 1: } We can compute the inorder numbering $M(w)$ of any node as follows. We start by setting $u$ to be the root and a counter $x$ set to 0. Intuitively $x+1$ will denote the smallest number that can appear in the inorder numbering of any descendant of $u$.
At each step, we find the unique child $v$ of $u$ that contains $w$ as its descendant.
Suppose $v = w$. If $v$ is a leaf, then we simply return $\{x+1\}$.
If $v$ is an internal node at level $d-j$ (for some $j$), we return $\{x + N_{j-1} + 1, x + 2 N_{j-1} + 2, \dots, x + d^2 N_{j-1} + d^2\}$ where $N_{j-1} = (d^2)^{j-1} + d^2 [(d^2)^j - 1]$. Otherwise, if $v$ is the $i^{th}$ child for some $i \ge 1$ and $v$ is at level $d-j$ (for some $j$), then we set $x = x + (i-1) N_{j-1} + (i-1)$ and set $u$ to $v$ and continue. It can be easily checked that this procedure correctly computes $M(w)$ and 
runs in time $d^2$.


\subparagraph*{Observation 2: } Given any node $u$ of $\mathcal{H}(P)$, we can check if 
there is a leaf of $\mathcal{D}(P)$ at the subtree rooted at $u$. Indeed, if $u$
corresponds to a dummy node, i.e., not a node of $\mathcal{D}(P)$, then
no such node will exist. On the other hand, if $u$ is a node of $\mathcal{D}(P)$
then indeed there will be a leaf of $\mathcal{D}(P)$ at the subtree rooted at $u$.
Hence, this check can be simply reduced to checking if $u$ is a node of $\mathcal{D}(P)$,
which can be done quite efficiently.

\subparagraph*{Observation 3: } Given any interval $I = [a,b]$, we can find the node 
$w_{I}$ as follows: We initially begin with the node $u$ set to the root and an interval $I'$ set to $[1,N]$. In each step, we will maintain the invariant that $w_I$ will be a descendant of $u$, $u = w_{I'}$ and also that $I \subseteq I'$. Indeed, this is true at the beginning.
Suppose at some step, we have reached $u$ with interval $I'$ such that $u = w_{I'}$
and $I \subseteq I'$. If $u$ is a leaf, then $w_I$ must be $u$ and so we are done.
Otherwise let $M(u)$ be $\{x_1,\dots,x_{d^2}\}$ and let $I' = [a',b']$. If $I$ is completely contained
inside any one of $I'_1 = [a'+1,x_1-1], I'_2 = [x_1+1,x_2-1], [x_2+1,x_3-1],\dots, I'_{d^2} = [x_{d^2-1}+1,b'-1]$, then we pick the unique $j$ such that $I$ is completely contained inside $I'_j$. Then we update $u$ to be the $j^{th}$ child of $u$ and update $I'$ to be $I'_j$. (Notice that the invariant is satisfied in this case). Otherwise,
it must be the case that $u = w_I$ and so we just output $u$. In this way, by going down the tree, in $d^2$ steps, we can find the node $w_I$.

Now, let $v_1,v_2,\dots$ be the leaves of $\mathcal{D}(P)$ (the \trust tree, not the complete $d^2$-ary tree) ordered according to the $\preceq$ ordering. Using the above mentioned observations and notations, we will now describe an algorithm which will find the nodes $v_1,v_2,\dots,v_{\theta}$. It does so in $\theta$ many phases, where in the $i^{th}$ phase it will find the node $v_i$.

The $i^{th}$ phase will operate in $d^2$ many subphases. At the start of each subphase
we will maintain a node $z$ of $\mathcal{D}(P)$ and an interval $I = [a,b]$ such that
$z = w_I$. (For the first subphase of the first phase, $z$ is taken to be the root and $I$ is taken to be $[1,N]$). For each subphase, we will always maintain the invariant that the node $v_i$ we are looking for will be a 
descendant of $z$, $S(v_i) \in I$ and $v_{i-1}$ will not be a descendant of $z$. (Notice that this invariant is definitely satisfied at the beginning of the first subphase of the first phase).

In the $j^{th}$ subphase, given the node $z$ (with corresponding inorder numbering $M(z)$) and the interval $I = [a,b]$, we proceed as follows. If $a = b$, then by our invariants, it follows that $v_i = z$ and so we return $z$. Otherwise, we loop over all of the $d^2$ children of $z$ (say $u_1,u_2,\dots,u_{d^2}$ such that $u_1 \preceq u_2 \preceq \dots \preceq u_{d^2}$). 
When looping over $u_k$, we first compute $M(u_k)$ using Observation 1. If $M(u_k) \cap I = \emptyset$, then we move to $u_{k+1}$ in the loop. Further, using Observation 2, we check if there is some node of $\mathcal{D}(P)$ that is present in the subtree rooted at $u_k$. If not, then also we move to $u_{k+1}$ in the loop. 
Else, it immediately follows that $v_i$ must be a descendant of $u_k$. 
Hence, we move to the $(j+1)^{th}$ subphase with the node $u_k$ and compute the new interval $I'$ as follows: Let $x$ be the $(k-1)^{th}$ number appearing in $M(z)$ if $k - 1 \ge 1$, otherwise let $x$ be $a$. Let $y$ be the $(k+1)^{th}$ number appearing in $M(z)$ if $k+1 \le d^2$, otherwise let $y$ be $b$. We then set $I'$ to be $[x,y]$. It can be easily verified that $u_k = w_{I'}$. Furthermore because $v_i$ must be a descendant of $u_k$, it follows that $S(v_i) \in I'$. Hence the invariants for the next subphase are satisfied.

Note that in each subphase, we are going one level down the tree. Hence there can be at most $d^2$ subphases in each phase. Furthermore, due to our invariants, it follows
that when the $i^{th}$ phase ends, we have arrived at the node $v_i$ with an interval $I$ of the form $[a,a]$. We then start the $(i+1)^{th}$ phase by setting our interval to be $[a+1,N]$ and computing the corresponding node $w_{a+1,N}$ using Observation 3. 
It easily follows that our invariant is satisfied at the beginning of the $(i+1)^{th}$ phase. This proves that the algorithm finds the nodes $v_1,v_2,\dots,v_{\theta}$.

Let us analyse the running time of this algorithm. Notice that each subphase of each phase
takes at most $O(d^4)$ time, in order to go over all of the possible children of $z$ and compute each of their inorder numberings. Since there are $d^2$ subphases in total and $\theta$ phases overall, it follows that the total amount of time spent in this act is $O(d^6 \theta)$. Note that if the number of leaves of $\mathcal{D}(P)$ is indeed
at most $\theta$, then our algorithm will stop with this act and output all of them.
Since it uses no randomization, in this case, it will output the exact number of leaves
of $\mathcal{D}(P)$ with zero variance.

\paragraph{Second Act: Number of leaves of $\mathcal{D}(P) > \theta$. }

Suppose we executed the first act of our algorithm to completion. It follows that there
are at least $\theta$ many leaves of $\mathcal{D}(P)$. At this point, we initiate the second act of our algorithm. Let $M = ((d^2)^{d^2+1})-1$ be the total number of nodes
of $\mathcal{H}(P)$. We take $z$ independent random walks and count the number of times we hit a maximal execution graph (say $y$). Then we output $(M/z) \cdot y$. 

Let $X$ be the random variable corresponding to the value that we output.
It is easy to see that the expected value of $X$ is the precise number
of leaves of $\mathcal{D}(P)$. Let us now count the variance of this approach, noting that
we are guaranteed that the number of leaves of $\mathcal{D}(P)$ is at least $\theta$
and therefore the probability of hitting a maximal execution graph along a random walk
is at least $p = \theta/M$. 

Let $Y$ be the random variable which counts the number of successful samples from $z$ independent random walks. Since this is essentially a binomial distribution it follows that
the variance of $Y$ is $zp(1-p)$. Hence, the variance of $X$ is the variance of
$\frac{M}{z} \cdot Y$ which is $\frac{M^2}{z^2} \cdot z p (1-p) = \frac{M^2 p (1-p)}{z}$. 
Let $\sigma$ be the square root of the variance of $X$. 

Now, suppose we are given numbers $r, \rho$ and we have to choose a $z$ so that the random variable $X$ is at most a factor of $r$ away from the total number of leaves of $\mathcal{D}(P)$ with probability at least $1-\rho$, i.e., we want to choose a $z$ so that $\mathbb{P}(X \ge r \mu )$ is at most $\rho$. Taking $z = \frac{1}{\rho} \cdot \frac{1}{(r-1)^2} \cdot \left(\frac{M}{\theta} - 1 \right)$, by Chebyshev's inequality we get that $$\mathbb{P}(X \ge \frac{1}{\sqrt{\rho}} \sigma + \mu ) \le \rho$$ 
Unraveling the expression $\frac{1}{\sqrt{\rho}} \sigma + \mu$ with the chosen value of $z$,
we get that it is equal to $r \mu$. Hence, the probability that $X \ge r \mu$ is at most $\rho$, as desired.

Note that the running time of the second act of the algorithm primarily depends only on $z$. Further note that $z = \alpha M/\theta$ where $\alpha$ is a constant depending only on $r, \rho$.
It follows that if we set $\theta = M^{1/2}$, then the second act of the algorithm runs in time $O(M^{1/2})$.

\paragraph{Wrapping up: } Overall, we see that if we set $\theta = M^{1/2}$, then
for any numbers $r, \rho$, our algorithm runs in time $O(d^6 M^{1/2})$ and returns
a value that is at most a factor of $r$ away from the leaves of $\mathcal{D}(P)$ with probability at least $1 - \rho$. Hence, at the cost of a sub-exponential running time,
we obtain a $(r,\rho)$-approximate counter.



\end{document}